\providecommand{\algorithmname}{Algorithm}
\theoremstyle{plain}
\theoremstyle{definition}
\theoremstyle{plain}
\theoremstyle{definition}
\theoremstyle{plain}
\newtheorem{assumption}{Assumption}
\newtheorem{theo}{Theorem}
\newtheorem{proposition}{Proposition}
\newtheorem{lemma}{Lemma}
\begin{document}
\captionsetup[figure]{font={small}, name={Fig.}, labelsep=period}

\title{Route-and-Aggregate Decentralized Federated Learning Under Communication Errors
}
\author{Weicai~Li, \textit{Graduate Student Member, IEEE},~Tiejun~Lv, \textit{Senior Member, IEEE},~Wei~Ni, \textit{Fellow, IEEE},\\  
 Jingbo~Zhao,
 Ekram~Hossain, \textit{Fellow, IEEE}, and H. Vincent Poor, \textit{Life Fellow, IEEE}

\thanks{W. Li, T. Lv and J. Zhao are with the School of Information and Communication Engineering, Beijing University of Posts and Telecommunications (BUPT), Beijing 100876, China. W. Li was also with the School of Electrical and Data Engineering, University of Technology Sydney, NSW, Australia; (e-mail: \{liweicai, lvtiejun, zhjb\}@bupt.edu.cn). 

W.~Ni is with the School of Electrical and Data Engineering, University of Technology Sydney, NSW 2007, Australia, and the School of Computing Science and Engineering, University of New South Wales, NSW 2052, Australia. (e-mail: wei.ni@ieee.org). 

E. Hossain is with the Department of Electrical and Computer Engineering, University of Manitoba, Canada (e-mail: ekram.hossain@umanitoba.ca). 

H. V. Poor is with the Department of Electrical and Computer Engineering, Princeton University, Princeton, NJ 08544, USA (e-mail: poor@princeton.edu).}	
	}

	\maketitle
\begin{abstract}
Decentralized federated learning (D-FL) allows clients to aggregate learning models locally, offering flexibility and scalability. 
Existing D-FL methods use gossip protocols, which are inefficient when not all nodes in the network are D-FL clients. 
This paper puts forth a new D-FL strategy, termed Route-and-Aggregate (R\&A) D-FL, where participating clients exchange models with their peers through established routes (as opposed to flooding) and adaptively normalize their aggregation coefficients to compensate for communication errors. 
The impact of routing and imperfect links on the convergence of R\&A D-FL is analyzed, revealing that convergence is minimized when routes with the minimum end-to-end packet error rates are employed to deliver models. 
Our analysis is experimentally validated through three image classification tasks and two next-word prediction tasks, utilizing widely recognized datasets and models.
R\&A D-FL outperforms the flooding-based D-FL method in terms of training accuracy by 35\% in our tested 10-client network, and shows strong synergy between D-FL and networking. 
In another test with 10 D-FL clients, the training accuracy of R\&A D-FL with communication errors approaches that of the ideal C-FL without communication errors, 
as the number of routing nodes (i.e., nodes that do not participate in the training of D-FL) rises to~28.

\end{abstract}
	
\begin{keywords}
Decentralized federated learning, peer-to-peer network, imperfect channel, routing, convergence.
\end{keywords}

	\section{Introduction}
Being a new paradigm of federated learning (FL), decentralized FL (D-FL), allowing participants to train a shared model by exchanging updates directly, without relying on a central server, has been 
increasingly studied for scenarios where central aggregators are absent~\cite{1638541,10038786,10323597} or single-point failures need to be avoided~\cite{9687521}. 
D-FL has significant potential for industrial applications, building on the widespread acceptance and proven utility of centralized FL (C-FL). D-FL overcomes the primary limitation of C-FL, i.e., the reliance on a central node, while preserving its key advantages, such as parallel training, local data and model training, and enhanced data privacy. By eliminating the need for central coordination, D-FL offers greater scalability, fault tolerance, and robustness, when applied to the industrial applications where C-FL has been considered, e.g., healthcare for privacy-preserving diagnostics~\cite{9394758}, finance for fraud detection~\cite{WEST201647}, manufacturing for predictive maintenance~\cite{s23177331,SAEZDECAMARA2023103299,ZONTA2020106889}, and smart cities for optimizing energy and traffic systems~\cite{6348251}.

The network topology and link quality (e.g., communication errors) can have a significant impact on the performance of any D-FL system. 
For practical reasons, not all nodes have to participate in D-FL in a network. It is essential to limit participation to a specific subset of computationally capable and engaged clients, rather than involving all nodes in the network. 
To mitigate security concerns~\cite{hui2022horizontal}, typically, a client delivers its model to peers (i.e., other participating clients) via a secure communication channel, such as Transport Layer Security (TLS)\cite{hui2022horizontal}, rather than broadcasting it indiscriminately~\cite{9716792}. Therefore, implementing an effective routing strategy to distribute local models among clients, while ensuring the consistency requirements of D-FL adapting to network topology and link quality, is critical but has not been studied in the existing literature.

Existing D-FL methods have primarily employed gossip-based protocols under the assumption of static network topologies with all nodes participating in D-FL in a network. 
{\color{black}
The primary focus of the existing studies has been on a gossip protocol~\cite{9716792,9563232}, referred to as ``Aggregate-as-You-Go (AaYG)'' in this paper. 
The existing studies allow clients and their one-hop~\cite{9563232,8950073} neighbors to exchange their models and aggregate the models locally by designing consensus protocols. Less connected clients would incur inconsistent local models (substantially differing from the rest of the clients).}
While gossip-based protocols, such as epidemic (flooding-like) gossip, enable decentralized operation, they suffer from slow convergence and limited ability to provide targeted delivery to intended recipients. 
Increasing the number of model exchanges per aggregation can help reduce the inconsistency among locally aggregated models, but lead to increased communication overhead, especially in large-scale or dynamic networks. 
While some studies, e.g.,~\cite{PINYOANUNTAPONG2022109396,maejima2023tramfl}, have incorporated routing into FL to enhance geographic coverage and reduce model exchange delays, they have not been designed to address the model inconsistency of D-FL resulting from communication errors.

Existing D-FL methods have been typically designed under an implicit assumption of ideal, error-free communication channels, e.g.,~\cite{li2023dfedadmm,9563232,8950073,9562522,10196380}. When applying these methods in practical settings, they would require repeated retransmissions of erroneous (or parts of) models until the models are correctly received, thus causing inefficient use of limited communication resources. 
Some studies have considered the impact of imperfect channels on C-FL (e.g.,~\cite{9726793,9435350,8851249}),  
but the analysis of C-FL cannot be readily applied to D-FL due to their distinctive network protocols and aggregation methods. While a few studies, i.e., \cite{9772390,10032555}, have considered imperfect channels for D-FL, they did not quantify their impact on convergence.
Nonetheless, communication errors can accumulate and increasingly compromise model convergence. 

This paper proposes a new and practical D-FL strategy, namely, Route-and-Aggregate (R\&A) D-FL, where clients exchange their local models with their peers through established routes (as opposed to simple flooding, i.e., AaYG, as generally assumed in the literature).
We also propose to adaptively normalize the aggregation coefficients when a client aggregates its received models, accounting for incorrectly received and subsequently removed segments of the models resulting from imperfect communication links. 
We further analyze the impact of the routing strategy and imperfect communication links on the convergence of R\&A D-FL. The convergence upper bound (or optimality gap) is derived and sheds insights for optimizing the routes for R\&A D-FL with communication errors. 

The contributions of the paper are summarized as~follows.
\begin{itemize}
     \item 
     We design a novel D-FL method, namely, R\&A D-FL, which allows clients to exchange models in a peer-to-peer manner and adaptively normalize the aggregation coefficients when aggregating the models, hence accounting for incorrectly received and subsequently removed (parts of) models resulting from communication errors.
   
   \item We analyze the impact of network topology and communication errors on the convergence of R\&A D-FL. Given the convergence upper bound, we establish the objective for the optimal routing strategy to enhance convergence under practical settings.
    
    \item We reveal that the routing objective can be interpreted as the sum of specified metrics of individual hops along a route, consistent with conventional network routing. Consequently, the optimal routing strategy can be readily derived for R\&A D-FL with communication errors using celebrated network routing algorithms.

\end{itemize}
Extensive experiments validate our convergence analysis and accordingly identify the optimal routing strategy. Three image classification tasks and two next-word prediction tasks are conducted using widely adopted datasets and models.
 In our tested 10-client network, R\&A D-FL outperforms the existing AaYG D-FL~\cite{9563232,8950073} by  35\% in training accuracy. 
In another test with 10 D-FL clients, the training accuracy of R\&A D-FL with transmission errors increasingly approaches that of the ideal C-FL without transmission errors, as the number of routing nodes increases to~28.

The remainder of this paper is organized as follows. The related works are discussed in Section \ref{section: related work}, followed by the system model in Section \ref{section:system}. In Section \ref{section:convergence analysis}, we analyze our convergence upper bound for R\&A D-FL under imperfect communication links and its relationship with the end-to-end (E2E) packet error rate (PER) for the chosen routing paths. The experimental results are provided in Section \ref{section:results}, followed by the conclusions in Section~\ref{section:con}. 
The source code is available at \url{https://github.com/jasminebear2024/RouteAndAggregate}.

\textit{Notation:}  $(\cdot)^\dagger$ stands for transpose; $\Vert\cdot\Vert$ denotes the 2-norm;~$\lceil\cdot\rceil$ denotes the ceiling operator; $|\cdot|$ stands for cardinality; and $\circ$ takes the element-wise product of two vectors or matrices. $\mathbb{E}(\cdot)$ takes expectation over communication errors.
The notation used is listed in Table~\ref{notations}.
\begin{table}[t]
   \small
		\centering
		\caption{Notation and definitions}\label{notations}
		\begin{tabular}{ m{1.1cm}|p{6.8cm}  }
			\hline
			\textbf{Notation}& \textbf{Definition} \\
			\hline
                $m,n$&Index of clients\\
               $\mathcal{V},N$&Set and number of clients, respectively\\
			$i$   & Index of local training epochs\\
                $I$&Maximum number of local training epochs\\
                $t$  &Index of D-FL training rounds\\
                $p_n$&Ideal aggregation coefficient of client $n$\\ $l$&Index of the model parameter segment\\
                ${p}_{m,n,l}$&Aggregation coefficient of the $l$-th segment from client $m$ to client $n$ in R\&A D-FL\\
			$F_n$& Local loss function of client $n$\\
                $F$&Global loss function\\
			$F_n^*$ & Minimum of the local loss function of client $n$\\
                $F^*$&Minimum of the global loss function\\   
                $M$&Dimension of model parameters\\
                $\boldsymbol{\omega}^{t}_{n,i}$ &  Updated local model of client $n$ at the $i$-th epoch\\
                $\boldsymbol{\bar{\omega}}_i^{t}$&Virtually aggregated global model at the $i$-th epoch\\
                 ${\mathbf{W}}_l^{t}$&Set of local training outputs of the $l$-th segment\\
			$h^{t}_{m,n}$&Channel gain from client $m$ to $n$\\
               $\varepsilon_{m,n}^{t}$ & One-hop bit success rate from clients~$m$ to $n$\\
             $\epsilon^t_{m,n}$& One-hop packet success rate from clients~$m$ to~$n$\\
               $\varrho^t_{m,n}$&E2E packet success rate from clients~$m$ to $n$\\
                 ${e}_{m,n,l}^{t}$&Success indicator of the $l$-th model segment from client $m$ to $n$\\
                 $\mathbf{w}_{n}^{t}$&Locally aggregated model of client $n$\\
                $\mathbb{N}_{+}$, $\mathbb{R}_{+}$& Sets of positive integers and real values\\
                \hline
		\end{tabular}
	\end{table}
 
\section{Related Works}\label{section: related work}

Some studies have focused on aggregation schemes for D-FL, typically under the assumption of ideal communication links. 
For example, a fast-linear iteration approach for decentralized averaging under a gossip-based protocol is the most widely used in D-FL to help the locally aggregated parameters approach those of the global model of C-FL~\cite{XIAO200465}.

Some studies, e.g.,~\cite{PINYOANUNTAPONG2022109396,maejima2023tramfl}, have integrated routing into FL to expand geographic coverage and save model exchange delay. 
EdgeML~\cite{PINYOANUNTAPONG2022109396} designed a routing-based C-FL scheme over wireless networks and exploited multi-agent reinforcement learning to minimize network latency to accelerate the convergence of the C-FL. 
In~\cite{maejima2023tramfl}, a routing-based D-FL framework without model transmission and aggregation was proposed called Tram-FL was proposed. Different from the widely considered flooding-type D-FL algorithms, where all clients train local models in parallel, Tram-FL only allows one client to train the global model in each training round. 
When the data is non-i.i.d., the global model of Tram-FL can undergo considerable training performance variation.

Several studies, e.g.,~\cite{9726793,9435350}, have investigated the adverse effects of unreliable communications on C-FL.
In \cite{9726793}, the aggregation coefficients were designed for C-FL to reduce the model bias resulting from transmission errors. 
In~\cite{9435350}, it was shown that the user transmission success probability, affected by unreliable and resource-constrained wireless channels, critically affects the convergence of C-FL. An optimal resource block allocation strategy was proposed to improve transmission success probability and accelerate convergence.

Other studies have designed communication strategies for D-FL, although the impact of communication errors on the convergence of D-FL is typically not explored. The authors of \cite{9716792} considered D-FL over unreliable communications, where the model parameters received with errors were replaced with the receiver's local models.
In~\cite{9772390}, the transmission between two clients was considered unreliable when their communication rate was lower than a certain threshold. 
However, their convergence bound of D-FL was analyzed under the assumption that the transmission errors were negligible. 
In~\cite{10032555}, a user datagram protocol with erasure coding and retransmission was employed to recover content lost during transmissions.
The authors of~\cite{9660377} designed a model compression method to lower the requirements for bandwidth and storage resources of edge clients, and hence improve the system communication efficiency. However, the model compression method can distort the local models, compromising the training accuracy.

In a different yet relevant context, existing studies~\cite{9507294,9292450,10073536,9954055,9945997} have investigated how security threats can make model aggregation unreliable in D-FL systems, but none has analyzed the impacts of unreliable communications. 
In \cite{9507294}, a Byzantine fault-tolerance scheme, combining the HydRand protocol and a publicly verifiable secret-sharing scheme, was developed to protect the confidentiality and robustness of D-FL. 
In~\cite{9292450}, a fault-tolerant blockchain-based D-FL approach was proposed to defend against poisoning attacks. In \cite{10073536}, differential privacy~(DP) noise was added to local model parameters to preserve user privacy. The optimal noise amplitude was designed to balance privacy and utility. FedDual~\cite{9954055} preserves privacy by adding DP noise locally and aggregating asynchronously via a gossip protocol. Noise-cutting was adopted to alleviate the impact of the DP noise on the global model. However, none of these works has addressed unreliable communications.

\section{System Model and Assumptions}\label{section:system}
We consider a system model for the R\&A D-FL method with
 $N$ clients and no central point. Let $\mathcal{V}=\{1,\cdots, N\}$ collect all edge clients. A random topology generated as an undirected graph without self-loops is depicted in Fig. \ref{fig:routing}.  After each participating client completes its local training in a round, its learned local model is delivered to all other clients along the routes specified. 
  Fig. \ref{fig:routing} illustrates the routes along which the local models are delivered to client~1. A local aggregation is conducted at each client in a decentralized manner.
Client $n\in \mathcal{V}$ owns a dataset ${{\cal D}}_n$ with $D_n=|{{\cal D}}_n|$ data points.
 The impact of practical erroneous (wireless) channels on the routing and local model aggregation is captured.

\begin{figure}[t]
            \centering{}
            \includegraphics[scale=0.45]{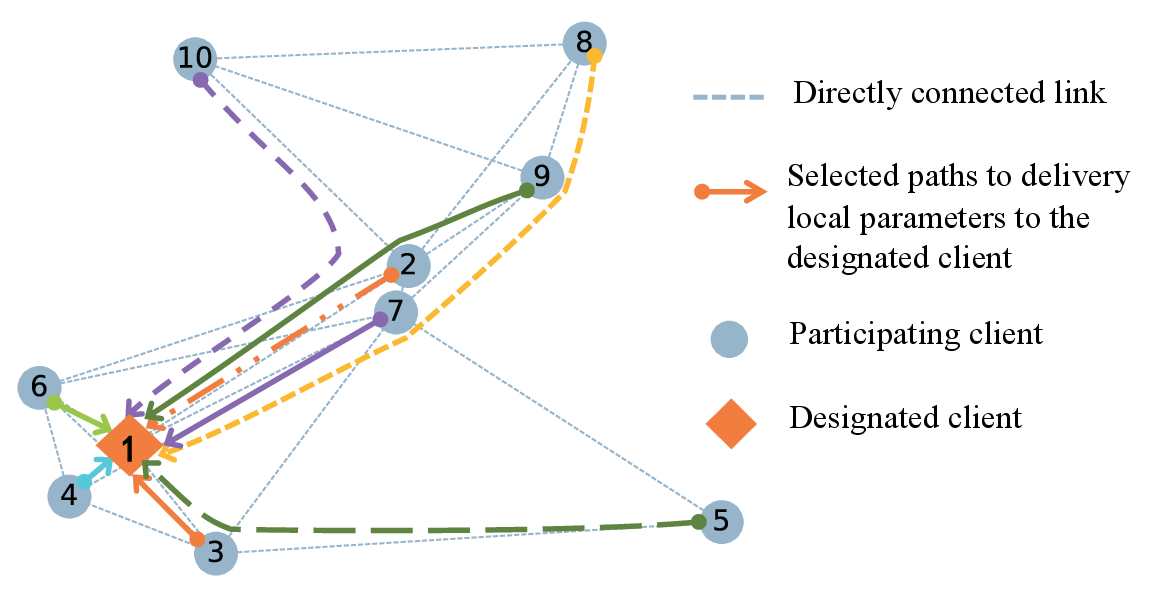}
            \caption{An illustration of the selected routes for local model delivery from participating clients to the designated client in the $t$-th round.}
            \label{fig:routing}
        \end{figure}

\subsection{Channel Model}  
Given the locations of all clients, 
the channel coefficient between every pair of directly connected clients is given, as shown in Fig. \ref{fig:routing}. Suppose the network topology remains stable within a training round. This assumption is reasonable since a typical training round of D-FL lasts tens of seconds~\cite{li2023multi}, during which the network topology is barely changed. The adaptive modulation and coding scheme running on individual links and the adaptive gain controller at the receivers of the links can stabilize the quality of the links, e.g., bit error rate (BER) and PER, during a training round.

Let $h^{t}_{m,n}\in \mathbb{C}$ denote the channel coefficient from client $m$ to client $n$ ($m \neq n$) in the $t$-th training round of D-FL. $h^{t}_{m,n}=h^{t}_{n,m}$, and $h^{t}_{m,n}=0$ if the two clients are beyond each other's transmission coverage. Suppose orthogonal communication channels (e.g., slots) are assigned to the communication links, e.g., by using graph coloring, as will be discussed in Section~\ref{section_allocate_tdma}. Transmission collisions and interference can be prevented among the links.
The signal-to-noise ratio (SNR) is $\gamma^{t}_{m,n}=\frac{(h^{t}_{m,n})^2P}{N_{0}B}$, where $P$ is the transmit power of a client, $N_0$ is the noise power spectral density (PSD), and $B$ is the system bandwidth. 
 With the bit success rate $\varepsilon_{m,n}^{t}$ from client $m$ to $n$, the bit error rate (BER) is $(1-\varepsilon_{m,n}^{t})$.\footnote{
Taking BPSK and QPSK, for example, $1-\varepsilon_{m,n}^{t}=Q(\sqrt{2\gamma^{t}_{m,n}})$~\cite{goldsmith_2005},
where $Q(x)=\frac{1}{\sqrt{2\pi}}\int_{x}^{\infty}e^{-\frac{x^{2}}{2}}dx$ is the error function.
Note that our analysis of the convergence of R\&A D-FL is not limited to a specific channel model.
}

\subsection{D-FL Model}
The local loss function of client $n$ is given by 
 \begin{align}
     F_n(\mathbf{w})=\frac{1}{D_n}{\sum}_{d\in{\cal D}_{n}} {f}(d,\mathbf{w}),
 \end{align}
where $\mathbf{w}\in \mathbb{R}^{M\times 1}$ stands for an $M$-dimensional model parameter, $d \in {{\cal D}}_n$ is a data sample, and ${f}(d,\mathbf{w})$ is the loss function of a data sample $d$ and the model parameter $\mathbf{w}$. 

The objective of D-FL is to minimize the global loss function $\underset{\mathbf{w}\in \mathbb{R}^{M\times 1}}{\min} F(\mathbf{w})$, and find the optimal model parameter $\mathbf{w}^*=\underset {\mathbf{w}\in \mathbb{R}^{M\times 1}}{ \arg\, \min}\,F(\mathbf{w})$, where the global loss function $F(\mathbf{w})$ is the weighted loss of all clients, as given by 
 \begin{align}
      F(\mathbf{w})={\sum}_{n\in{\cal V}} p_nF_n(\mathbf{w}),\label{FW}
 \end{align} 
where $p_n=\frac{D_n}{{\sum}_{n \in \mathcal{V}} D_n}$ is the ideal aggregation coefficient of client~$n$.
Synchronous D-FL is considered. In other words, the training process is divided into rounds synchronized among the clients.
In every training round, each client conducts local training, model delivery, and model aggregation. 

\subsubsection{Local Model Training}
In the $t$-th training round ($t = 1,2,\cdots$), each client $n$ trains $I$ epochs locally based on its local dataset ${\cal D}_n$, starting with its locally aggregated model of the $(t-1)$-th training round. 
The local model, denoted by $\boldsymbol{\omega}^{t}_{n,i}$, is updated by 
\begin{align}\label{epoch_imperfect}
\boldsymbol{\omega}^{t}_{n,i}=\boldsymbol{\omega}^{t}_{n,i-1}-\eta\nabla F_{n}(\boldsymbol{\omega}^{t}_{n,i-1}),\forall i=1,\cdots,I,  
\end{align}
where $\eta$ is the learning rate, $\nabla F_n({\mathbf{\cdot}})$ denotes the (full-batch) gradient of $F_n({\mathbf{\cdot}})$, $\boldsymbol{\omega}^{t}_{n,0}$ is the output of the local aggregation in the $(t-1)$-th training round, and $\boldsymbol{\omega}^{t}_{n,I}$ is the output of client $n$'s local model training in the $t$-th round. 
Although full-batch gradient is considered in this paper, the conclusions drawn can be established for stochastic gradient descent. This is due to the fact that the only difference between full-batch gradient and stochastic gradient descent is the local gradient divergence between the full gradient and the stochastic gradient, as discussed in the proof of \textbf{Lemma~\ref{theo1}} in Appendix~\ref{appendix:theorem proof}.

\subsubsection{Multi-hop Model Delivery}\label{subsection:local aggregation}

Assume that the model parameters of client $n$ are encoded into a bit stream using, e.g., ``float32'' and then segmented into $\lceil \frac{M}{K} \rceil$ packets; i.e., $\boldsymbol{\omega}_{n,I}^{t}=\left[(\boldsymbol{\omega}_{n,I}^{t}(1))^{\dagger},\cdots,(\boldsymbol{\omega}_{n,I}^{t}(\lceil \frac{M}{K} \rceil))^{\dagger}\right]^{\dagger}$, with $K$ elements per packet/segment. $\boldsymbol{\omega}_{n,I}^{t}(l)\in \mathbb{R}^{K\times 1}$. Let $\mathbf{W}_l^{t}\in \mathbb{R}^{N\times K}$ collect the $l$-th segments of the training outputs of all $N$ clients in the $t$-th round, i.e.,
  \begin{align}
\mathbf{W}_l^{t}=\left[\boldsymbol{\omega}^{t}_{1,I}(l),\cdots,\boldsymbol{\omega}^{t}_{N,I}(l)\right]^{\dagger}.\label{aver_model_parameter}%
\end{align}%
Let $\epsilon^t_{m,n}$ be the packet success rate between clients $n$ and $m$, if connected directly in the $t$-th round. The PER is $1-\epsilon^t_{m,n}=1-(\varepsilon_{m,n}^{t})^{32 K}$, with ``$32$'' resulting from ``float32''.
Then, the E2E-PER between any two clients $m'$ and $n'$ ($m'\neq n'$) is 
\begin{align}
    1-\varrho^t_{m',n'}=1-\underset{(m,n)\in S^t_{m',n'}}{\prod}\epsilon_{m,n}^{t},
\end{align}
where $S^t_{m',n'}$ collects the edges along a selected multi-hop path from client $m'$ to client $n'$ in the $t$-th training round.

\subsubsection{Local Model Aggregation}\label{subsubsection:localmodelaggregation}
        
Each client collects its peers' local parameters through the above-mentioned multi-hop transmissions and aggregates the model parameters locally. Assume that the client aggregates the correctly received packets. The aggregation coefficients are adaptively updated for each segment of the model such that the sum of the aggregation coefficients of correctly received packets is normalized for the segment. 
Specifically, the locally aggregated model of client $n$ in the $t$-th training round is $\mathbf{w}_{n}^{t}=\big[(\mathbf{w}_{n}^{t}(1))^{\dagger},\cdots,(\mathbf{w}_{n}^{t}(\lceil \frac{M}{K} \rceil))^{\dagger}\big]^{\dagger}$, where
 \begin{align}
   \mathbf{w}_{n}^{t}(l)=\underset{ m \in \mathcal{V}}{\sum}\frac{p_{m}{e}_{m,n,l}^{t}}{\underset{ m' \in \mathcal{V}}{\sum}p_{m'}{e}_{m',n,l}^{t}}\boldsymbol{\omega}_{m,I}^{t}(l),
   \label{locally_aggregate_model}
 \end{align}
where $p_{m,n,l}^t=\frac{p_{m}{e}_{m,n,l}^{t}}{\underset{ m' \in \mathcal{V}}{\sum}p_{m'}{e}_{m',n,l}^{t}}$ is the aggregation coefficient for the $l$-th segment of client $m$'s local model, if correctly received, at client $n$. Clearly, $\underset{ m \in \mathcal{V}}{\sum}
\frac{p_{m}{e}_{m,n,l}^{t}}{\underset{ m' \in \mathcal{V}}{\sum}p_{m'}{e}_{m',n,l}^{t}}=1$. ${e}^{t}_{m,n,l} $ is the success indicator of the $l$-th segment of the model from client $m$ to client $n$ in the $t$-th round, as given by
\begin{align}\label{error1}
   {e}^{t}_{m,n,l} = \!\!
    \begin{cases}
       \! 1,\!&\!\!\text{if the $l$-th segment is error-free;}\\
       \!  0,\!&\!\!\text{otherwise.}
    \end{cases}
\end{align}
Hence, $\Pr(e_{m,n,l}^{t}=1)=\varrho_{m,n}^{t}$.
$\mathbf{w}_{n}^{t}$ is used to initialize the $(t+1)$-th round of local training at client $n$, i.e., $\boldsymbol{\omega}_{n,0}^{t+1}=\mathbf{w}_{n}^{t}$.

\begin{assumption}\label{assumption}
Some widely used assumptions~\cite{pmlr-v130-ruan21a,boyd2004convex} for analyzing convergence bounds for FL are considered:
	\begin{enumerate}
	    
		\item There exists a constant $L>0$ that the local FL objective $F_n(\mathbf{w}),\forall n$ is $L$-smooth, i.e., 
		$\left\Vert \nabla F_{n}\left(\mathbf{w}_{1}\right)\!-\!\nabla F_{n}\left(\mathbf{w}_{2}\right)\right\Vert\! \leq\! L\left\Vert \mathbf{w}_{1}\!-\!\mathbf{w}_{2}\right\Vert ,\forall \mathbf{w}_{1},\mathbf{w}_{2}\in \mathbb{R}^{M\times 1}$. The global FL objective $F(\mathbf{w})$ is also $L$-smooth.
		\item There exists a constant $\mu>0$ that the local FL objective $F_n(\mathbf{w}),\forall n$ is $\mu$-strongly convex, i.e., $\left\Vert \nabla F_{n}(\mathbf{w}_{1} \!)\!-\!\nabla F_{n}(\mathbf{w}_{2})\right\Vert\!  \!\geq\! \! \mu\left\Vert \mathbf{w}_{1}\!-\!\mathbf{w}_{2}\right\Vert ,\forall \mathbf{w}_{1},\!\mathbf{w}_{2}\!\in \!\mathbb{R}^{M \!\times \! 1}$. The global objective $F(\mathbf{w})$ is $\mu$-strongly~convex.
  \item The learning rate is $0<\eta<\frac{1}{2L}$.
  \item There is an upper bound $\sigma_n^2$ on the gradient divergence between client $n$'s local loss function and the global loss function, i.e., $\Vert\nabla F_{n}(\boldsymbol{\omega})-\nabla F(\boldsymbol{\omega})\Vert^{2}\leq\sigma_{n}^{2},\forall\boldsymbol{\omega},n$. The global divergence is defined as $\bar{\sigma}^2={\sum}_{n\in{\cal V}} p_n \sigma_{n}^{2}$~\cite{10073536}.
	\end{enumerate}
\end{assumption}	

  \section{Convergence Analysis of R\&A D-FL Under Imperfect Links}\label{section:convergence analysis}
In this section, we analyze the convergence of R\&A D-FL under imperfect communication links. 
With reference to C-FL, ideally, we wish the locally aggregated model of the R\&A D-FL, or D-FL in general, in the $t$-th training round to be consistent across all clients:
\begin{align}\label{accurate_model}
\boldsymbol{\bar{\omega}}_I^{t}={\sum}_{ n \in \mathcal{V}}p_{n}\boldsymbol{\omega}_{n,I}^{t}.
\end{align}
This may not be possible due to the impact of the distributed and local model aggregation and the imperfect communication channels between the clients\footnote{
    Note that $\boldsymbol{\bar{\omega}}^{t}$ is different from the global model of C-FL because each round of local training starts with the locally aggregated models of individual clients in D-FL, i.e., $\mathbf{w}_{n}^{t-1}$, as opposed to the centrally aggregated model of the previous training round, i.e., $\boldsymbol{\bar{\omega}}^{t-1}$, in C-FL.}.

A one-round bound of D-FL under the imperfect channel conditions can be established as follows.

\begin{lemma}\label{theo1}
		Under \textbf{Assumption \ref{assumption}} with $L$ and $\mu$ defined therein, the expectation of the distance between 
 the global model of D-FL in the $t$-th training round, i.e., $ \boldsymbol{\bar{\omega}}_I^{t}$,
	and the global optimum of D-FL, i.e., $\mathbf{w}^*$, is bounded as
\begin{align}
  \notag&\!\mathbb{E}\big\{\left\Vert  \boldsymbol{\bar{\omega}}_I^{t}\!\!-\!\!\mathbf{w}^{*}\right\Vert ^{2}\big\}\!\!\leq \! \zeta_1\!\left\Vert  \boldsymbol{\bar{\omega}}_I^{t\!-\!1}\!-\!\mathbf{w}^{*}\right\Vert ^{2}\!\!+\!\zeta_2\bar{\sigma}^2\\\notag
  &+\!\!\zeta_3 \! \Big[\mathbb{E}\big\{\!\big\Vert\!\!\sum_{n\in\mathcal{V}}\!p_{n}\!(\mathbf{w}_{n}^{t\!-\!1}\!\!-\!\!\boldsymbol{\bar{\omega}}_{I}^{t\!-\!1})\!\big\Vert^{2}\!\big\}\!\!+\!\!\eta L\!\!\sum_{n\in\mathcal{V}}\!p_{n}\!\mathbb{E}\!\big\{\!\!\left\Vert \mathbf{w}_{n}^{t\!-\!1}\!\!\!-\!\!\boldsymbol{\bar{\omega}}_{I}^{t\!-\!1}\!\right\Vert^{2}\!\!\big\}\!\Big]\\
		&+\!\!\zeta_4\!\Big[\!\!\sum_{ n\in\mathcal{V}}\!p_{n}\mathbb{E}\big\{\!\Vert\mathbf{w}_{n}^{t\!-\!1}\!\!-\!\!\boldsymbol{\bar{\omega}}_I^{t\!-\!1}\Vert^{2}\big\}\!\!-\!\!\mathbb{E}\!\big\{\!\big\Vert\!\!\sum_{ n\in\mathcal{V}}\!p_{n}\!\mathbf{w}_{n}^{t\!-\!1}\!\!-\!\!\boldsymbol{\bar{\omega}}_I^{t\!-\!1}\big\Vert^{2}\!\big\}\!\Big],\!\label{theorem_expectation}
		\end{align}
where the coefficients on the right-hand side (RHS) of \eqref{theorem_expectation} are 
\begin{align}
   \notag& \zeta_1\!=\!\big(\!1\!\!-\!\!\frac{3\mu\eta}{2}\!+2L\mu\eta^{2}\big)^{I-1}(1+\tau_{\varrho})\!\left(\!1\!-\!2\mu\eta+\eta^{2}L^{2}\right);\\\notag&
   \zeta_2=\!\frac{2(1+\eta)(\!2\eta^{2}L^{2}\!+\!(L\!+\!u)\eta)\left((\!1\!+\!\eta)\!(\!1\!+\!4L^{2}\eta)\right)^{2}}{\!\!1\!+\!4L^{2}+4L^{2}\eta}\\&\notag\quad\times\bigg[\frac{\left((\!1\!+\!\eta)\!(\!1\!+\!4L^{2}\eta)\right)^{I-1}-\left(\!1\!\!-\!\!\frac{3\mu\eta}{2}\!+2L\mu\eta^{2}\right)^{I-1}}{(\!1\!+\!\eta)\!(\!1\!+\!4L^{2}\eta)-(1\!\!-\!\!\frac{3\mu\eta}{2}\!+2L\mu\eta^{2})}\\&\notag\quad\quad\quad\quad-\frac{\left((\!1\!+\!\eta)\!(\!1\!+\!4L^{2}\eta)\right)^{I-1}-1}{(\!1\!+\!\eta)\!(\!1\!+\!4L^{2}\eta)-1}\bigg];\\
   \notag& \zeta_3=\big(\!1\!\!-\!\!\frac{3\mu\eta}{2}\!+2L\mu\eta^{2}\big)^{I-1}\!\!(1\!\!+\!\!\frac{1}{\tau_{\varrho}}\!)\!(1\!\!+\!\!\eta L);\\
   \notag&\zeta_4=\!(\!2\eta^{2}L^{2}\!+\!(L\!+\!u)\eta)\left((\!1\!+\!\eta)\!(\!1\!+\!4L^{2}\eta)\right)^{2}\\\notag&\quad\quad\quad\times\frac{\left((\!1\!+\!\eta)\!(\!1\!+\!4L^{2}\eta)\right)^{I-1}-\left(\!1\!\!-\!\!\frac{3\mu\eta}{2}\!+2L\mu\eta^{2}\right)^{I-1}}{(\!1\!+\!\eta)\!(\!1\!+\!4L^{2}\eta)-(1\!\!-\!\!\frac{3\mu\eta}{2}\!+2L\mu\eta^{2})};
\end{align}
 and $\tau_{\rho}$ indicates the noise level of the clients.%
\end{lemma}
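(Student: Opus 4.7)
The plan is to follow the standard template of FL convergence proofs while carefully tracking the drift created by the lossy local aggregation \eqref{locally_aggregate_model}. Define the virtual averaged iterate $\boldsymbol{\bar{\omega}}_i^t := \sum_{n\in\mathcal{V}} p_n \boldsymbol{\omega}_{n,i}^t$ for each in-round epoch $i=0,\dots,I$. Averaging \eqref{epoch_imperfect} with weights $\{p_n\}$ yields the pseudo-step $\boldsymbol{\bar{\omega}}_i^t = \boldsymbol{\bar{\omega}}_{i-1}^t - \eta \sum_{n} p_n \nabla F_n(\boldsymbol{\omega}_{n,i-1}^t)$, with initial condition $\boldsymbol{\bar{\omega}}_0^t = \sum_{n} p_n \mathbf{w}_n^{t-1}$. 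The target is to bound $\|\boldsymbol{\bar{\omega}}_I^t - \mathbf{w}^*\|^2$ in terms of $\|\boldsymbol{\bar{\omega}}_I^{t-1} - \mathbf{w}^*\|^2$ together with the round-boundary drifts $\|\mathbf{w}_n^{t-1} - \boldsymbol{\bar{\omega}}_I^{t-1}\|^2$, which are the only quantities that carry the randomness of the channel indicators $\{e_{m,n,l}^{t-1}\}$ through $\mathbf{w}_n^{t-1}$ in \eqref{locally_aggregate_model}.

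First I would derive two coupled single-epoch recursions. Adding and subtracting $\eta \nabla F(\boldsymbol{\bar{\omega}}_{i-1}^t)$ inside $\|\boldsymbol{\bar{\omega}}_i^t - \mathbf{w}^*\|^2$ and using $\mu$-strong convexity of $F$ jointly with $L$-smoothness of each $F_n$ (Assumption~\ref{assumption}, items 1--2) via the co-coercivity inequality, yields a contraction of the form
\[
\|\boldsymbol{\bar{\omega}}_i^t - \mathbf{w}^*\|^2 \leq \big(1-\tfrac{3\mu\eta}{2}+2L\mu\eta^2\big)\|\boldsymbol{\bar{\omega}}_{i-1}^t - \mathbf{w}^*\|^2 + c_1 \Delta_{i-1}^t + c_2 \bar{\sigma}^2,
\]
where $\Delta_i^t := \sum_n p_n \|\boldsymbol{\omega}_{n,i}^t - \boldsymbol{\bar{\omega}}_i^t\|^2$ is the client drift and $c_1, c_2$ depend only on $\eta, L, \mu$. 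A separate recursion controls $\Delta_i^t$: expanding, splitting cross-terms by Young's inequality, and invoking the gradient-divergence bound $\sigma_n^2$ (Assumption~\ref{assumption}, item 4) gives $\Delta_i^t \leq (1+\eta)(1+4L^2\eta)\,\Delta_{i-1}^t + c_3\bar{\sigma}^2$, whose growth factor is exactly what appears inside $\zeta_2$ and $\zeta_4$.

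Next I would unroll both recursions from $i=1$ to $I$ and stitch successive rounds. Telescoping generates two geometric sums in the ratios $r_1 = 1-\tfrac{3\mu\eta}{2}+2L\mu\eta^2$ and $r_2 = (1+\eta)(1+4L^2\eta)$; their combination through $\sum_{j=0}^{I-2} r_1^{I-2-j}r_2^{j} = (r_2^{I-1}-r_1^{I-1})/(r_2-r_1)$ reproduces the bracketed differences in $\zeta_2$ and $\zeta_4$, while the raw contraction factor $r_1^{I-1}$ is the prefactor in $\zeta_1$ and $\zeta_3$. To link rounds $t-1$ and $t$, I write $\boldsymbol{\bar{\omega}}_0^t - \mathbf{w}^* = (\boldsymbol{\bar{\omega}}_I^{t-1} - \mathbf{w}^*) + \sum_n p_n(\mathbf{w}_n^{t-1} - \boldsymbol{\bar{\omega}}_I^{t-1})$ and apply Young's inequality with the free parameter $\tau_\varrho$; this creates the $(1+\tau_\varrho)$ factor joining $\zeta_1$ and the $(1+1/\tau_\varrho)$ factor joining $\zeta_3$. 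The round-initial drift $\Delta_0^t$ simplifies algebraically to $\sum_n p_n \|\mathbf{w}_n^{t-1} - \boldsymbol{\bar{\omega}}_I^{t-1}\|^2 - \|\sum_n p_n\mathbf{w}_n^{t-1} - \boldsymbol{\bar{\omega}}_I^{t-1}\|^2$, giving the pair of terms multiplied by $\zeta_4$ in \eqref{theorem_expectation}. Taking $\mathbb{E}[\cdot]$ over $\{e_{m,n,l}^{t-1}\}$ finishes the argument.

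The hardest part will be the unrolling step: propagating the two coupled recursions through $I$ local epochs while keeping every $\eta$-dependent constant in a form that collapses \emph{exactly} to the closed-form $\zeta_2$ and $\zeta_4$ in the statement, including the auxiliary $(1+\eta L)$, $(1+4L^2\eta)^2$ and $(2\eta^2 L^2+(L+\mu)\eta)$ factors. The joint use of strong convexity and smoothness (rather than either alone) is essential so that the per-epoch contraction comes out as $(1-\tfrac{3\mu\eta}{2}+2L\mu\eta^2)$, consistent with the step-size restriction $0<\eta<\tfrac{1}{2L}$ of Assumption~\ref{assumption}, item~3; a naive use of only strong convexity would give the looser factor $(1-2\mu\eta+\eta^2 L^2)$ that already appears nested inside $\zeta_1$.
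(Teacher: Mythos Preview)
Your overall architecture matches the paper's proof: two coupled recursions (one for $\|\boldsymbol{\bar\omega}_i^t-\mathbf{w}^*\|^2$ with contraction $r_1=1-\tfrac{3\mu\eta}{2}+2L\mu\eta^2$, one for the client drift $\Delta_i^t$ with growth $r_2=(1+\eta)(1+4L^2\eta)$), telescoped and stitched via the variance identity $\Delta_0^t=\sum_n p_n\|\mathbf{w}_n^{t-1}-\boldsymbol{\bar\omega}_I^{t-1}\|^2-\|\sum_n p_n\mathbf{w}_n^{t-1}-\boldsymbol{\bar\omega}_I^{t-1}\|^2$. All of this is exactly what the paper does.

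The one place where your plan departs from the paper is the location of the Young split with parameter $\tau_\varrho$. You apply it at $i=0$, to $\boldsymbol{\bar\omega}_0^t-\mathbf{w}^*$. The paper instead treats the \emph{first} local epoch separately: it introduces an auxiliary iterate $\boldsymbol{\bar\varphi}_1^t:=\sum_n p_n\big(\boldsymbol{\bar\omega}_I^{t-1}-\eta\nabla F_n(\boldsymbol{\bar\omega}_I^{t-1})\big)$, the ``ideal'' first step started from $\boldsymbol{\bar\omega}_I^{t-1}$, and applies Young at $i=1$ to $\|\boldsymbol{\bar\omega}_1^t-\mathbf{w}^*\|^2\le(1+\tfrac{1}{\tau_\varrho})\|\boldsymbol{\bar\omega}_1^t-\boldsymbol{\bar\varphi}_1^t\|^2+(1+\tau_\varrho)\|\boldsymbol{\bar\varphi}_1^t-\mathbf{w}^*\|^2$. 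The second piece is bounded directly by $(1-2\mu\eta+\eta^2L^2)\|\boldsymbol{\bar\omega}_I^{t-1}-\mathbf{w}^*\|^2$ (hence the ``looser'' factor you noticed inside $\zeta_1$), while the first piece, being a difference of gradient steps from $\mathbf{w}_n^{t-1}$ versus $\boldsymbol{\bar\omega}_I^{t-1}$, is bounded via $L$-smoothness, which is exactly where the $(1+\eta L)$ prefactor and the extra $\eta L\sum_n p_n\|\mathbf{w}_n^{t-1}-\boldsymbol{\bar\omega}_I^{t-1}\|^2$ inside the $\zeta_3$ bracket come from. The $r_1$ recursion is then run only for $i=2,\dots,I$, giving the power $I-1$ in $\zeta_1,\zeta_3,\zeta_4$. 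Your placement would yield a perfectly valid bound, but with constants of the form $r_1^{I}(1+\tau_\varrho)$ and without the $(1+\eta L)$ and $\eta L$ corrections---so it would not reproduce the stated $\zeta_1$ and $\zeta_3$ verbatim.
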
%
\begin{proof}%
See \textbf{Appendix \ref{appendix:theorem proof}}.
\end{proof}

Note in \textbf{Lemma~\ref{theo1}} that minimizing the third and fourth terms of \eqref{theorem_expectation}
facilitates the one-round upper bound of D-FL, as the other terms are either constant or independent of communication and aggregation mechanisms. 
Both the third and fourth terms contain the bias between the imperfectly and perfectly aggregated local models, the $l$-th segment of which can be rewritten in matrix forms, as given by
\begin{align}
\big[\boldsymbol{\bar{\omega}}_{I}^{t}(l)-\mathbf{w}_{1}^{t}(l),\cdots,\boldsymbol{\bar{\omega}}_{I}^{t}(l)-\mathbf{w}_{N}^{t}(l) \big]^{\dagger}={\Lambda_l^{t}}\mathbf{W}_l^{t}, 
\end{align}
where ${\Lambda_l^{t}}=\left[\begin{array}{ccc}
\lambda_{1,1,l}^{t} & \cdots & \lambda_{1,N,l}^{t}\\
\vdots & \ddots & \vdots\\
\lambda_{N,1,l}^{t} & \cdots & \lambda_{N,N,l}^{t}
\end{array}\right]$ is the coefficient matrix of the bias with $\lambda_{m,n,l}^{t}=p_{m}-{p}^{t}_{m,n,l}, \forall m,n \in \mathcal{V}$. 
As a result, we can rewrite 
\begin{align}\label{eq:rewritten_bias:a}
\!\Big\Vert{\sum}_{n\in\mathcal{V}}\!p_{n}\!(\mathbf{w}_{n}^{t\!-\!1}\!\!-\!\!\boldsymbol{\bar{\omega}}_{I}^{t\!-\!1})\!\Big\Vert^{2}&\!\!=\!\!{\sum}_{\forall l}\big\Vert\mathbf{p}\Lambda_l^{t\!-\!1}\mathbf{W}_l^{t\!-\!1}\big\Vert^{2};\\
 {\sum}_{n\in\mathcal{V}}\!p_{n}\!\big\Vert\mathbf{w}_{n}^{t\!-\!1}\!\!-\!\!\boldsymbol{\bar{\omega}}_I^{t\!-\!1}\!\big\Vert^{2}&\!\!=\!\!{\sum}_{\forall l}\!\left\Vert \!\mathrm{diag}(\sqrt{\mathbf{p}})\Lambda_l^{t\!-\!1}\mathbf{W}_l^{t\!-\!1}\!\right\Vert^{2}\!\!,\!\label{eq:rewritten_bias:b}
\end{align}
where $\mathbf{p}=[p_1,\cdots,p_n]$. By substituting \eqref{eq:rewritten_bias:a} and \eqref{eq:rewritten_bias:b}, the third term on the RHS of~\eqref{theorem_expectation} 
is upper bounded by
 \begin{subequations}\small\label{rewrite_bias1}%
\begin{align}\notag
& \mathbb{E}\big\{\!\big\Vert\!\!\sum_{n\in\mathcal{V}}\!p_{n}\!(\mathbf{w}_{n}^{t\!-\!1}\!\!-\!\!\boldsymbol{\bar{\omega}}_{I}^{t\!-\!1})\!\big\Vert^{2}\!\big\}\!\!+\!\!\eta L\!\!\sum_{n\in\mathcal{V}}\!p_{n}\!\mathbb{E}\!\big\{\!\left\Vert \mathbf{w}_{n}^{t\!-\!1}\!\!\!-\!\!\boldsymbol{\bar{\omega}}_{I}^{t\!-\!1}\!\right\Vert^{2}\!\!\big\}\\
  \notag =&\mathbb{E}\Big\{\!\sum_{\forall l}\!\big\Vert\mathbf{p}\Lambda_l^{t\!-\!1}\mathbf{W}_l^{t\!-\!1}\big\Vert^{2}\!\Big\}\!\!+\!\!\mathbb{E}\Big\{\!\eta L\!\sum_{\forall l}\left\Vert \mathrm{diag}(\!\sqrt{\mathbf{p}})\Lambda_l^{t\!-\!1}\mathbf{W}_l^{t\!-\!1}\right\Vert ^{2}\!\Big\}\\\label{bound2:a}
\leq&\!\big(\Vert\mathbf{p}\Vert^{2}\!\!+\!\!\eta L\left\Vert \mathrm{diag}(\sqrt{\mathbf{p}})\right\Vert ^{2}\big)\sum_{\forall l}\left(\mathbb{E}\{\Vert\Lambda_l^{t\!-\!1}\Vert^{2}\}\Vert\mathbf{W}_l^{t\!-\!1}\Vert^{2}\right)\\\label{bound2:b}
   \leq&\!\!\left(\!N\Vert \mathrm{diag}(\mathbf{p})\Vert^2\!\!+\!\!\eta L \Vert \mathrm{diag}(\mathbf{p})\Vert\right)\!\!\sum_{\forall l}\!\!\left( \mathbb{E}\{\Vert\!\Lambda_l^{t\!-\!1}\!\Vert^{2}\}\Vert\!\mathbf{W}_l^{t\!-\!1}\!\Vert^{2}\right),
\end{align}%
\end{subequations}%
where \eqref{bound2:a} is due to the homogeneity of 2-norm. \eqref{bound2:b} is based on $\Vert\mathbf{p}\Vert^{2}=\Vert\mathbf{1}_{N}\mathrm{diag}(\mathbf{p})\Vert^{2}\leq N\Vert\mathrm{diag}(\mathbf{p})\Vert^2$ and $\Vert\mathrm{diag}(\sqrt{\mathbf{p}})\Vert^2=\Vert\mathbf{p}\Vert$. 

 The fourth term on the RHS of \eqref{theorem_expectation} 
 is upper bounded by 
\begin{subequations}\small\label{rewrite_bias2}%
\begin{align}
   &{\sum}_{ n\in\mathcal{V}}\!p_{n}\mathbb{E}\big\{\!\Vert\mathbf{w}_{n}^{t\!-\!1}\!\!-\!\!\boldsymbol{\bar{\omega}}_I^{t\!-\!1}\Vert^{2}\big\}\!\!-\!\!\mathbb{E}\!\big\{\!\big\Vert{\sum}_{ n\in\mathcal{V}}\!p_{n}\!\mathbf{w}_{n}^{t\!-\!1}\!\!-\!\!\boldsymbol{\bar{\omega}}_I^{t\!-\!1}\big\Vert^{2}\!\big\}\notag\\
  &\!\!\!\!=\!\!\mathbb{E}\Big\{\!\sum_{\forall l}\left\Vert \mathrm{diag}(\!\sqrt{\mathbf{p}})\Lambda_l^{t\!-\!1}\mathbf{W}_l^{t\!-\!1}\right\Vert ^{2}\!\Big\}\!\!- \!\!\mathbb{E}\Big\{\!\sum_{\forall l}\!\big\Vert\mathbf{p}\Lambda_l^{t\!-\!1}\mathbf{W}_l^{t\!-\!1}\big\Vert^{2}\!\Big\}
  \notag\\
  &\!\!\!\!\leq\!\!\sum_{\forall l}\mathbb{E}\Big\{\left\Vert \mathrm{diag}(\!\sqrt{\mathbf{p}})\Lambda_l^{t\!-\!1}\mathbf{W}_l^{t\!-\!1}\right\Vert ^{2}\!\!- \!\!\big\Vert\mathrm{diag}(\mathbf{p})\Lambda_l^{t\!-\!1}\mathbf{W}_l^{t\!-\!1}\big\Vert^{2}\!\Big\}\label{var_bound:b}
  \\&\!\!\!\!\leq\left\Vert \mathrm{diag}(\sqrt{\mathbf{p}}-\mathbf{p})\right\Vert ^{2}{\sum}_{\forall l}\left( \mathbb{E}\{\Vert\Lambda_l^{t\!-\!1}\Vert^{2}\}\Vert\mathbf{W}_l^{t\!-\!1}\Vert^{2}\right),\label{var_bound:c}%
\end{align}%
\end{subequations}%
where 
\eqref{var_bound:b} is due to $\left\Vert A\right\Vert ^{2}\leq\left\Vert \mathbf{1}_{N}A\right\Vert ^{2}$ with $A=\mathrm{diag}(\mathbf{p})\Lambda_l^{t\!-\!1}\mathbf{W}_l^{t\!-\!1}$; and \eqref{var_bound:c} is due to the homogeneity of $2$-norm.

\begin{lemma}\label{lemma2}By substituting \eqref{rewrite_bias1} and \eqref{rewrite_bias2} into the one-round upper bound in \eqref{theorem_expectation}, the general one-round convergence upper bound of D-FL with respect to the aggregation bias matrix $\Vert\Lambda_l^t\Vert^2$ is given by
\begin{align}
\notag&\!\mathbb{E}\big\{\left\Vert  \boldsymbol{\bar{\omega}}_I^{t}\!\!-\!\!\mathbf{w}^{*}\right\Vert ^{2}\big\}\!\!\leq \! \zeta_1\!\left\Vert  \boldsymbol{\bar{\omega}}_I^{t\!-\!1}\!-\!\mathbf{w}^{*}\right\Vert ^{2}\!\!+\!\zeta_2\bar{\sigma}^2\\\notag
  &+\!\!\big(\zeta_3 N \Vert\mathrm{diag}(\mathbf{p})\Vert^2\!\!+\!\!\zeta_3\eta L \Vert\mathrm{diag}(\mathbf{p})\Vert\!\!+\!\!\zeta_4\Vert\mathrm{diag}(\sqrt{\mathbf{p}}\!\!-\!\!\mathbf{p})\Vert^2\big)\notag\\
     &\quad\quad\times\!{\sum}_{\forall l}\big\Vert \mathbf{W}_l^{t\!-\!1}\big\Vert ^{2}\mathbb{E}\{\Vert\Lambda_l^{t-1}\Vert^2\}.\label{theorem_expectation_2_lambda}%
     \end{align}%
\end{lemma}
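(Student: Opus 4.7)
The plan is to treat Lemma~\ref{lemma2} as a direct consolidation step: Lemma~\ref{theo1} already provides a one-round bound in terms of four quantities, and the derivations culminating in \eqref{rewrite_bias1} and \eqref{rewrite_bias2} have already re-expressed the third and fourth bracketed quantities in terms of the aggregation bias matrices $\Lambda_l^{t-1}$ and the stacked-parameter matrices $\mathbf{W}_l^{t-1}$. My task is therefore to substitute these two inequalities into \eqref{theorem_expectation} and collect the coefficients that multiply the common factor $\sum_{\forall l}\mathbb{E}\{\Vert\Lambda_l^{t-1}\Vert^{2}\}\Vert\mathbf{W}_l^{t-1}\Vert^{2}$.

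First, I would observe that the terms $\zeta_1\Vert \boldsymbol{\bar{\omega}}_I^{t-1}-\mathbf{w}^{*}\Vert^{2}$ and $\zeta_2\bar{\sigma}^2$ in \eqref{theorem_expectation} do not involve the aggregation bias or communication process and are carried over verbatim into \eqref{theorem_expectation_2_lambda}. Next, for the third bracketed term, I would invoke \eqref{rewrite_bias1}, noting that its final line \eqref{bound2:b} upper bounds it by $\bigl(N\Vert\mathrm{diag}(\mathbf{p})\Vert^{2}+\eta L\Vert\mathrm{diag}(\mathbf{p})\Vert\bigr)\sum_{\forall l}\mathbb{E}\{\Vert\Lambda_l^{t-1}\Vert^{2}\}\Vert\mathbf{W}_l^{t-1}\Vert^{2}$. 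Multiplying by $\zeta_3$ yields the first two terms of the scalar coefficient inside the big parentheses of \eqref{theorem_expectation_2_lambda}. For the fourth bracketed term of \eqref{theorem_expectation}, I would invoke \eqref{rewrite_bias2}, whose final line \eqref{var_bound:c} gives the bound $\Vert\mathrm{diag}(\sqrt{\mathbf{p}}-\mathbf{p})\Vert^{2}\sum_{\forall l}\mathbb{E}\{\Vert\Lambda_l^{t-1}\Vert^{2}\}\Vert\mathbf{W}_l^{t-1}\Vert^{2}$; multiplying by $\zeta_4$ contributes the remaining term $\zeta_4\Vert\mathrm{diag}(\sqrt{\mathbf{p}}-\mathbf{p})\Vert^{2}$ of the scalar coefficient.

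Finally, since both substitutions produce the same factor $\sum_{\forall l}\mathbb{E}\{\Vert\Lambda_l^{t-1}\Vert^{2}\}\Vert\mathbf{W}_l^{t-1}\Vert^{2}$, I would factor it out and add the three scalar contributions, obtaining the combined coefficient $\zeta_3 N\Vert\mathrm{diag}(\mathbf{p})\Vert^{2}+\zeta_3\eta L\Vert\mathrm{diag}(\mathbf{p})\Vert+\zeta_4\Vert\mathrm{diag}(\sqrt{\mathbf{p}}-\mathbf{p})\Vert^{2}$ that multiplies this factor in \eqref{theorem_expectation_2_lambda}. The only care required here is to keep the expectation on $\Vert\Lambda_l^{t-1}\Vert^{2}$ inside the summation (since the randomness of the success indicators $e_{m,n,l}^{t-1}$ enters only through $\Lambda_l^{t-1}$, while $\mathbf{W}_l^{t-1}$ is, conditioned on the previous round's state, a deterministic quantity from the local training outputs).

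The proof is essentially bookkeeping rather than technical; the main obstacle, if any, is to verify that the expectation can be pulled outside of $\Vert\mathbf{W}_l^{t-1}\Vert^{2}$ in both \eqref{rewrite_bias1} and \eqref{rewrite_bias2}, which follows because the local training outputs are independent of the per-segment transmission errors within round $t-1$. Once this observation is in place, the lemma follows by the direct substitution and coefficient-collection procedure described above, and no further nontrivial estimation is needed.
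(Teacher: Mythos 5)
Your proposal is correct and matches the paper's treatment exactly: the paper offers no separate proof of Lemma~\ref{lemma2}, treating it as the immediate result of substituting the bounds \eqref{bound2:b} and \eqref{var_bound:c} into the third and fourth bracketed terms of \eqref{theorem_expectation} and collecting the common factor $\sum_{\forall l}\mathbb{E}\{\Vert\Lambda_l^{t-1}\Vert^{2}\}\Vert\mathbf{W}_l^{t-1}\Vert^{2}$. Your added observation that the expectation acts only on $\Lambda_l^{t-1}$ (with $\mathbf{W}_l^{t-1}$ deterministic given the previous round's state) is consistent with how the paper writes \eqref{bound2:a}.
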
%
We note that \textbf{Lemmas \ref{theo1}} and \textbf{\ref{lemma2}} establish a generic one-round convergence upper bound for D-FL protocols. Next, we proceed to bound $\mathbb{E}\{\Vert\Lambda_l^{t-1}\Vert^2\}$ for the proposed R\&A D-FL, as summarized in the following lemma.
\begin{lemma}\label{lemma1}
 Under R\&A D-FL, the expectation of the square of $\lambda_{m,n,l}^{t-1},\forall m,n$ in the bias matrix, i.e., $\mathbb{E}\left((\lambda_{m,n,l}^{t-1})^2\right)$, is upper bounded by
    \begin{align}
  \notag\mathbb{E}&\left((\lambda_{m,n,l}^{t-1})^2\right)  \leq(1-\varrho_{m,n}^{t-1})p_m^2
  +\!{\sum}_{ j\in\mathcal{V}_{m}}\!p_{j}(1-\varrho_{j,n}^{t-1})\times\!\!\!\!\!\!\\
  &\;\;\;\;\;\;\;
  \underset{\forall\mathcal{V}'\subseteq\mathcal{V}_{j};m\in\mathcal{V}'}{\sum}\bigg[\frac{p_{m}}{\underset{ k\in\mathcal{V}'}{\sum}p_{k}}\!\underset{ k\in\mathcal{V}'}{\prod}\!\varrho_{k,n}^{t-1}\!\!\!\!\underset{ k'\in\mathcal{V}_{j}\backslash\mathcal{V}'}{\prod}\!\!\!\!\!\!\left(1\!-\!\varrho_{k',n}^{t-1}\right)\!\bigg].\label{eq:bound_of_lambda_nm}
\end{align}
 Here, $\mathcal{V}_{j}\triangleq \mathcal{V}\backslash \{j\},\, \forall j \in \mathcal{V}$.
 
With \eqref{eq:bound_of_lambda_nm}, the expectation of the square of the 2-norm of the bias matrix, i.e., $\mathbb{E}\{\Vert\Lambda_l^{t-1}\Vert^2\}$, is bounded by
    \begin{align}
  &\mathbb{E}\{\Vert\Lambda_l^{t-1}\Vert^2\}\leq{\sum}_{n \in \mathcal{V}}{\sum}_{ m\in \mathcal{V}}\!(1-\varrho_{m,n}^{t-1})\big(p_m^2+p_m\big).\label{eq:bound_of_lambda}
\end{align}
\end{lemma}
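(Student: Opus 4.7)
My approach is to establish \eqref{eq:bound_of_lambda_nm} entrywise by conditioning on the Bernoulli variable $e^{t-1}_{m,n,l}$, and then to derive \eqref{eq:bound_of_lambda} from \eqref{eq:bound_of_lambda_nm} by a Frobenius-norm bound together with a summation identity. Throughout I treat the per-route success indicators $\{e^{t-1}_{k,n,l}\}_{k\in\mathcal{V}}$ as independent Bernoulli variables with means $\varrho^{t-1}_{k,n}$, consistent with the probabilistic model adopted in Section~\ref{section:system}.

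First, for the entrywise bound I split $\mathbb{E}\{(\lambda^{t-1}_{m,n,l})^2\}$ according to $e^{t-1}_{m,n,l}$. On $\{e^{t-1}_{m,n,l}=0\}$ the aggregation weight vanishes, so $\lambda^{t-1}_{m,n,l}=p_m$; taking expectation yields $(1-\varrho^{t-1}_{m,n})p_m^2$, which is the first term of \eqref{eq:bound_of_lambda_nm}. On $\{e^{t-1}_{m,n,l}=1\}$, let $S=\sum_{k\in\mathcal{V}} p_k e^{t-1}_{k,n,l}$, so that $\lambda^{t-1}_{m,n,l}=p_m(1-S)/S$. Because $e^{t-1}_{m,n,l}=1$ forces $p_m\leq S\leq 1$, both $p_m/S\in[0,1]$ and $1-S\in[0,1]$, hence the product lies in $[0,1]$ and its square is at most itself:
\begin{align*}
(\lambda^{t-1}_{m,n,l})^{2}\;\leq\;\frac{p_m(1-S)}{S}\;=\;\frac{p_m}{S}\sum_{j\in\mathcal{V}_m}p_j\bigl(1-e^{t-1}_{j,n,l}\bigr),
\end{align*}
where I used $1-S=\sum_{j\in\mathcal{V}_m}p_j(1-e^{t-1}_{j,n,l})$ (valid on $\{e^{t-1}_{m,n,l}=1\}$ since $\sum_k p_k=1$). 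Multiplying by $e^{t-1}_{m,n,l}$, taking expectation, and using independence pulls out the factor $(1-\varrho^{t-1}_{j,n})$ for each $j$; the remaining expectation $\mathbb{E}\{e^{t-1}_{m,n,l}p_m/\!\sum_{k\in\mathcal{V}_j}p_k e^{t-1}_{k,n,l}\}$ is evaluated by conditioning on the random successful subset $\mathcal{V}'\subseteq\mathcal{V}_j$ (which must contain $m$ for the summand to be non-zero), yielding exactly the double sum in \eqref{eq:bound_of_lambda_nm}.

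Second, to obtain \eqref{eq:bound_of_lambda} I bound the spectral norm by the Frobenius norm, $\Vert\Lambda_l^{t-1}\Vert^2\leq \sum_n\sum_m (\lambda^{t-1}_{m,n,l})^{2}$, and insert \eqref{eq:bound_of_lambda_nm} term-by-term. The $(1-\varrho^{t-1}_{m,n})p_m^2$ contribution sums directly to $\sum_n\sum_m(1-\varrho^{t-1}_{m,n})p_m^2$. For the double-sum contribution I swap the $m$- and $j$-summations and collapse the inner sum via
\begin{align*}
\sum_{m\in\mathcal{V}_j}\frac{p_m e^{t-1}_{m,n,l}}{\sum_{k\in\mathcal{V}_j}p_k e^{t-1}_{k,n,l}}\;\leq\; 1,
\end{align*}
which equals $1$ when the denominator is positive and $0$ (by convention) otherwise. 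Hence its expectation is at most $1$, and this piece contributes at most $\sum_n\sum_j p_j(1-\varrho^{t-1}_{j,n})$. Renaming $j\to m$ and combining gives $\sum_n\sum_m(1-\varrho^{t-1}_{m,n})(p_m^2+p_m)$, as claimed.

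The principal obstacle is the quadratic-to-linear reduction on the event $\{e^{t-1}_{m,n,l}=1\}$: one must observe that the bias $p_m(1-S)/S$ is in $[0,1]$ to justify replacing its square by itself, otherwise the subsequent factorization in terms of individual link failures $(1-e^{t-1}_{j,n,l})$ would not produce the desired sum-of-probabilities form. A secondary subtlety is the independence of the $e^{t-1}_{k,n,l}$'s across source clients $k$, which must be assumed (or approximated) so that the product of single-route success/failure probabilities in \eqref{eq:bound_of_lambda_nm} matches the joint distribution of the successful subset; once this modeling step is granted, all remaining manipulations are elementary.
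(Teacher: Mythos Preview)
Your proposal is correct and follows essentially the same route as the paper's proof: both condition on $e^{t-1}_{m,n,l}$, use that the conditional bias $p_m(1-S)/S$ lies in $[0,1]$ to replace its square by itself, expand $1-S$ as a sum of per-source failure indicators to produce the $p_j(1-\varrho^{t-1}_{j,n})$ factors, and then bound $\|\Lambda_l^{t-1}\|^2$ by the Frobenius norm and collapse the inner sum via $\sum_{m\in\mathcal{V}_j}p_m e^{t-1}_{m,n,l}/\sum_{k\in\mathcal{V}_j}p_k e^{t-1}_{k,n,l}\leq 1$. The only cosmetic difference is that the paper enumerates successful subsets $\mathcal{V}'\subseteq\mathcal{V}_m$ from the outset and then reorganizes the sum, whereas you keep the indicator variables and take expectations at the end; the underlying argument and the two key inequalities are identical.
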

\begin{proof}
    See \textbf{Appendix \ref{proof_lemma1}}.
\end{proof}

By substituting \eqref{eq:bound_of_lambda}
into \eqref{theorem_expectation_2_lambda}, 
we arrive at a key finding of this paper, i.e., the one-round upper bound of R\&A D-FL with respect to the E2E-PERs of the routes selected for model delivery, as established in the following theorem.
\begin{theo}\label{theo3}
Given the routes selected between all client pairs, a one-round upper bound on the error of R\&A D-FL is
\begin{align}
\notag&\!\mathbb{E}\big\{\left\Vert  \boldsymbol{\bar{\omega}}_I^{t}\!\!-\!\!\mathbf{w}^{*}\right\Vert ^{2}\big\}\!\!\leq \! \zeta_1\!\left\Vert  \boldsymbol{\bar{\omega}}_I^{t\!-\!1}\!-\!\mathbf{w}^{*}\right\Vert ^{2}\!\!+\!\zeta_2\bar{\sigma}^2\\\notag
  &+\!\!\big(\zeta_3 N \Vert\mathrm{diag}(\mathbf{p})\Vert^2\!\!+\!\!\zeta_3\eta L \Vert\mathrm{diag}(\mathbf{p})\Vert\!\!+\!\!\zeta_4\Vert\mathrm{diag}(\sqrt{\mathbf{p}}\!\!-\!\!\mathbf{p})\Vert^2\big)\notag\\
     &\times\!\sum_{\forall l}\Big\Vert \mathbf{W}_l^{t\!-\!1}\Big\Vert ^{2}\!\!\times\!\!\underset{ n \in \mathcal{V}}{\sum}\underset{ m\in \mathcal{V}}{\sum}\!(1\!\!-\!\!\varrho_{m,n}^{t\!-\!1})\big(p_m^2\!\!+\!\!p_m\big).\label{theorem_expectation2}
	\end{align}
\end{theo}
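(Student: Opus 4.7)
The plan is to observe that \textbf{Theorem \ref{theo3}} is essentially a direct combination of \textbf{Lemma \ref{lemma2}} and \textbf{Lemma \ref{lemma1}}: the former reduces the one-round convergence bound to an expression linear in $\mathbb{E}\{\Vert\Lambda_l^{t-1}\Vert^2\}$, while the latter already expresses an upper bound on $\mathbb{E}\{\Vert\Lambda_l^{t-1}\Vert^2\}$ purely in terms of the E2E packet success rates $\varrho_{m,n}^{t-1}$ and the ideal aggregation coefficients $p_m$. Therefore I would simply substitute the bound \eqref{eq:bound_of_lambda} into the bound \eqref{theorem_expectation_2_lambda}, with no additional probabilistic argument required at this stage.

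First, I would restate \eqref{theorem_expectation_2_lambda}, emphasizing that the only quantity on the RHS that depends on the routing strategy and channel-induced errors is the factor $\mathbb{E}\{\Vert\Lambda_l^{t-1}\Vert^2\}$; the coefficients $\zeta_1,\zeta_2,\zeta_3,\zeta_4$, the norms of $\mathrm{diag}(\mathbf p)$ and $\mathrm{diag}(\sqrt{\mathbf p}-\mathbf p)$, and the local training outputs $\Vert\mathbf W_l^{t-1}\Vert^2$ are all independent of the E2E packet success rates. Second, I would invoke \textbf{Lemma \ref{lemma1}} and replace $\mathbb{E}\{\Vert\Lambda_l^{t-1}\Vert^2\}$ by the upper bound $\sum_{n\in\mathcal V}\sum_{m\in\mathcal V}(1-\varrho_{m,n}^{t-1})(p_m^2+p_m)$. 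Since every coefficient multiplying $\mathbb{E}\{\Vert\Lambda_l^{t-1}\Vert^2\}$ in \eqref{theorem_expectation_2_lambda} is non-negative (recall $\zeta_3,\zeta_4\geq 0$ under the hypotheses on $\eta,L,\mu$ in \textbf{Assumption \ref{assumption}}, and each $\Vert\mathbf W_l^{t-1}\Vert^2\geq 0$), the inequality is preserved when the upper bound on $\mathbb{E}\{\Vert\Lambda_l^{t-1}\Vert^2\}$ is substituted. Collecting terms gives exactly \eqref{theorem_expectation2}.

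Because the substitution is mechanical, the main obstacle is not in this final step but in making sure the bookkeeping is clean: in particular, the $\sum_{\forall l}$ outside of $\Vert\mathbf W_l^{t-1}\Vert^2$ and the $\sum_{n}\sum_{m}(1-\varrho_{m,n}^{t-1})(p_m^2+p_m)$ factor are separated correctly, because the latter is independent of the segment index $l$ (the per-hop PER $\epsilon^t_{m,n}=(\varepsilon^t_{m,n})^{32K}$ and hence the E2E PER $1-\varrho_{m,n}^{t-1}$ depend only on the packet length $K$, not on which packet index $l$ is being transmitted). I would point this out explicitly so the product form on the RHS of \eqref{theorem_expectation2} is unambiguous. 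Finally, I would remark that the resulting bound is separable in $(m,n)$ and monotonically non-increasing in each $\varrho_{m,n}^{t-1}$, which is what motivates the subsequent routing objective of maximizing $\varrho_{m,n}^{t-1}$ on each client-pair route (equivalently, minimizing the sum of per-hop log-PERs), and hence the reduction to a classical shortest-path routing problem promised in the introduction.
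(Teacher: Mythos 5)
Your proposal is correct and matches the paper's own derivation: Theorem~\ref{theo3} is obtained exactly by substituting the bound \eqref{eq:bound_of_lambda} from \textbf{Lemma~\ref{lemma1}} into \eqref{theorem_expectation_2_lambda} from \textbf{Lemma~\ref{lemma2}}, using the non-negativity of the multiplying coefficients and the fact that the bound on $\mathbb{E}\{\Vert\Lambda_l^{t-1}\Vert^2\}$ does not depend on the segment index $l$. Your explicit remarks on the sign of the coefficients and the $l$-independence of the E2E-PER factor are sound bookkeeping that the paper leaves implicit.
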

As unveiled in \textbf{Theorem~\ref{theo3}}, under imperfect channel conditions, 
the one-round convergence upper bound of \textbf{Theorem~\ref{theo3}} is dominated by $\underset{ n \in \mathcal{V}}{\sum}\underset{ m\in \mathcal{V}}{\sum}\!(1-\varrho_{m,n}^{t-1})\big(p_m^2+p_m\big)$ in the last term on the RHS of \eqref{theorem_expectation2}, which depends on the network topology and routing strategy.
Particularly, the one-round convergence upper bound increases monotonically with the E2E-PERs between any pair of clients, i.e., $\{1-\varrho^{t-1}_{m,n},\,\forall m,n\in\mathcal{V}\}$, because $\big(\zeta_3 N \Vert\mathrm{diag}(\mathbf{p})\Vert^2\!+\!\zeta_3\eta L \Vert\mathrm{diag}(\mathbf{p})\Vert\!+\!\zeta_4\Vert\mathrm{diag}(\sqrt{\mathbf{p}}\!-\!\mathbf{p})\Vert^2\big)$ is positive and $\underset{ n \in \mathcal{V}}{\sum}\underset{ m\in \mathcal{V}}{\sum}\!(1-\varrho_{m,n}^{t-1})\big(p_m^2+p_m\big)$ increases with the E2E-PERs, i.e., $1-\varrho^{t-1}_{m,n}, \forall m,n\in\mathcal{V}$.

On the other hand, when the E2E-PERs are small enough, i.e., $1\!-\!\varrho^{t-1}_{m,n}\rightarrow0,\forall m,n\in\mathcal{V}$ and $\tau_{\varrho}\rightarrow0$, the convergence upper bound in \eqref{theorem_expectation2} tends to $\mathbb{E}\big\{\left\Vert  \boldsymbol{\bar{\omega}}_I^{t}\!\!-\!\!\mathbf{w}^{*}\right\Vert ^{2}\big\}\!\!\leq \! \zeta_1\!\left\Vert  \boldsymbol{\bar{\omega}}_I^{t\!-\!1}\!-\!\mathbf{w}^{*}\right\Vert ^{2}\!\!+\!\zeta_2\bar{\sigma}^2$, which is consistent with the one-round convergence upper bound of C-FL in error-free channels. The validity of \eqref{theorem_expectation2} is cross-verified.

\begin{theo}\label{remark 3}
  As $t\rightarrow\infty$, the overall convergence bound of R\&A D-FL can be obtained based on \eqref{theorem_expectation2}.
When $\zeta_1<1$ and the topology and channels dynamically vary per round, this convergence upper bound is given by
 \begin{align}
        \notag&\!\mathbb{E}\big\{\!\!\left\Vert  \boldsymbol{\bar{\omega}}_I^{\infty}\!\!-\!\!\mathbf{w}^{*}\right\Vert ^{2}\!\!\big\}\!\!\leq \! \frac{\zeta_2}{1\!\!-\!\!\zeta_1}\!\bar{\sigma}^2\!+\!\!\sum_{t=1}^{\infty}(\zeta_1)^t\Big[{\underset{ n \in \mathcal{V}}{\sum}\underset{ m\in \mathcal{V}}{\sum}\!(1\!\!-\!\!\varrho^t_{m,n})\big(p_m^2\!\!+\!\!p_m\big)}\Big]\\
  &\times\lambda_{\max}\!\big(\zeta_3 N \Vert\mathrm{diag}(\mathbf{p})\Vert^2\!\!+\!\!\zeta_3\eta L \Vert\mathrm{diag}(\mathbf{p})\Vert\!\!+\!\!\zeta_4\Vert\mathrm{diag}(\sqrt{\mathbf{p}}\!\!-\!\!\mathbf{p})\Vert^2\big) ,\notag
	\end{align}
 where $\lambda_{\max}\geq\sum_{\forall l}\Vert \mathbf{W}_l^{t\!-\!1}\Vert ^{2}, \forall t$ is assumed to be the upper bound on $\sum_{\forall l}\Vert \mathbf{W}_l^{t\!-\!1}\Vert ^{2},\forall t$.
\end{theo}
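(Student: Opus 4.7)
The plan is to iteratively apply the one-round bound of \textbf{Theorem \ref{theo3}} and then pass to the limit $t\to\infty$. Introducing the shorthand $a_t := \mathbb{E}\{\Vert\bar{\boldsymbol{\omega}}_I^t-\mathbf{w}^*\Vert^2\}$, writing $C := \zeta_3 N \Vert\mathrm{diag}(\mathbf{p})\Vert^2 + \zeta_3\eta L \Vert\mathrm{diag}(\mathbf{p})\Vert + \zeta_4\Vert\mathrm{diag}(\sqrt{\mathbf{p}}-\mathbf{p})\Vert^2$ for the topology-independent prefactor, and invoking the uniform bound $\sum_{\forall l}\Vert\mathbf{W}_l^{t-1}\Vert^2\leq\lambda_{\max}$, I would rewrite \eqref{theorem_expectation2} compactly as $a_t\leq\zeta_1 a_{t-1}+\zeta_2\bar{\sigma}^2 + C\,\lambda_{\max}\,\Phi_{t-1}$, where $\Phi_{t-1} := \sum_{m,n\in\mathcal{V}}(1-\varrho_{m,n}^{t-1})(p_m^2+p_m)$ collects the round-$(t{-}1)$ E2E-PER contribution. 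This is a standard first-order scalar linear recursion with a time-varying forcing term.

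Second, I would unroll this recursion down to round $0$ to obtain
\begin{equation*}
a_t\leq\zeta_1^t a_0+\zeta_2\bar{\sigma}^2\sum_{k=0}^{t-1}\zeta_1^k+C\,\lambda_{\max}\sum_{k=0}^{t-1}\zeta_1^k\,\Phi_{t-1-k}.
\end{equation*}
Under the standing hypothesis $\zeta_1<1$, the initial-error contribution $\zeta_1^t a_0$ vanishes as $t\to\infty$ and the geometric series $\sum_{k=0}^{t-1}\zeta_1^k$ converges to $(1-\zeta_1)^{-1}$, which immediately produces the $\frac{\zeta_2}{1-\zeta_1}\bar{\sigma}^2$ term appearing in the stated bound. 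For the convolution, because each $\Phi_s$ is uniformly bounded by $\sum_{m,n}(p_m^2+p_m)$, the tail of the series decays geometrically, so dominated convergence legitimates the exchange of limit and summation. A relabeling of the dummy summation index (letting the fresh time index play the role of $t$ in the statement) then matches the announced series $\sum_{t=1}^{\infty}\zeta_1^t\Phi_t$ up to the common prefactor $C\,\lambda_{\max}$.

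The step I expect to be the main obstacle is precisely this final index alignment: a literal unrolling pairs $\zeta_1^k$ with $\Phi_{t-1-k}$ rather than with $\Phi_{k}$, so some care is needed either to reindex into a forward-time form or to interpret the stated bound as an upper envelope that is valid for every admissible per-round sequence of topologies/channels (which is exactly the scenario flagged in the theorem by the clause \emph{``the topology and channels dynamically vary per round''}). Once this bookkeeping is settled, the remainder is a routine geometric-series argument, and no assumption beyond $\zeta_1<1$ and the existence of $\lambda_{\max}$ is required.
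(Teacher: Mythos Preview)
Your approach is correct and matches the paper's intent: the paper does not supply a separate proof for this theorem, treating it as a direct corollary of the one-round bound \eqref{theorem_expectation2} obtained by precisely the unroll-and-sum-the-geometric-series argument you describe. Your caution about the index alignment between the unrolled convolution $\sum_{k}\zeta_1^{k}\Phi_{t-1-k}$ and the stated forward series $\sum_{t}\zeta_1^{t}\Phi_{t}$ is well placed; the paper's expression should be read as a schematic upper envelope under arbitrary per-round channel realizations rather than a literal index-by-index identity.
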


In light of \textbf{Theorem~\ref{theo3}}, we state another key finding of this paper in the following proposition. While intuitive, the proposition 
provides analytical evidence for strong synergy between R\&A D-FL and networking/routing.
\begin{proposition}\label{theo2}
 Under imperfect channel conditions, the convergence upper bound of R\&A D-FL takes the minimum when the local models are delivered through the routes with the lowest E2E-PERs between the clients. 
\end{proposition}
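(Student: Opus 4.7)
The proposal is to read off the proposition directly from the one-round bound in Theorem~\ref{theo3} by isolating the only routing-dependent factor and arguing its monotonicity in the E2E-PERs. First, I would inspect the right-hand side of \eqref{theorem_expectation2} term by term: the contraction term $\zeta_1\|\bar{\boldsymbol{\omega}}_I^{t-1}-\mathbf{w}^*\|^2$, the divergence term $\zeta_2\bar{\sigma}^2$, the coefficient $\big(\zeta_3 N\|\mathrm{diag}(\mathbf{p})\|^2+\zeta_3\eta L\|\mathrm{diag}(\mathbf{p})\|+\zeta_4\|\mathrm{diag}(\sqrt{\mathbf{p}}-\mathbf{p})\|^2\big)$, and $\sum_{\forall l}\|\mathbf{W}_l^{t-1}\|^2$ all depend only on $L,\mu,\eta,I,N,\mathbf{p},\bar{\sigma}^2$ and on the local training outputs from the previous round; none of them involves the routes selected in round $t-1$. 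Hence, up to constants, minimizing the bound in \eqref{theorem_expectation2} is equivalent to minimizing the single routing-dependent quantity
\begin{equation}
\Phi\;\triangleq\;\sum_{n\in\mathcal{V}}\sum_{m\in\mathcal{V}}(1-\varrho_{m,n}^{t-1})(p_m^2+p_m).\notag
\end{equation}

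Next, I would observe that the coefficients $(p_m^2+p_m)$ are strictly positive for every $m\in\mathcal{V}$ because $p_m=D_m/\sum_{k}D_k>0$, and the composite scalar multiplying $\Phi$ is positive as well (it is a nonnegative combination of norm-squared quantities with strictly positive weights $\zeta_3,\zeta_4$ coming from Lemma~\ref{theo1}). Consequently, $\Phi$ is a nonnegative linear combination of the per-pair E2E-PERs $\{1-\varrho_{m,n}^{t-1}\}_{m,n\in\mathcal{V}}$, and is strictly monotonically increasing in each one of them. The routing decision for a particular ordered pair $(m,n)$ only affects the corresponding $\varrho_{m,n}^{t-1}$ through the product $\prod_{(i,j)\in S_{m,n}^{t-1}}\epsilon_{i,j}^{t-1}$; because orthogonal channel assignment (graph coloring, Section~\ref{section_allocate_tdma}) decouples link-level success probabilities $\epsilon_{i,j}^{t-1}$ from which pair is using that link, the routing choices for distinct pairs can be optimized independently.

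It follows that $\Phi$ is minimized by maximizing $\varrho_{m,n}^{t-1}$ for every ordered pair $(m,n)$, i.e., by selecting for each pair the route with the lowest E2E-PER (equivalently, highest end-to-end packet success rate). Substituting this choice back into \eqref{theorem_expectation2} yields the smallest attainable one-round upper bound; iterating over $t$ and appealing to Theorem~\ref{remark 3} (in which the only routing-dependent summand is again $\sum_{m,n}(1-\varrho_{m,n}^t)(p_m^2+p_m)$, scaled by the positive factor $(\zeta_1)^t\lambda_{\max}$ times the same positive composite coefficient) propagates the same conclusion to the overall convergence upper bound, completing the argument.

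The main obstacle I anticipate is essentially bookkeeping rather than mathematical: one must verify that (i) every coefficient multiplying $(1-\varrho_{m,n}^{t-1})$ is nonnegative—which requires checking the sign of $\zeta_3$ and $\zeta_4$ under the learning-rate condition $0<\eta<1/(2L)$ of Assumption~\ref{assumption}—and (ii) the per-pair minimizations do not conflict via shared communication resources, a point that must be justified by citing the orthogonal channel allocation. Once these two points are stated, the proposition reduces to the monotonicity of a linear functional in its arguments. A brief concluding remark would note that, since $1-\varrho_{m,n}^{t-1}=1-\prod_{(i,j)\in S_{m,n}^{t-1}}\epsilon_{i,j}^{t-1}$, minimizing the E2E-PER on each pair is equivalent to minimizing $\sum_{(i,j)\in S_{m,n}^{t-1}}(-\log \epsilon_{i,j}^{t-1})$, an additive per-hop metric amenable to standard shortest-path algorithms, thereby matching the synergy-with-networking claim advertised before the proposition.
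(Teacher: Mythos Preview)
Your proposal is correct and follows essentially the same route as the paper: the paper's own proof simply says the proposition is ``readily established based on the discussions so far in this section,'' and those discussions are precisely the monotonicity argument you spell out---isolating $\sum_{n}\sum_{m}(1-\varrho_{m,n}^{t-1})(p_m^2+p_m)$ as the sole routing-dependent factor in \eqref{theorem_expectation2}, noting the positivity of its multiplicative coefficient, and then invoking the decoupling/shortest-path observation. You have merely written out explicitly what the paper leaves to the surrounding text.
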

\begin{proof}
This proposition is readily established based on the discussions so far in this section. 
\end{proof}

In light of \textbf{Proposition \ref{theo2}}, when the network links have abundant bandwidths, 
routing can be decoupled among different pairs of clients. 
We can transform the minimization of the one-round convergence upper bound of R\&A D-FL in \eqref{theorem_expectation2} into $ {N \choose 2}=\frac{1}{2}N(N-1)$ independent subproblems of finding the route with the minimum E2E-PER between each pair of clients. 
This problem is a standard shortest path-finding problem on a weighted graph, where the weight of any edge $(m',n')$ is $-\log(\epsilon_{m',n'}^{t-1})$. 
By finding a route $S^{t-1}_{m,n}$ between clients $m$ and $n$ maximizing $\sum_{{(m',n'){\in S^{t-1}_{m,n}}}}\log(\epsilon_{m',n'}^{t-1})=\log \prod_{{(m',n'){\in S^{t-1}_{m,n}}}}\epsilon_{m',n'}^{t-1}$, we find the route minimizing the E2E-PER, i.e., $1-\prod_{{(m',n'){\in S^{t-1}_{m,n}}}}\epsilon_{m',n'}^{t-1}$. 
The edge weights are positive, and the shortest path-finding problem can be readily solved using the Floyd–Warshall algorithm~\cite{floyd}.

When the bandwidths of the network links are insufficient, routing cannot be decoupled between different pairs of clients.
The routes need to be jointly optimized to minimize
$\underset{ n \in \mathcal{V}}{\sum}\underset{ m\in \mathcal{V}}{\sum}\!(1-\varrho_{m,n}^{t-1})\big(p_m^2+p_m\big)$ or, equivalently, $\underset{m\in\mathcal{V}}{\sum}\left(p_{m}^{2}+p_{m}\right)\underset{n\in\mathcal{V}}{\sum}(1-\varrho_{m,n}^{t-1})$. This is an integer program under the constraints of the maximum transmission time (or communication energy) per client. 
From \eqref{theorem_expectation2}, a client $m$ with more local data points $D_m$ has a larger aggregation coefficient $p_m$ and tends to reduce $\underset{m\in\mathcal{V}}{\sum}\left(p_{m}^{2}+p_{m}\right)\underset{n\in\mathcal{V}}{\sum}(1-\varrho_{m,n}^{t-1})$ to a greater extent. 
Priority should be given to clients with larger $p_m$, when multiple clients compete for limited 
bandwidths to deliver their models.
Suppose $p_1\geq p_2 \geq \cdots \geq p_N$.
We can admit the set of homologous routes (i.e., routes from device $m$ to all other devices) that minimizes $\underset{n\in\mathcal{V}}{\sum}(1-\varrho_{m,n}^{t-1})$ for each device $m$, one device after another, from device $m=1$ to $N$. 

To this end, the routing of R\&A D-FL conforms to standard network routing with computational and communication complexities consistent with those of standard network routing operations. Since network topology discovery, maintenance, and routing are essential for network management and operation, R\&A D-FL does not incur additional complexity for routing compared to standard network systems.

\section{Numerical Results}\label{section:results}
In this section, we conduct extensive experiments to validate the convergence analysis of R\&A D-FL.

\subsection{Experimental Setup}\label{subsection:setup}
Without loss of generality, a 10-node network ($N=10$) is generated at random with a topology of an undirected and connected graph following a random geometric graph model~\cite{penrose2003random}. 
The coordinates of the ten nodes are listed in Table~\ref{tab:coordinate}. 
The connectivity density of edges (i.e., transmitter-receiver pairs) 
is $\rho$; 
the number of directly connected client pairs in the network is $\rho\times\frac{N(N-1)}{2}$.
By default, $\rho=0.5$. 
All clients operate at the radio frequency $f_c=2.5$ GHz with the bandwidth of $B=30$ MHz. The transmit power of each client is $P=20$ dBm. The noise PSD is $N_0=-174$ dBm/Hz. 
We set the channel gain to be $h^t_{m,n} (\text{dB})=20\text{log}(f_c)+20\text{log}(d_{m,n})+32.4$~\cite{rappaport2010wireless}, where $d_{m,n}$ (km) is the distance between clients $m$ and $n$.
The PERs can be evaluated analytically when BPSK or QPSK is applied.

\begin{table}
\small
\centering
\caption{Coordinates ($m$) of 10 randomly generated~clients}
\label{tab:coordinate}
\begin{tabular}{c|c|c|c}
\hline
\textbf{Client ID} & \textbf{Coordinates} & \textbf{Client ID} & \textbf{Coordinates} \\
\hline
1         & (2196, 1351)     &
2         & (3637, 3127)      \\ 
3         & (2642, 284)       &
4         & (2884, 848)       \\ 
5         & (5254, 596)      &
6         & (1730, 1923)      \\ 
7         & (3572, 2668)      &
8         & (4546, 5326)      \\ 
9         & (4328, 4001)      &
10        & (2534, 5171)     \\\hline
\end{tabular}
\end{table}

\subsubsection{Machine Learning Model} We consider the following machine learning (ML) models.
\begin{itemize}
\item \textbf{CNN:}
This model has two convolutional layers with 32 or 64 convolutional filters per layer. We add a pooling layer between the convolutional layers to prevent overfitting. Following the convolutional layers are two fully connected layers. 
We use rectified linear unit (ReLU) in the convolutional and fully connected layers. The model size is $32M=38.72$ Mbits. The learning rate is $\eta=0.001$.
\item \textbf{ResNet18:}
This model consists of 18 layers, including 17 convolutional layers and a fully connected layer. Compared to a plain CNN network, the ResNet network inserts shortcut connections to prevent vanishing gradients~\cite{He_2016_CVPR}.
The size of the ResNet18 model is $32M=374.08$~Mbits. 
The learning rate is~$\eta=0.1$.
\item \textbf{ResNet56:}
This model comprises 56 layers, including 55 convolutional layers and a fully connected layer. It is a smaller version of ResNet18 with a smaller plane number~\cite{He_2016_CVPR}. The size of the ResNet56 model is $32M = 18.92$~Mbits. The learning rate is~$\eta = 0.03$.
\item \textbf{RNN:} Designed for character-level language modeling, this model comprises an embedding layer, a two-layer LSTM with 256 hidden units per layer, and a fully connected output layer. Characters from a vocabulary of 90 are mapped into 8-dimensional vectors, processed by an LSTM, and then passed to the output layer to predict the next character. The size of the RNN model is $32M=27.73$~Mbits. 
The learning rate is~$\eta=0.8$.
\end{itemize}
\subsubsection{Dataset} We consider five widely used public datasets.
\begin{itemize}

\item \textbf{Fed-fashionMNIST:} We consider an image classification task using the CNN to classify the fashion MNIST dataset, which contains $28 \times 28$ grayscale images of 70,000 fashion products from 10 categories (labels), e.g., bags and dresses~\cite{Fashion_mnist}. We regenerate a non-i.i.d. federated fashion MNIST dataset (Fed-fashionMNIST) by dividing the fashion MNIST dataset into 10 non-i.i.d. subsets. Each subset contains image samples of one category. 

\item \textbf{Fed-CIFAR100:} We consider another image classification task to train the ResNet18 model based on the Fed-CIFAR100 dataset. The CIFAR100 dataset comprises 100 classes, with 500 training images and 100 test images per class. Each image has $32\times 32$ pixels. Fed-CIFAR100 
divides the CIFAR100 into 10 training subsets with 10 image classes per subset. Each subset contains the image samples of all 10 classes, a total of 5,000 images.

\item \textbf{CIFAR10:} The ResNet56 model is employed for image classification on the CIFAR10 dataset~\cite{Krizhevsky2009LearningML}, a widely recognized benchmark in ML. CIFAR10 comprises 60,000 color images evenly distributed across 10 classes, with each image having a resolution of $32\times 32$ pixels. For this study, the dataset is divided into 30 non-i.i.d. subsets. 
\item \textbf{Shakespeare (i.i.d.):} For the next-character prediction task, we utilize an RNN model on the Shakespeare dataset, which comprises 715 subsets and is accessed via the TensorFlow Federated API~\cite{pmlr-v54-mcmahan17a}. Each subset includes at least two lines from the speaking role of a character in a play, divided into training lines (the first 80\% of lines for the role) and test lines (the remaining 20\%).

\item \textbf{Shakespeare (non-i.i.d.):} We explore another next-character prediction task on the Shakespeare dataset using the RNN model, where the dataset is partitioned in a non-i.i.d. fashion, as outlined in the FedProx framework~\cite{li2020fedprox}.

\end{itemize}

\subsubsection{Benchmark}The following FL protocols serve as the benchmarks for the proposed R\&A D-FL method.
\begin{itemize}     
    
    \item \textbf{AaYG D-FL~\cite{9716792,9563232,8950073}:}
This is the dominating aggregation method for D-FL, where local models are flooded throughout a network with no routing. 
The existing studies generally assume one local aggregation per FL round~\cite{9563232,8950073}.
For generality, we consider $J$ local aggregations towards the end of each round, i.e., every client aggregates its received models and then broadcasts its aggregated model for $J$ times. $J$ is preconfigurable.

    \item \textbf{C-FL~\cite{pmlr-v54-mcmahan17a}:} 
    A client serves as the aggregator.
    The rest send their local models via routes with the smallest E2E-PERs to the aggregator which aggregates the local models 
    and returns the global model to the clients. The reverse links are also imperfect. A client replaces the erroneous segments of the global model with the corresponding segments of its local model.
    The best aggregator, i.e., node 7, is selected based on training accuracy.
    
\end{itemize}

We also consider two scenarios for local model aggregation, when erroneous local models are received. \footnote{
These methods help prevent biases in model aggregation caused by communication errors. Since their implementation is trivial, both methods incur negligible additional complexity for model aggregation. 
}
\begin{itemize}
    \item \textbf{Adaptive aggregation coefficient normalization:} As proposed in this paper, a client normalizes the sum of the aggregation coefficients for each segment of the local models to be aggregated, where only the models received error-free in that segment are considered; see~\eqref{locally_aggregate_model} and~\eqref{error1}.
    
    \item \textbf{Model substitution~\cite{9716792}:} When any segment of a received local model is erroneous, the recipient replaces that segment with the corresponding segment of its own local model before its local model aggregation.
    
\end{itemize}

\subsubsection{Performance Indicators}\label{section_allocate_tdma}

Apart from training accuracy and loss, we also consider the communication overhead. We use a TDMA protocol to assign time slots for the clients to send their model parameters and consider the broadcast nature of radio transmissions. 
The minimum number of time slots needed can be obtained using edge coloring~\cite{karloff1987efficient}. 
\begin{itemize}
    \item \textbf{R\&A D-FL (Proposed):} 
    Routes are specified before model delivery. 
    To avoid collisions, neighboring clients take different slots. 
    The minimum number of time slots depends on the client that needs the most time slots to accommodate its own and its neighbors' transmissions. 
    
    \item \textbf{AaYG D-FL~\cite{9716792,9563232,8950073}:} 
Every client transmits its (aggregated) local model. The minimum number of slots needed depends on the client with the most neighbors, which is $d_{\max}+1$. Here, $d_{\max}$ is the maximum degree of the network. Consider $J$ transmissions per client in an FL round. $J(d_{\max}+1)$ time slots are required.

    \item \textbf{C-FL~\cite{pmlr-v54-mcmahan17a}:} 
    Similar to R\&A D-FL, routes are given in C-FL. The minimum number of time slots required in C-FL can be specified in the same way as in R\&A D-FL. 
\end{itemize}
\begin{table*}[]\footnotesize\centering
\caption{Total communication overhead per training round}
\label{tab:communication }
\begin{tabular}{cccccccccccccc}
\hline
\multicolumn{2}{c|}{\textbf{Edge Density}} &
  \multicolumn{4}{c|}{\textbf{0.5}} &
  \multicolumn{4}{c|}{\textbf{0.7}} &
  \multicolumn{4}{c}{\textbf{0.9}} \\ \hline
\multicolumn{2}{c|}{\textbf{ML Model}} &
  \multicolumn{1}{c|}{CNN} &
  \multicolumn{1}{c|}{ResNet18} &
   \multicolumn{1}{c|}{ResNet56} &
  \multicolumn{1}{c|}{RNN} &
  \multicolumn{1}{c|}{CNN} &
  \multicolumn{1}{c|}{ResNet18} &
   \multicolumn{1}{c|}{ResNet56} &
  \multicolumn{1}{c|}{RNN} &
  \multicolumn{1}{c|}{CNN} &
  \multicolumn{1}{c|}{ResNet18} &
   \multicolumn{1}{c|}{ResNet56} &
  {RNN} \\ \hline
\multicolumn{1}{c|}{\multirow{2}{*}{R\&A}} &
  \multicolumn{1}{c|}{\ding{172}} &
  \multicolumn{1}{c|}{38} &
  \multicolumn{1}{c|}{38} &
  \multicolumn{1}{c|}{38} &
  \multicolumn{1}{c|}{38} &
  \multicolumn{1}{c|}{40} &
  \multicolumn{1}{c|}{40} &
  \multicolumn{1}{c|}{40} &
  \multicolumn{1}{c|}{40}  &
   \multicolumn{1}{c|}{40} &
  \multicolumn{1}{c|}{40} &
  \multicolumn{1}{c|}{40} &
 40\\ \cline{2-14} 
\multicolumn{1}{c|}{} &
  \multicolumn{1}{c|}{\ding{173}} &
  \multicolumn{1}{c|}{1587} &
  \multicolumn{1}{c|}{15337} &
   \multicolumn{1}{c|}{775} &
  \multicolumn{1}{c|}{1136} &
  \multicolumn{1}{c|}{1587} &
  \multicolumn{1}{c|}{15337} &
   \multicolumn{1}{c|}{775} &
  \multicolumn{1}{c|}{1136} &
  \multicolumn{1}{c|}{1587} &
  \multicolumn{1}{c|}{15337}&
   \multicolumn{1}{c|}{775} &
  1136 \\ \hline
\multicolumn{1}{c|}{\multirow{2}{*}{\begin{tabular}[c]{@{}c@{}}AaYG\\ ($J$=1)\end{tabular}}} &
  \multicolumn{1}{c|}{\ding{172}} &
  \multicolumn{1}{c|}{8} &
  \multicolumn{1}{c|}{8} &
     \multicolumn{1}{c|}{8} &
  \multicolumn{1}{c|}{8} &
  \multicolumn{1}{c|}{10} &
  \multicolumn{1}{c|}{10} &
    \multicolumn{1}{c|}{10} &
  \multicolumn{1}{c|}{10} &
  \multicolumn{1}{c|}{10} &
  \multicolumn{1}{c|}{10}&
  \multicolumn{1}{c|}{10} &
 10 \\ \cline{2-14} 
\multicolumn{1}{c|}{} &
  \multicolumn{1}{c|}{\ding{173}} &
  \multicolumn{1}{c|}{387.2} &
  \multicolumn{1}{c|}{3740.8} &
  \multicolumn{1}{c|}{189.2} &
  \multicolumn{1}{c|}{277.3} &
  \multicolumn{1}{c|}{387.2} &
  \multicolumn{1}{c|}{3740.8} &
  \multicolumn{1}{c|}{189.2} &
  \multicolumn{1}{c|}{277.3} &
  \multicolumn{1}{c|}{387.2} &
 \multicolumn{1}{c|}{3740.8}&
  \multicolumn{1}{c|}{189.2} &
277.3\\ \hline
\multicolumn{1}{c|}{\multirow{2}{*}{\begin{tabular}[c]{@{}c@{}}AaYG\\ ($J$=5)\end{tabular}}} &
  \multicolumn{1}{c|}{\ding{172}} &
  \multicolumn{1}{c|}{40} &
  \multicolumn{1}{c|}{40} &
    \multicolumn{1}{c|}{40} &
  \multicolumn{1}{c|}{40} &
  \multicolumn{1}{c|}{50} &
  \multicolumn{1}{c|}{50} &
    \multicolumn{1}{c|}{50} &
  \multicolumn{1}{c|}{50} &
  \multicolumn{1}{c|}{50} &
 \multicolumn{1}{c|}{50}&
    \multicolumn{1}{c|}{50} &
 50 \\ \cline{2-14} 
\multicolumn{1}{c|}{} &
  \multicolumn{1}{c|}{\ding{173}} &
  \multicolumn{1}{c|}{1936} &
  \multicolumn{1}{c|}{18704} &
     \multicolumn{1}{c|}{946} &
  \multicolumn{1}{c|}{1387} &
  \multicolumn{1}{c|}{1936} &
  \multicolumn{1}{c|}{18704} &
     \multicolumn{1}{c|}{946} &
  \multicolumn{1}{c|}{1387} &
  \multicolumn{1}{c|}{1936} &
   \multicolumn{1}{c|}{18704} &
     \multicolumn{1}{c|}{946} &
  1387\\ \hline
\multicolumn{1}{c|}{\multirow{2}{*}{C-FL}} &
  \multicolumn{1}{c|}{\ding{172}} &
  \multicolumn{1}{c|}{23} &
  \multicolumn{1}{c|}{23} &
    \multicolumn{1}{c|}{23} &
  \multicolumn{1}{c|}{23} &
  \multicolumn{1}{c|}{25} &
  \multicolumn{1}{c|}{25} &
    \multicolumn{1}{c|}{25} &
  \multicolumn{1}{c|}{25} &
  \multicolumn{1}{c|}{25} &
  \multicolumn{1}{c|}{25}  &
    \multicolumn{1}{c|}{25} &
 25\\ \cline{2-14} 
\multicolumn{1}{c|}{} &
  \multicolumn{1}{c|}{\ding{173}} &
  \multicolumn{1}{c|}{1161} &
  \multicolumn{1}{c|}{11222} &
  \multicolumn{1}{c|}{568} &
  \multicolumn{1}{c|}{832} &
  \multicolumn{1}{c|}{1161} &
  \multicolumn{1}{c|}{11222} &
  \multicolumn{1}{c|}{568} &
  \multicolumn{1}{c|}{832} &
  \multicolumn{1}{c|}{1161} &
  \multicolumn{1}{c|}{11222} &
  \multicolumn{1}{c|}{568} &
 832 \\ \hline
\multicolumn{8}{l}{\ding{172}: The minimum number of time slots required per training round;} \\
\multicolumn{8}{l}{\ding{173}: The total network traffic (MBits) per training round.}
\end{tabular}
\end{table*}

\subsection{Experimental Results}
\subsubsection{Comparison of FL protocols and aggregation mechanisms}
\begin{figure}[t]
\centering{}
\includegraphics[scale=0.5]{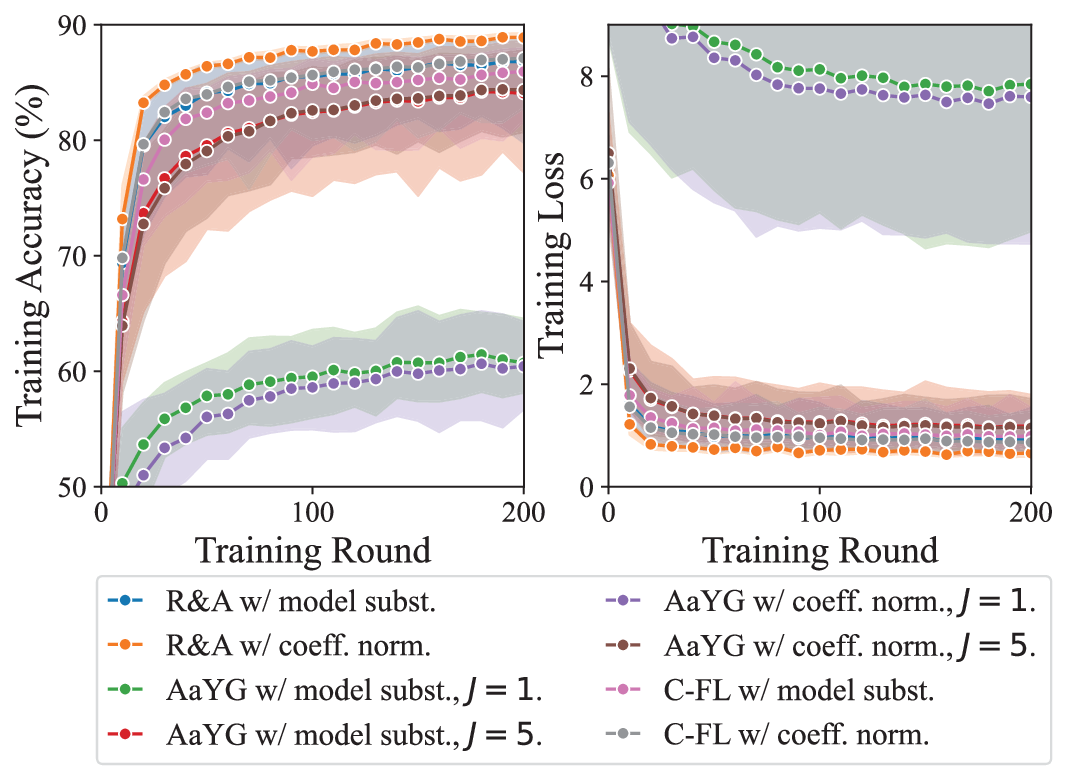}
\caption{The training accuracy and loss of Fed-fashionMNIST dataset on CNN model versus the training round.}
\label{fig:FMNIST_training_acc_curve}
\end{figure}

\begin{figure}[t]
	\centering
	\subfigure[R\&A w/ coeff. norm.]
	{
		\begin{minipage}[t]{0.2\textwidth}
			\centering
				\includegraphics[width=3.9cm]{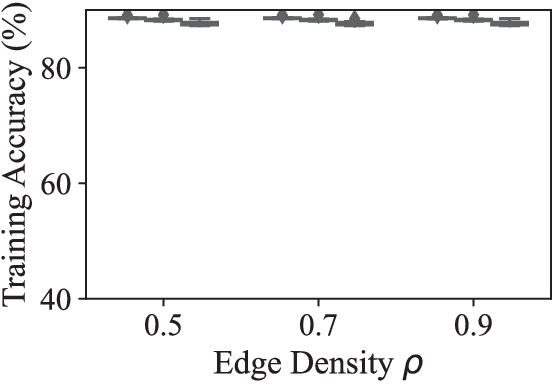}
		\end{minipage}
	}
	\subfigure[R\&A w/ model subst.]
	{
		\begin{minipage}[t]{0.2\textwidth}
			\centering
				\includegraphics[width=3.9cm]{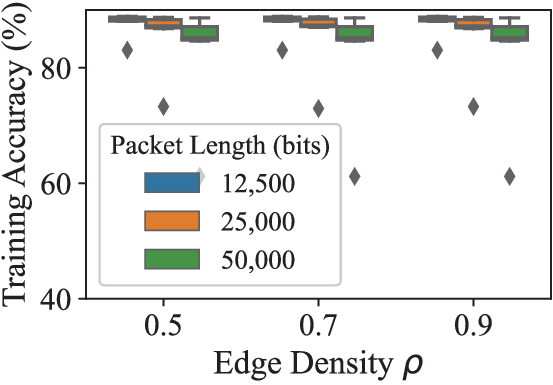}
		\end{minipage}
	} 
 \subfigure[AaYG w/ coeff. norm., $J$=1.]
	{
		\begin{minipage}[t]{0.2\textwidth}
			\centering
				\includegraphics[width=3.9cm]{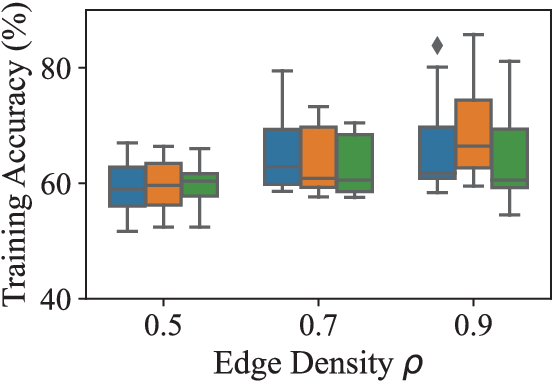}
		\end{minipage}
	}
 \subfigure[AaYG w/ model subst., $J$=1.]
	{
		\begin{minipage}[t]{0.2\textwidth}
			\centering
				\includegraphics[width=3.9cm]{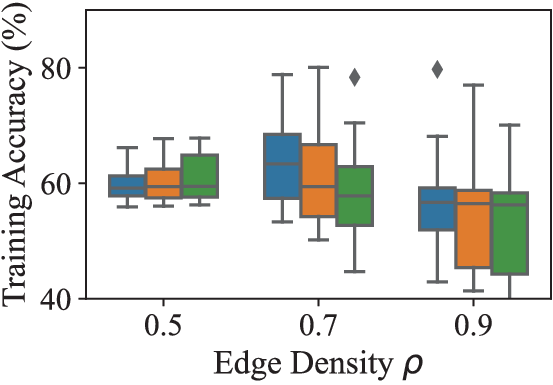}
		\end{minipage}
	}
 \subfigure[AaYG w/ coeff. norm., $J$=5.]
	{
		\begin{minipage}[t]{0.2\textwidth}
			\centering
				\includegraphics[width=3.9cm]{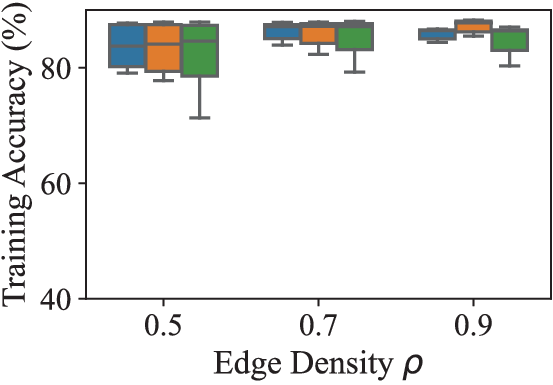}
		\end{minipage}
	}
 \subfigure[AaYG w/ model subst., $J$=5.]
	{
		\begin{minipage}[t]{0.2\textwidth}
			\centering
				\includegraphics[width=3.9cm]{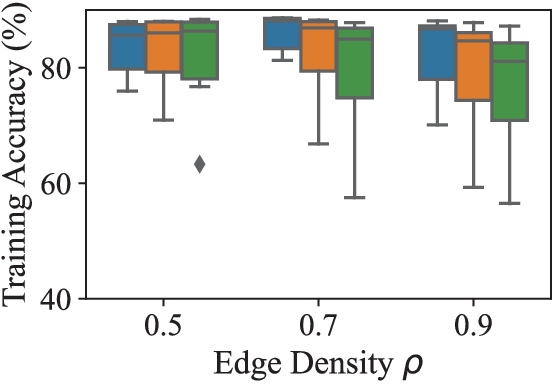}
		\end{minipage}
	}
 \subfigure[C-FL w/ coeff. norm.]
	{
		\begin{minipage}[t]{0.2\textwidth}
			\centering
				\includegraphics[width=3.9cm]{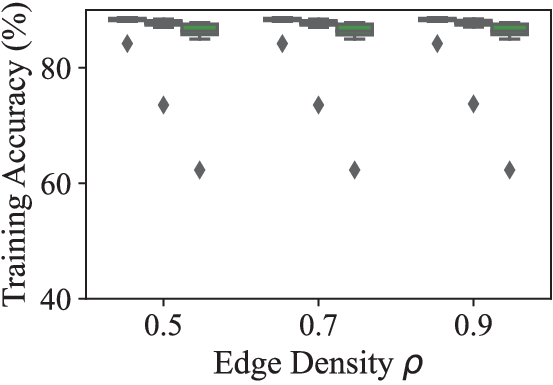}
		\end{minipage}
	}
 \subfigure[C-FL w/ model subst.]
	{
		\begin{minipage}[t]{0.2\textwidth}
			\centering
				\includegraphics[width=3.9cm]{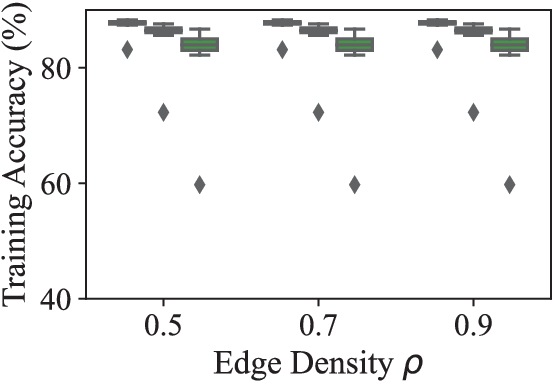}
		\end{minipage}
	}
\caption{Training accuracy of D-FL versus the edge density and the packet length, where the CNN model and non-i.i.d. Fed-FashionMNIST dataset are considered. C-FL selects the best-performing client to serve as the central aggregator.}\label{Convergence curves fashionmnist}
\end{figure}

\begin{figure}[ht]
	\centering
	\subfigure[R\&A w/ coeff. norm.]
	{
		\begin{minipage}[t]{0.2\textwidth}
			\centering
				\includegraphics[width=3.9cm]{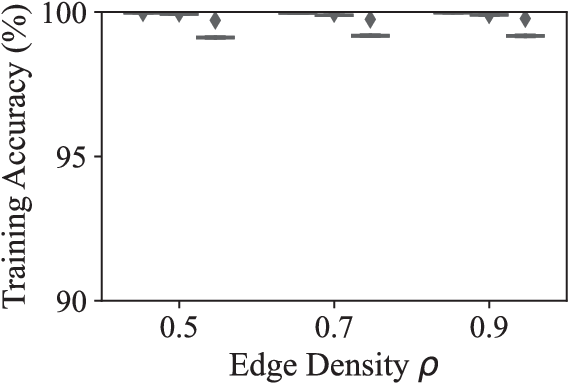}
		\end{minipage}
	}
	\subfigure[R\&A w/ model subst.]
	{
		\begin{minipage}[t]{0.2\textwidth}
			\centering
				\includegraphics[width=3.9cm]{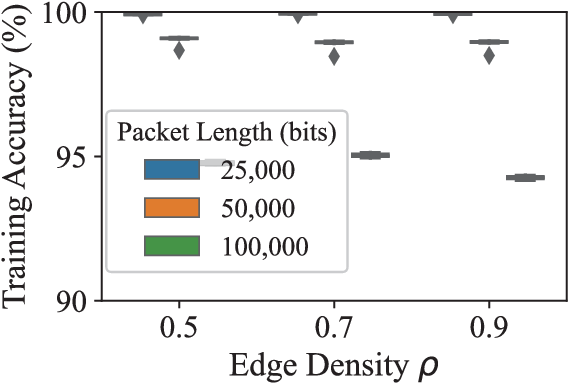}
		\end{minipage}
	} 
 \subfigure[AaYG w/ coeff. norm., $J$=1.]
	{
		\begin{minipage}[t]{0.2\textwidth}
			\centering
				\includegraphics[width=3.9cm]{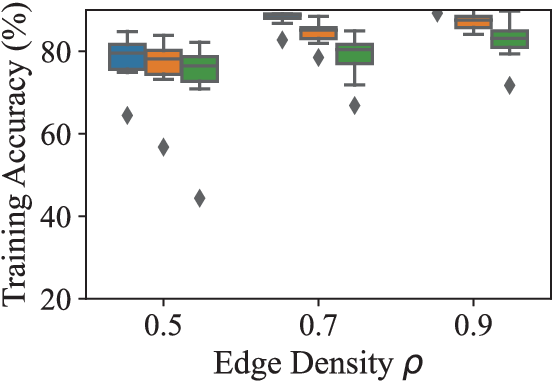}
		\end{minipage}
	}
 \subfigure[AaYG w/ model subst., $J$=1.]
	{
		\begin{minipage}[t]{0.2\textwidth}
			\centering
				\includegraphics[width=3.9cm]{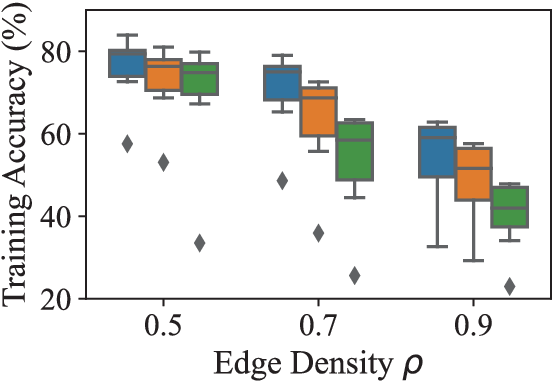}
		\end{minipage}
	}
 \subfigure[AaYG w/ coeff. norm., $J$=5.]
	{
		\begin{minipage}[t]{0.2\textwidth}
			\centering
				\includegraphics[width=3.9cm]{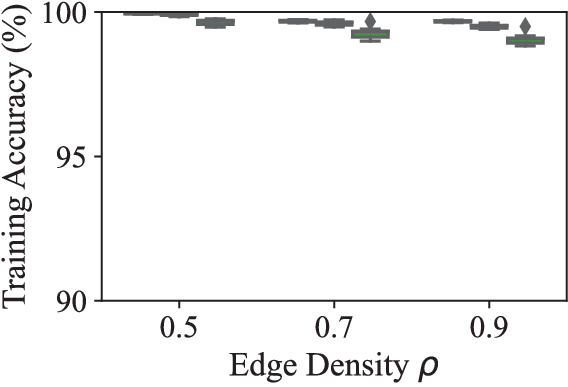}
		\end{minipage}
	}
 \subfigure[AaYG w/ model subst., $J$=5.]
	{
		\begin{minipage}[t]{0.2\textwidth}
			\centering
				\includegraphics[width=3.9cm]{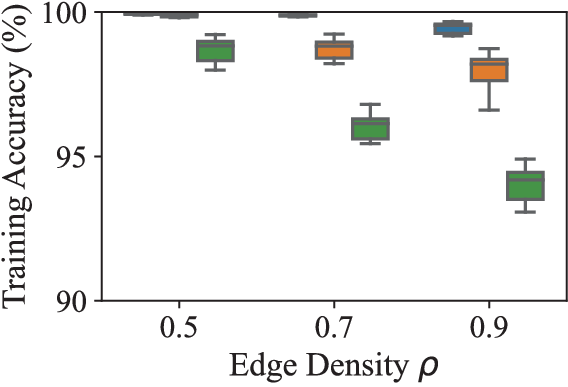}
		\end{minipage}
	}
 \subfigure[C-FL w/ coeff. norm.]
	{
		\begin{minipage}[t]{0.2\textwidth}
			\centering
				\includegraphics[width=3.9cm]{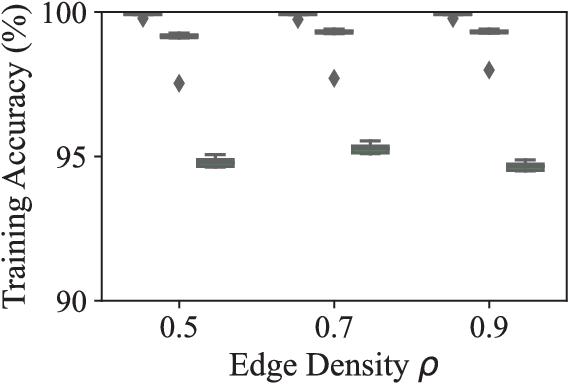}
		\end{minipage}
	}
 \subfigure[C-FL w/ model subst.]
	{
		\begin{minipage}[t]{0.2\textwidth}
			\centering
				\includegraphics[width=3.9cm]{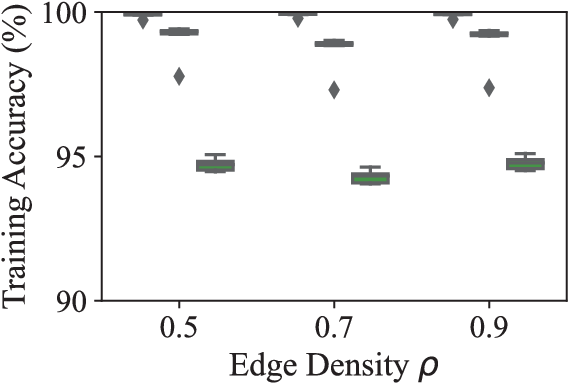}
		\end{minipage}
	}
\caption{Training accuracy vs. training rounds, where the ResNet18 model and non-i.i.d. Fed-CIFAR100 dataset are considered. C-FL selects the best-performing client to serve as the central aggregator.}\label{Convergence curves cifar}
\end{figure}
\begin{figure}[ht]
	\centering
	\subfigure[R\&A w/ coeff. norm.]
	{
		\begin{minipage}[t]{0.2\textwidth}
			\centering
				\includegraphics[width=3.9cm]{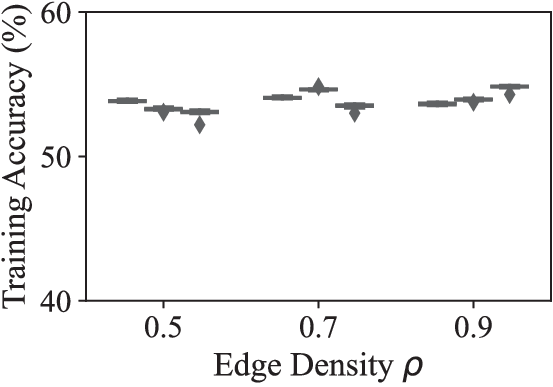}
		\end{minipage}
	}
	\subfigure[R\&A w/ model subst.]
	{
		\begin{minipage}[t]{0.2\textwidth}
			\centering
				\includegraphics[width=3.9cm]{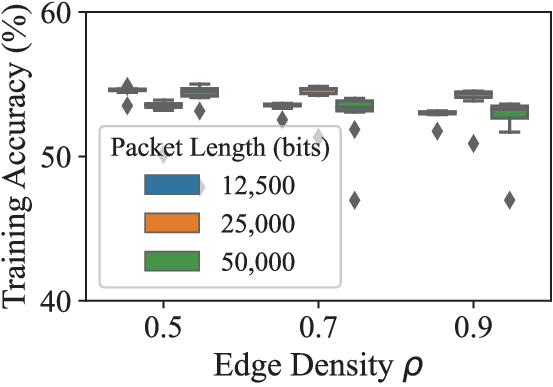}
		\end{minipage}
	} 
 \subfigure[AaYG w/ coeff. norm., $J$=1.]
	{
		\begin{minipage}[t]{0.2\textwidth}
			\centering
				\includegraphics[width=3.9cm]{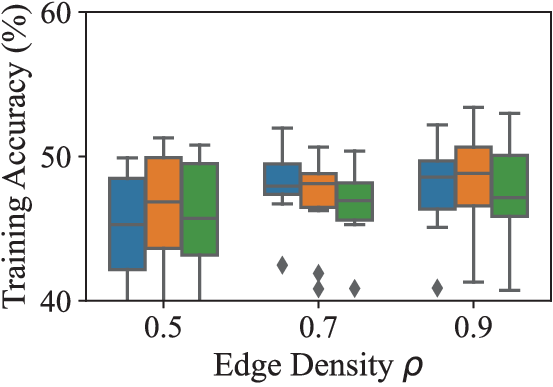}
		\end{minipage}
	}
 \subfigure[AaYG w/ model subst., $J$=1.]
	{
		\begin{minipage}[t]{0.2\textwidth}
			\centering
				\includegraphics[width=3.9cm]{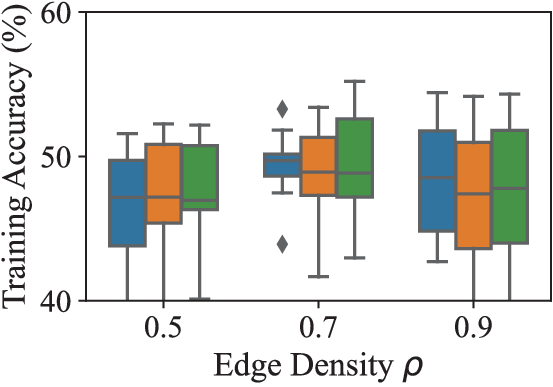}
		\end{minipage}
	}
 \subfigure[AaYG w/ coeff. norm., $J$=5.]
	{
		\begin{minipage}[t]{0.2\textwidth}
			\centering
				\includegraphics[width=3.9cm]{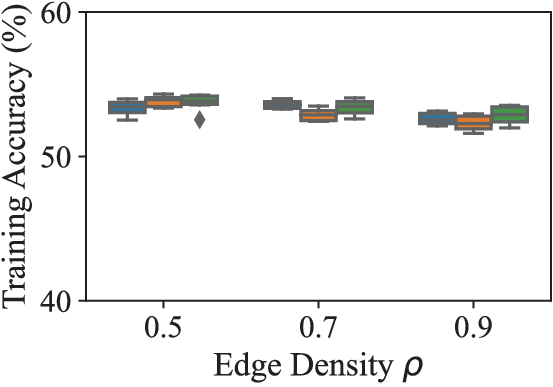}
		\end{minipage}
	}
 \subfigure[AaYG w/ model subst., $J$=5.]
	{
		\begin{minipage}[t]{0.2\textwidth}
			\centering
				\includegraphics[width=3.9cm]{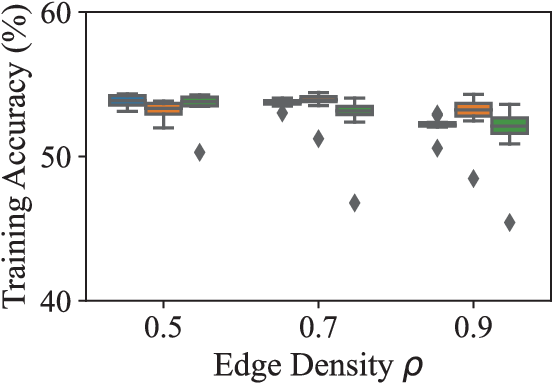}
		\end{minipage}
	}
 \subfigure[C-FL w/ coeff. norm.]
	{
		\begin{minipage}[t]{0.2\textwidth}
			\centering
				\includegraphics[width=3.9cm]{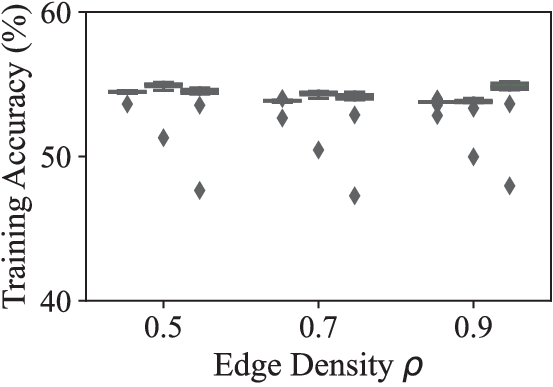}
		\end{minipage}
	}
 \subfigure[C-FL w/ model subst.]
	{
		\begin{minipage}[t]{0.2\textwidth}
			\centering
				\includegraphics[width=3.9cm]{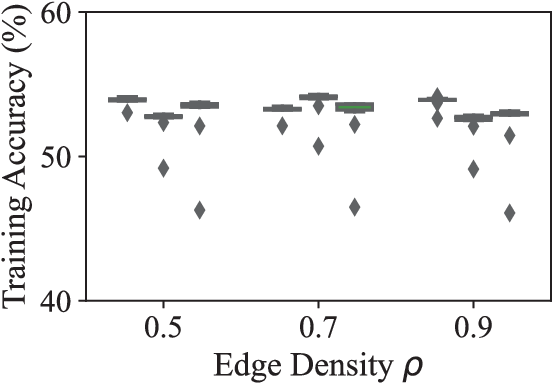}
		\end{minipage}
	}
\caption{Training accuracy vs. training rounds, where the ResNet56 model and CIFAR10 dataset are considered. C-FL selects the best-performing client to serve as the central aggregator.}\label{Convergence curves CIFAR10}
\end{figure}
\begin{figure}[ht]
	\centering
	\subfigure[R\&A w/ coeff. norm.]
	{
		\begin{minipage}[t]{0.2\textwidth}
			\centering
				\includegraphics[width=3.9cm]{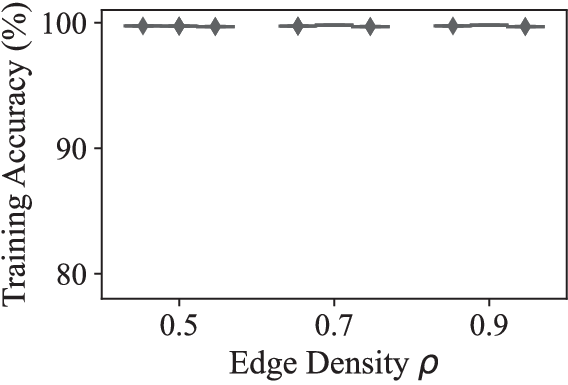}
		\end{minipage}
	}
	\subfigure[R\&A w/ model subst.]
	{
		\begin{minipage}[t]{0.2\textwidth}
			\centering
				\includegraphics[width=3.9cm]{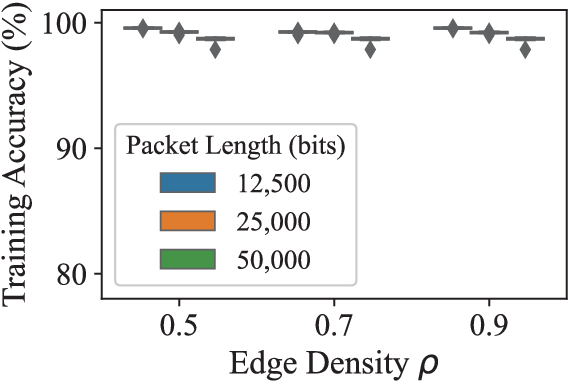}
		\end{minipage}
	} 
 \subfigure[AaYG w/ coeff. norm., $J$=1.]
	{
		\begin{minipage}[t]{0.2\textwidth}
			\centering
				\includegraphics[width=3.9cm]{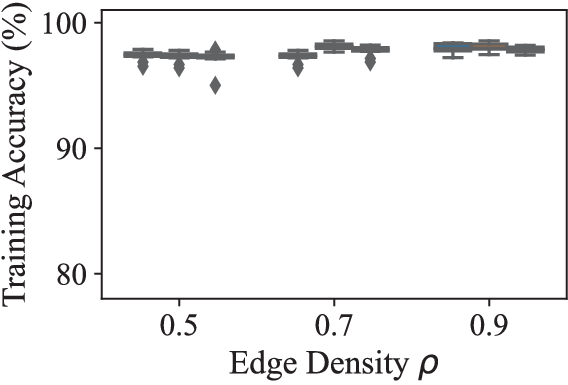}
		\end{minipage}
	}
 \subfigure[AaYG w/ model subst., $J$=1.]
	{
		\begin{minipage}[t]{0.2\textwidth}
			\centering
				\includegraphics[width=3.9cm]{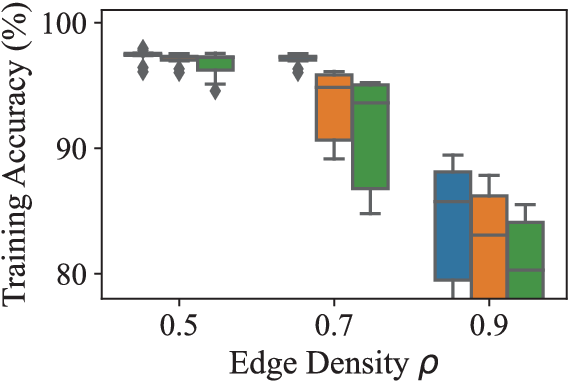}
		\end{minipage}
	}
 \subfigure[AaYG w/ coeff. norm., $J$=5.]
	{
		\begin{minipage}[t]{0.2\textwidth}
			\centering
				\includegraphics[width=3.9cm]{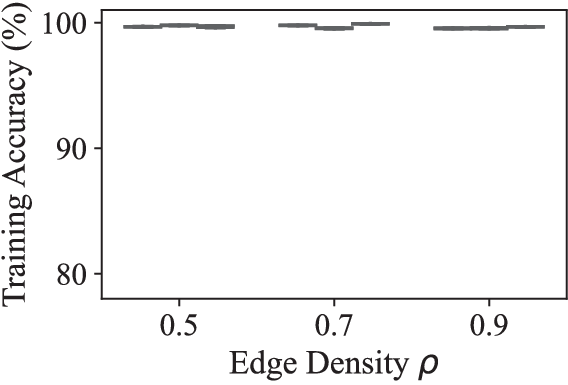}
		\end{minipage}
	}
 \subfigure[AaYG w/ model subst., $J$=5.]
	{
		\begin{minipage}[t]{0.2\textwidth}
			\centering
				\includegraphics[width=3.9cm]{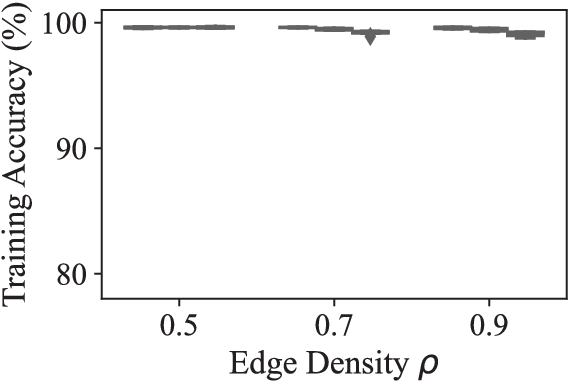}
		\end{minipage}
	}
 \subfigure[C-FL w/ coeff. norm.]
	{
		\begin{minipage}[t]{0.2\textwidth}
			\centering
				\includegraphics[width=3.9cm]{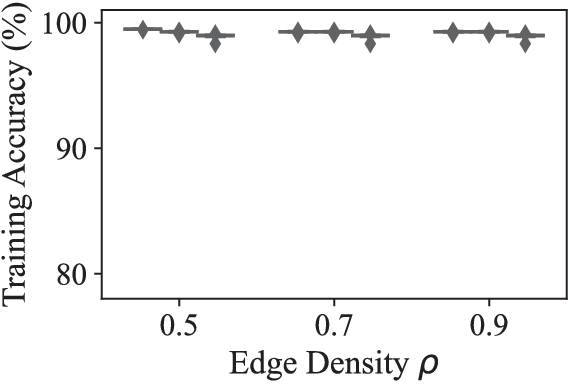}
		\end{minipage}
	}
 \subfigure[C-FL w/ model subst.]
	{
		\begin{minipage}[t]{0.2\textwidth}
			\centering
				\includegraphics[width=3.9cm]{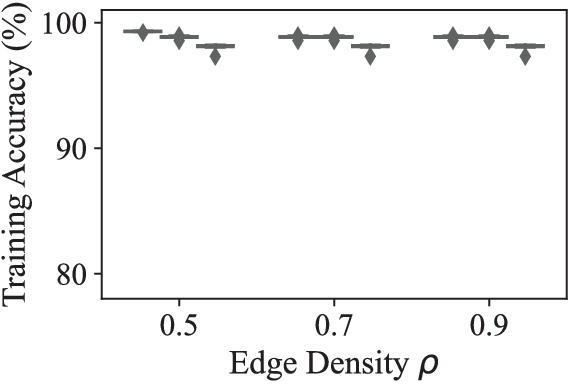}
		\end{minipage}
	}
\caption{Training accuracy vs. training rounds, where the RNN model and Shakespeare (i.i.d.) dataset are considered. C-FL selects the best-performing client to serve as the central aggregator.}\label{Convergence curves original Shakespeare}
\end{figure}
\begin{figure}[ht]
	\centering
	\subfigure[R\&A w/ coeff. norm.]
	{
		\begin{minipage}[t]{0.2\textwidth}
			\centering
				\includegraphics[width=3.9cm]{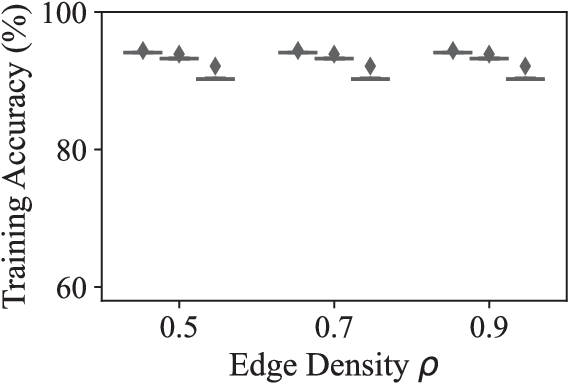}
		\end{minipage}
	}
	\subfigure[R\&A w/ model subst.]
	{
		\begin{minipage}[t]{0.2\textwidth}
			\centering
				\includegraphics[width=3.9cm]{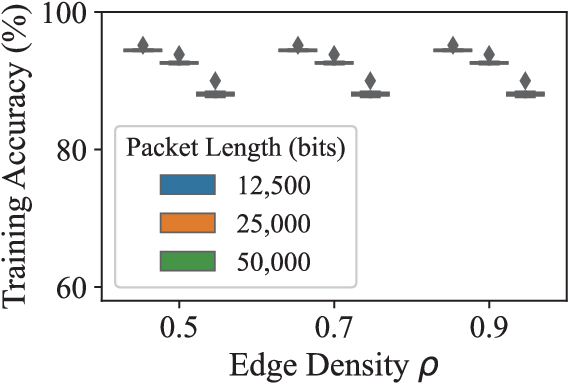}
		\end{minipage}
	} 
 \subfigure[AaYG w/ coeff. norm., $J$=1.]
	{
		\begin{minipage}[t]{0.2\textwidth}
			\centering
				\includegraphics[width=3.9cm]{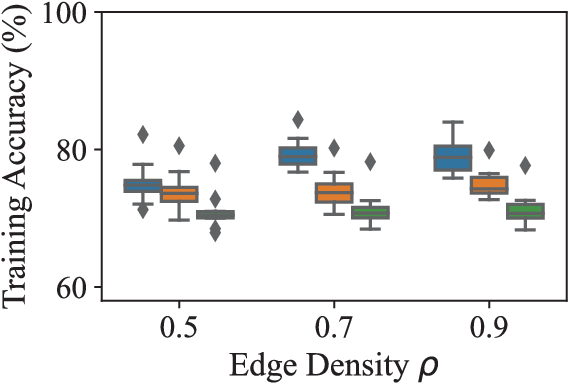}
		\end{minipage}
	}
 \subfigure[AaYG w/ model subst., $J$=1.]
	{
		\begin{minipage}[t]{0.2\textwidth}
			\centering
				\includegraphics[width=3.9cm]{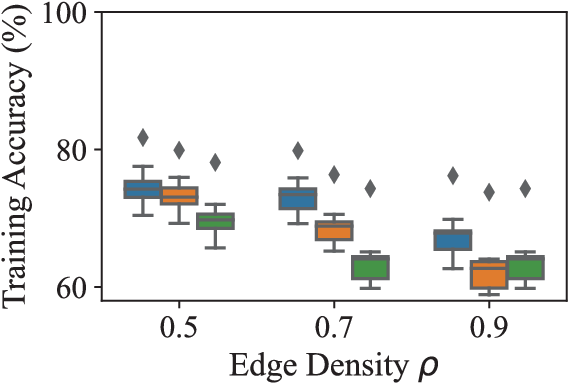}
		\end{minipage}
	}
 \subfigure[AaYG w/ coeff. norm., $J$=5.]
	{
		\begin{minipage}[t]{0.2\textwidth}
			\centering
				\includegraphics[width=3.9cm]{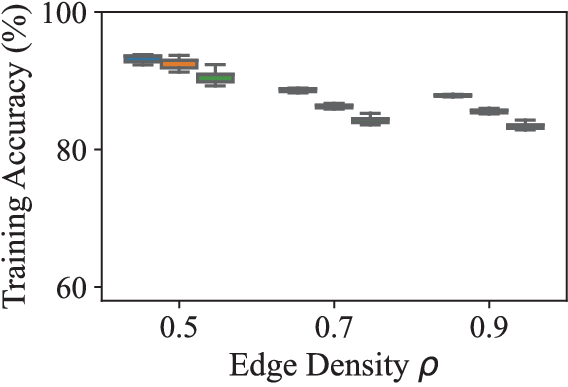}
		\end{minipage}
	}
 \subfigure[AaYG w/ model subst., $J$=5.]
	{
		\begin{minipage}[t]{0.2\textwidth}
			\centering
				\includegraphics[width=3.9cm]{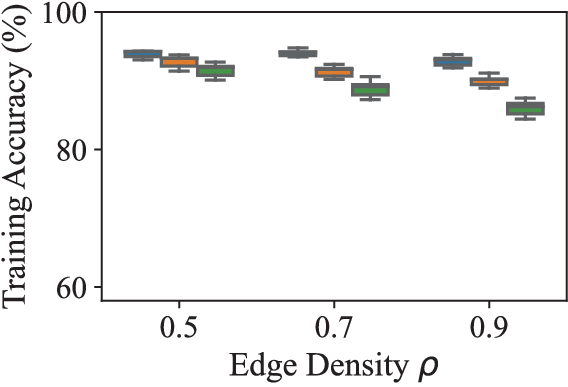}
		\end{minipage}
	}
 \subfigure[C-FL w/ coeff. norm.]
	{
		\begin{minipage}[t]{0.2\textwidth}
			\centering
				\includegraphics[width=3.9cm]{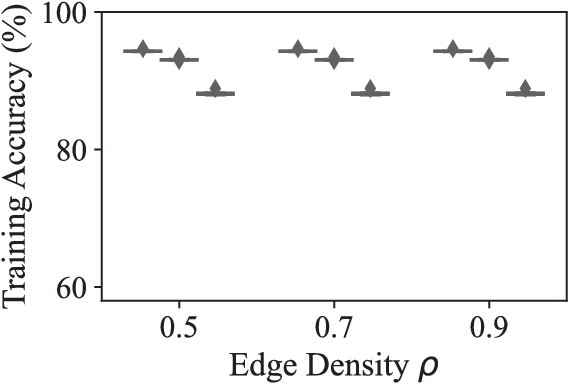}
		\end{minipage}
	}
 \subfigure[C-FL w/ model subst.]
	{
		\begin{minipage}[t]{0.2\textwidth}
			\centering
				\includegraphics[width=3.9cm]{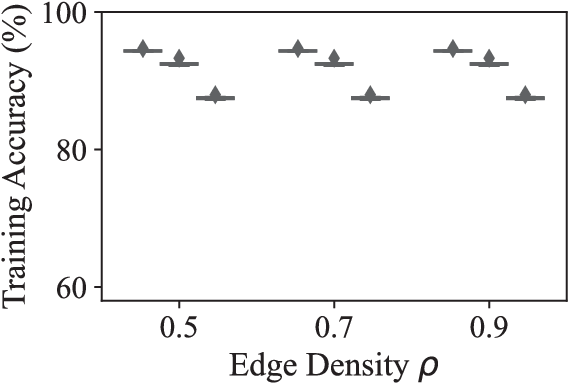}
		\end{minipage}
	}
\caption{Training accuracy vs. training rounds, where the RNN model and Shakespeare (non-i.i.d.) dataset are considered. C-FL selects the best-performing client to serve as the central aggregator.}\label{Convergence curves Shakespeare}
\end{figure}
\begin{figure}
    \centering
    \includegraphics[scale=0.5]{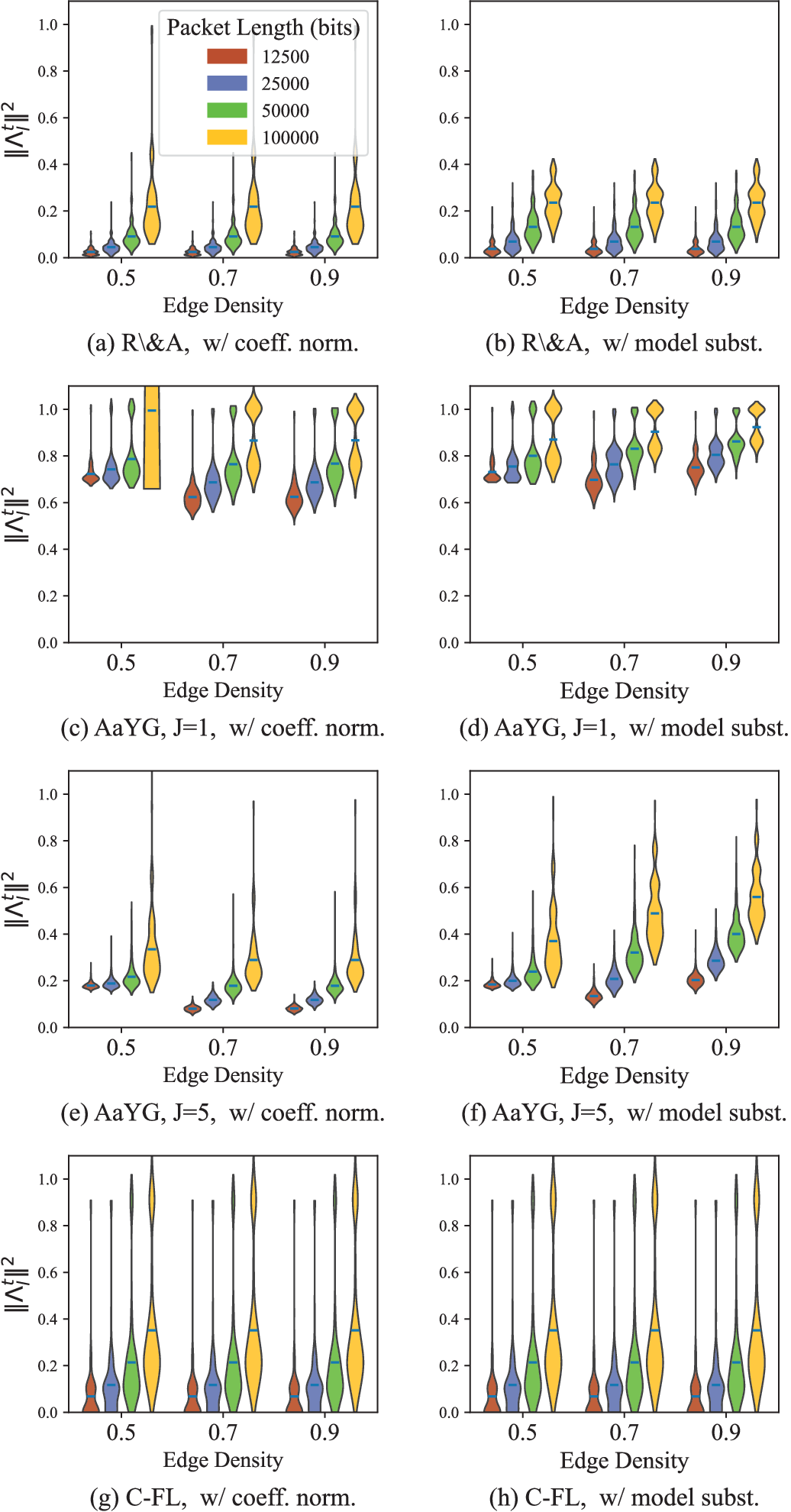}
    \caption{The distribution and means of  $\Vert\Lambda_l^t\Vert^2$ versus the edge density and packet length under different D-FL aggregation schemes.
    }
    \label{fig:2norm_lambda}
\end{figure}

Fig. \ref{fig:FMNIST_training_acc_curve} plots the training accuracy of the CNN model on the Fed-FashionMNIST dataset under different D-FL protocols, where the packet length is $25,000$ bits
and the number of FL rounds is $200$. 
Each solid line plots the average training accuracy of the ten clients under an FL protocol (i.e., R\&A D-FL, AaYG D-FL, and C-FL). Its corresponding light-color shadow presents the distribution of the individual training accuracy for the ten clients. 

It is observed in Fig. \ref{fig:FMNIST_training_acc_curve}  that the proposed R\&A D-FL with adaptive aggregation coefficient normalization can substantially outperform the rest of the schemes in both convergence and accuracy. Not only is the average accuracy of the proposed scheme is higher, e.g., by about 3\% (89\% vs. 86\% under C-FL with adaptive aggregation coefficient normalization), but the clients tend to achieve more consistent training accuracy (with a smaller shadow area) than the other schemes.
It is also observed in Fig. \ref{fig:FMNIST_training_acc_curve} that the use of adaptive aggregation coefficient normalization contributes to the training accuracy and convergence under R\&A D-FL, compared to the model substitution. This is because, 
under model substitution, the local aggregations of a distant and poorly connected client are likely to be dominated by its own local model, and therefore,
the model consistency is penalized.

Figs. \ref{Convergence curves fashionmnist} to~\ref{Convergence curves Shakespeare} plot the training accuracy upon convergence for the considered image classification and next character prediction tasks, where different packet lengths and edge densities are considered. The training accuracy plotted is the average of the last ten FL rounds under each considered setting. The communication overheads of the considered FL protocols are compared in Table~\ref{tab:communication }. 
It is observed in Figs.~\ref{Convergence curves fashionmnist} to \ref{Convergence curves Shakespeare} that R\&A D-FL with adaptive aggregation coefficient normalization is superior to the rest of the schemes in training accuracy (both the average training accuracy and the consistency between individual clients). 
A quantitative comparison of communication overhead  is provided in Table~\ref{tab:communication }.
Under different network densities, we summarize the the minimum number of time slots required for each ML model, and the associated total network traffic (in MBits) per round. 
The communication overhead is substantially lower under R\&A D-FL than under the existing AaYG D-FL (with $J=5$ per round). Although the AaYG D-FL consumes less communication overhead when $J<5$, its training accuracy is penalized.
Additionally, R\&A D-FL offers significantly better learning performance than C-FL despite its slightly higher communication overhead than that of C-FL, as shown in Figs.~\ref{Convergence curves fashionmnist} to \ref{Convergence curves Shakespeare}. 

It is noted in Figs. \ref{Convergence curves fashionmnist} to \ref{Convergence curves Shakespeare} that adaptive aggregation coefficient normalization suffers substantially less from the inconsistency among the models trained and aggregated in a distributed manner, i.e., in the two D-FL protocols, compared to model substitution. Specifically, one or a few clients can produce significantly different training accuracy from the rest of the clients, when model substitution is adopted. The reason is that a client far from the rest has its local aggregations dominated by its own local model, thus distancing itself from the other clients upon convergence, as also noticed in Fig.~\ref{fig:FMNIST_training_acc_curve}.

Fig. \ref{fig:2norm_lambda} plots the distribution and mean of the $2$-norm of the aggregation bias matrix, i.e., $\Vert\Lambda_l^t\Vert^2$, under different D-FL protocols. Here, $\Vert\Lambda_l^t\Vert^2$ captures the aggregation bias between the ideal global model and the imperfectly aggregated local models. $\Vert\Lambda_l^t\Vert^2$ is inherently random since the channels are imperfect. The training accuracy in Figs. \ref{Convergence curves fashionmnist} to \ref{Convergence curves Shakespeare} are consistent with the means of $\Vert\Lambda_l^t\Vert$ in Fig. \ref{fig:2norm_lambda}. Moreover, a smaller mean of $\Vert\Lambda_l^t\Vert^2$ results in better training accuracy and smaller differences in training accuracy among different clients.

\subsubsection{Impact of routing-only nodes}
We evaluate R\&A D-FL in the scenario where there are routing nodes in addition to the ten participating FL clients in the network. The scale of the network is expanded twice, both horizontally and vertically. The routing nodes only relay the local models and do not participate in model training. 
Fig. \ref{fig:forwarding_only} plots the training accuracy of the Fed-fashionMNIST dataset and CNN model under different numbers of routing-only nodes. It is observed that adding more routing nodes can improve the training accuracy and model consistency because the increasing number of routing nodes helps improve the routing diversity of the networks, improving the E2E-PERs. 
When the density of routing nodes is high enough, the E2E-PERs of identified paths can be negligible. As a result, R\&A D-FL and C-FL can provide similar training accuracy, both approaching C-FL obtained with error-free delivery of local and global models. 

\begin{figure}[t]
    \centering
    \includegraphics[scale=0.5]{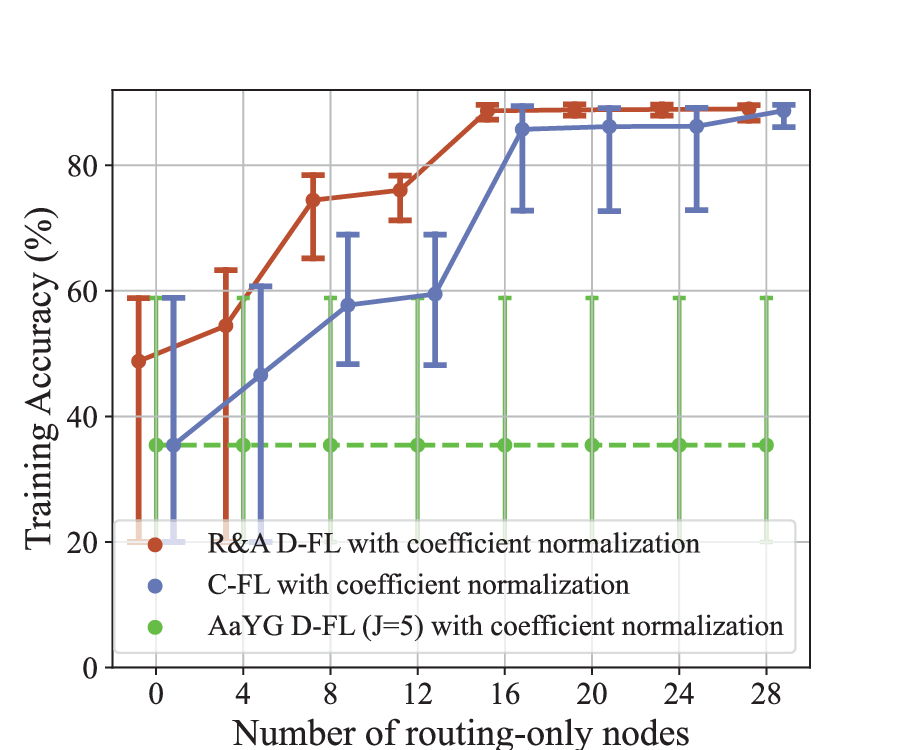}
    \caption{The training accuracy vs. the number of routing-only nodes, where the Fed-fashionMNIST dataset and CNN model are considered. R\&A D-FL is compared with C-FL since AaYG D-FL is unaffected by the number of routing nodes. }
    \label{fig:forwarding_only}
\end{figure}

\subsubsection{Impact of network topology}
We examine the impact of network topology on the local aggregation of R\&A D-FL. This is done by generating 10,000 independent realizations of the instantaneous reception qualities and subsequent 10,000 sets of aggregation coefficients in the aforementioned 10-node network, and plotting the distributions of the aggregation coefficients at every client. 
Fig. \ref{Fig_coefficient_Distribusion} plots the statistical distributions of the aggregation coefficients at each of the ten clients. It is noticed that the aggregation coefficient of a client's model at another client depends heavily on the E2E-PER between the two clients. The larger the E2E-PER, the more dramatically the aggregation coefficient varies.
Moreover, the higher E2E-PERs a client has compared to the other clients (in other words, the client is distant from the rest of the clients, e.g., node 5), the larger aggregation coefficient the client applies to its own local model. The contributions of the others' models diminish. 

\begin{figure}[t]
    \centering
    \includegraphics[width=8cm]{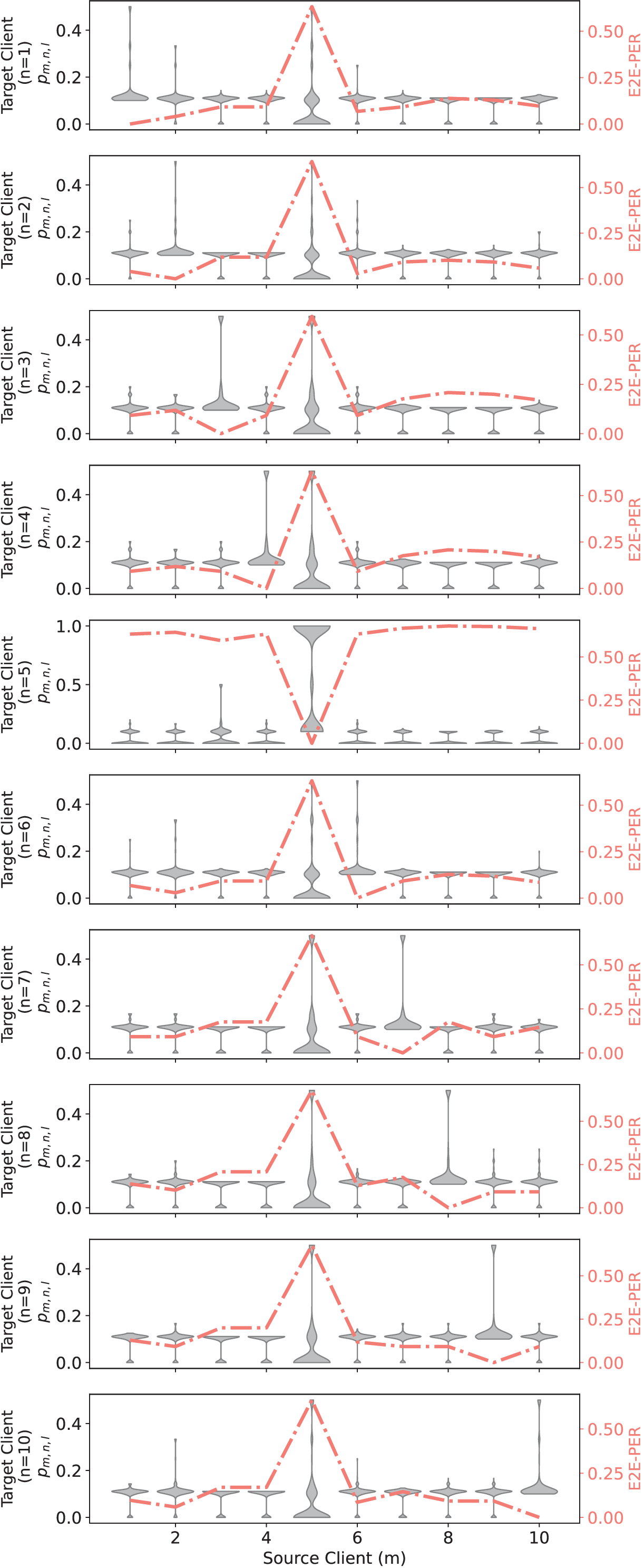}
    \caption{Distribution of the aggregation coefficients, i.e., $p_{m,n,l}=\frac{p_{m}\mathbf{e}_{m,n,l}^{t-1}}{\underset{ m' \in \mathcal{V}}{\sum}p_{m'}\mathbf{e}_{m',n,l}^{t-1}}$; and the E2E-PER, i.e., $1\!-\!\varrho_{m,n}^{t-1}$, from the source client $m$ to the target client $n$ with different packet lengths.}
    \label{Fig_coefficient_Distribusion}
\end{figure}

 \section{Conclusion}\label{section:con}
We have proposed an approach for distributed federated learning (D-FL), namely, R\&A D-FL, which allows D-FL clients to exchange their models with peers and normalize their aggregation coefficients to alleviate the adverse effect of model delivery errors in a peer-to-peer (or device-to-device) communication scenario. 
We have analyzed the impact of the routing strategy and imperfect communication links on the convergence of R\&A D-FL, finding that the optimal routing strategy is to minimize the E2E-PERs between every pair of clients.
Experiments using a CNN and ResNet18 on the Fed-fashionMNIST and Fed-CIFAR100 datasets have validated our analysis and findings. R\&A D-FL surpasses AaYG D-FL in training accuracy by 35\% in the considered network with ten clients, establishing strong synergy between D-FL and networking.
In another test with ten D-FL clients, the training accuracy of R\&A D-FL with transmission errors increasingly approaches that of the ideal C-FL without transmission errors, as the number of routing nodes increases to~28.
In future work, we plan to consider the integration of differential privacy into R\&A D-FL and analyze its impact on the convergence and robustness of D-FL systems under imperfect communications.

\appendix
\subsection{Proof of \textbf{Lemma \ref{theo1}}}\label{appendix:theorem proof}
Let $\mathbf{g}^{t}_{n,i}=\nabla F_{n}\left(\boldsymbol{\omega}^{t}_{n,i}\right)$ denote the gradient of client $n$ at the $i$-th epoch of the $t$-th training round of D-FL. \eqref{epoch_imperfect} can be rewritten as $\boldsymbol{\omega}^{t}_{n,i}=\boldsymbol{\omega}^{t}_{n,i-1}-\eta\mathbf{g}^{t}_{n,i-1}$.
 At the end of the $i$-th ($i=0,1,\cdots,I$) epoch,  
the virtually aggregated gradient is $\bar{\mathbf{g}}^{t}_{i}={\sum}_{ n \in \mathcal{V}}p_{n}\mathbf{g}^{t}_{n,i}$. 
The virtually aggregated global model is $\bar{\boldsymbol{\omega}}^{t}_{i}={\sum}_{ n \in \mathcal{V}}p_{n}\boldsymbol{\omega}^{t}_{n,i}$, as given by
\begin{align}
\!\!\!\!\bar{\boldsymbol{\omega}}^{t}_{i}\!\!=\!\!{\sum}_{ n \in \mathcal{V}}p_{n}\boldsymbol{\omega}^{t}_{n,i\!-\!1}\!\!-\!\!{\sum}_{ n \in \mathcal{V}}p_{n}\eta\mathbf{g}^{t}_{n,i\!-\!1}\!\!=\!\!\bar{\boldsymbol{\omega}}^{t}_{i\!-\!1}\!\!-\!\!\eta \bar{\mathbf{g}}^{t}_{i\!-\!1}.
\label{imprecise x}
\end{align}
The Euclidean distance between
$\bar{\boldsymbol{\omega}}^{t}_{i}$
and $\mathbf{w}^*$ is denoted by
\begin{align}
\Delta\boldsymbol{\omega}^{t}_{i}=\left\Vert \bar{\boldsymbol{\omega}}^{t}_{i}-\mathbf{w}^{*}\right\Vert ^{2},  \, i=0,1,2,\cdots,I.\label{eq:xre}
\end{align}

\begin{lemma}\label{appedix_lemma3}
		 The weighted Euclidean distance between $\bar{\boldsymbol{\omega}}^{t}_i$ and $\boldsymbol{\omega}_{n,i}^{t}, \forall n$ is upper bounded by
		{\small\begin{align}
		\notag &\!\!\!\!{\sum}_{ n \in \mathcal{V}}\!p_{n}\!\Vert \bar{\boldsymbol{\omega}}_{i}^{t}-\boldsymbol{\omega}_{n,i}^{t}\Vert ^{2} \leq \frac{\big((\!1\!+\!\eta)\!(\!1\!+\!4L^{2}\eta)\big)^{i\!}\!-\!1}{4L^{2}\eta+4L^{2}+1}2(\!1\!+\!\eta)\bar{\sigma}^{2}\\&\!\!\!\!+\!\!\left(\!(\!1\!\!+\!\!\eta)\!(\!1\!\!+\!\!4L^{2}\eta)\!\right)^{\!i}\!\!\Big[\sum_{ n\in\mathcal{V}}\!p_{n}\!\Vert\mathbf{w}_{n}^{t\!-\!1}\!\!-\!\!\boldsymbol{\bar{\omega}}^{t\!-\!1}\Vert^{2}\!\!\!-\!\!\big\Vert\!\!\sum_{ n\in\mathcal{V}}\!p_{n}\!\mathbf{w}_{n}^{t\!-\!1}\!\!-\!\!\boldsymbol{\bar{\omega}}^{t\!-\!1}\!\big\Vert^{2}\!\Big]\!.\label{var}%
		\end{align}%
  }%
	\end{lemma}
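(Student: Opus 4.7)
\textbf{Proof sketch of Lemma~\ref{appedix_lemma3}.} The plan is to introduce the drift quantity $V_i \triangleq \sum_{n\in\mathcal{V}} p_n \|\bar{\boldsymbol{\omega}}_i^t - \boldsymbol{\omega}_{n,i}^t\|^2$, derive a one-epoch linear recursion of the form $V_i \leq \alpha V_{i-1} + \beta \bar{\sigma}^2$ with $\alpha = (1+\eta)(1+4L^2\eta)$ and $\beta = 2\eta(1+\eta)$, and then unroll the recursion from $i=0$ to $i$ using the closed-form geometric sum.

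First, I would combine the local update rule \eqref{epoch_imperfect} with the virtual-aggregation identity \eqref{imprecise x} to write $\bar{\boldsymbol{\omega}}_i^t - \boldsymbol{\omega}_{n,i}^t = (\bar{\boldsymbol{\omega}}_{i-1}^t - \boldsymbol{\omega}_{n,i-1}^t) + \eta(\mathbf{g}_{n,i-1}^t - \bar{\mathbf{g}}_{i-1}^t)$, then apply Young's inequality $\|a+b\|^2 \leq (1+\eta)\|a\|^2 + (1+1/\eta)\|b\|^2$ with parameter $\eta$ (the same as the learning rate, chosen to produce the $(1+\eta)$ factor appearing in the lemma) and take a $p_n$-weighted sum. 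This yields $V_i \leq (1+\eta) V_{i-1} + \eta(1+\eta) \sum_n p_n \|\mathbf{g}_{n,i-1}^t - \bar{\mathbf{g}}_{i-1}^t\|^2$.

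Second, I would bound the client-gradient-dispersion term by inserting the anchor gradients $\nabla F_n(\bar{\boldsymbol{\omega}}_{i-1}^t)$ and $\nabla F(\bar{\boldsymbol{\omega}}_{i-1}^t)$. The $L$-smoothness part of Assumption~\ref{assumption} controls both $\|\nabla F_n(\boldsymbol{\omega}_{n,i-1}^t) - \nabla F_n(\bar{\boldsymbol{\omega}}_{i-1}^t)\|^2 \leq L^2 \|\boldsymbol{\omega}_{n,i-1}^t - \bar{\boldsymbol{\omega}}_{i-1}^t\|^2$ and, via Jensen's inequality, $\|\bar{\mathbf{g}}_{i-1}^t - \nabla F(\bar{\boldsymbol{\omega}}_{i-1}^t)\|^2 \leq L^2 V_{i-1}$; the bounded-divergence part of Assumption~\ref{assumption} controls $\|\nabla F_n(\bar{\boldsymbol{\omega}}_{i-1}^t) - \nabla F(\bar{\boldsymbol{\omega}}_{i-1}^t)\|^2 \leq \sigma_n^2$. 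After applying Young's inequality on the three-way split with carefully chosen parameters, weighting by $p_n$, and using $\bar{\sigma}^2 = \sum_n p_n \sigma_n^2$, I would arrive at $\sum_n p_n \|\mathbf{g}_{n,i-1}^t - \bar{\mathbf{g}}_{i-1}^t\|^2 \leq 4L^2 V_{i-1} + 2\bar{\sigma}^2$, which gives the one-epoch recursion $V_i \leq (1+\eta)(1+4L^2\eta) V_{i-1} + 2\eta(1+\eta)\bar{\sigma}^2$.

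Third, unrolling yields $V_i \leq \alpha^i V_0 + 2\eta(1+\eta)\bar{\sigma}^2 \cdot (\alpha^i - 1)/(\alpha - 1)$, and because $\alpha - 1 = \eta(4L^2\eta + 4L^2 + 1)$, the $\eta$ cancels to produce exactly the geometric-sum coefficient appearing on the right-hand side of \eqref{var}. Finally, I would identify the base term $V_0$ with the bracket in \eqref{var} by invoking the standard variance decomposition $\sum_n p_n \|a_n - c\|^2 - \|\bar a - c\|^2 = \sum_n p_n \|a_n - \bar a\|^2$ with $a_n = \mathbf{w}_n^{t-1}$ (since $\boldsymbol{\omega}_{n,0}^t = \mathbf{w}_n^{t-1}$), $\bar a = \sum_n p_n \mathbf{w}_n^{t-1} = \bar{\boldsymbol{\omega}}_0^t$, and $c = \bar{\boldsymbol{\omega}}^{t-1}$. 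The main obstacle I anticipate is the constant-chasing in the gradient-dispersion step: pinning down the precise $4L^2$ (rather than a larger multiple) requires judicious placement of the Young parameters so that the averaged term $\bar{\mathbf{g}}_{i-1}^t - \nabla F(\bar{\boldsymbol{\omega}}_{i-1}^t)$ and the per-client smoothness term share the allowance tightly; once that constant is in hand, the unrolling and base-case identification are routine.
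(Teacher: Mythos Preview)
Your proposal is correct and mirrors the paper's proof almost step for step: the same Young's split with parameter $\eta$ to get the $(1+\eta)$ factor, the same gradient-dispersion bound $\sum_n p_n\|\mathbf{g}_{n,i-1}^t-\bar{\mathbf{g}}_{i-1}^t\|^2\le 4L^2V_{i-1}+2\bar{\sigma}^2$, the same unrolling with $\alpha-1=\eta(4L^2\eta+4L^2+1)$, and the same variance identity for $V_0$. The one place where the paper is slightly sharper than your sketch is the constant-chasing you flag: rather than a three-way Young's split, the paper first splits at $\nabla F_n(\bar{\boldsymbol{\omega}}_{i-1}^t)$ with the factor-$2$ inequality, and then for $\sum_n p_n\|\nabla F_n(\bar{\boldsymbol{\omega}}_{i-1}^t)-\bar{\mathbf{g}}_{i-1}^t\|^2$ expands the square \emph{exactly} around $\nabla F(\bar{\boldsymbol{\omega}}_{i-1}^t)$ so that the cross term vanishes (since $\sum_n p_n\nabla F_n=\nabla F$), which is how the $4L^2$ and $2$ fall out cleanly.
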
%
	\begin{proof}%
	    See \textbf{Appendix \ref{proof:lemma5}}.
	\end{proof}
\begin{lemma}\label{appedix_lemma2}An upper bound on $\Delta\boldsymbol{\omega}^{t}_{i}$ is given as follows.%
\begin{subequations}\small\label{delta_i}
    \begin{align}
 &\!\!\!\!\!\text{When }i=1,\notag\\
&\notag\!\!\!\!\!\Delta\boldsymbol{\omega}_{1}^{t}\!\!\leq\!(1+\tau_{\varrho})\!\left(\!1\!-\!2\mu\eta+\eta^{2}L^{2}\right)\!\Delta\boldsymbol{\omega}_{I}^{t\!-\!1}\\
&\!\!\!\!+\!\!(1\!\!+\!\!\frac{1}{\tau_{\varrho}}\!)\!(1\!\!+\!\!\eta L)\!\!\Big[\!\big\Vert\!\!\sum_{n\in\mathcal{V}}\!p_{n}\!\mathbf{w}_{n}^{t\!-\!1}\!\!-\!\!\boldsymbol{\bar{\omega}}_{I}^{t\!-\!1}\!\big\Vert^{2}\!\!\!+\!\!\eta L\!\!\sum_{n\in\mathcal{V}}\!p_{n}\!\!\left\Vert \mathbf{w}_{n}^{t\!-\!1}\!\!\!-\!\!\boldsymbol{\bar{\omega}}_{I}^{t\!-\!1}\!\right\Vert^{2}\!\Big]\!;\label{delta_{1}}\\
\notag
&\!\!\!\!\!\text{When } i=2,3,\cdots,I,\\
\notag
&\!\!\!\!\!\Delta\boldsymbol{\omega}_{i}^{t}\!\!\leq\!\!\big(\!1\!\!-\!\!\frac{3\mu\eta}{2}\!\!+\!\!2L\mu\eta^2\big)\!\Delta\boldsymbol{\omega}_{i\!-\!1}^{t}\!\!+\!\!(\!2\eta^{2}L^{2}\!\!+\!\!(L\!\!+\!\!u)\eta)\!\left(\!(\!1\!\!+\!\!\eta)\!(\!1\!\!+\!\!4L^{2}\!\eta)\!\right)^{i}\\\notag&\quad\quad\times\Big(\!{\sum}_{ n\in\mathcal{V}}p_{n}\!\Vert\!\mathbf{w}_{n}^{t\!-\!1}\!\!-\!\boldsymbol{\bar{\omega}}_{I}^{t\!-\!1}\!\Vert^{2}\!-\!\big\Vert\!{\sum}_{ m\in\mathcal{V}}p_{m}\!\mathbf{w}_{n}^{t\!-\!1}\!\!-\!\boldsymbol{\bar{\omega}}_{I}^{t\!-\!1}\!\big\Vert^{2}\!\Big)\!\\
&\!\!\!\!+(\!2\eta^{2}L^{2}\!\!+\!\!(L\!\!+\!\!u)\eta)\frac{\big((\!1\!+\!\eta)\!(\!1\!+\!4L^{2}\eta)\big)^{i\!}\!-\!1}{4L^{2}\eta+4L^{2}+1}2(\!1\!+\!\eta)\bar{\sigma}^{2}.\label{delta_ib}%
\end{align}%
\end{subequations}%
	\end{lemma}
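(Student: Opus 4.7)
The plan is to establish the recursion in two cases, since the first epoch of a training round inherits cross-round aggregation bias due to communication errors while subsequent epochs only involve within-round client drift. Throughout I will rely on the update rule $\bar{\boldsymbol{\omega}}^{t}_{i}=\bar{\boldsymbol{\omega}}^{t}_{i-1}-\eta\bar{\mathbf{g}}^{t}_{i-1}$ from \eqref{imprecise x}, the identity $\bar{\boldsymbol{\omega}}_{0}^{t}=\sum_{n}p_{n}\mathbf{w}_{n}^{t-1}$ contrasted with the virtual target $\bar{\boldsymbol{\omega}}_{I}^{t-1}=\sum_{n}p_{n}\boldsymbol{\omega}_{n,I}^{t-1}$, and the ingredients of \textbf{Assumption~\ref{assumption}}: $\mu$-strong convexity, $L$-smoothness, the learning rate condition $\eta<1/(2L)$, and the divergence bound $\bar{\sigma}^{2}$.

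For the $i=1$ case I would split the residual as $\bar{\boldsymbol{\omega}}_{1}^{t}-\mathbf{w}^{*}=(\bar{\boldsymbol{\omega}}_{I}^{t-1}-\eta\bar{\mathbf{g}}_{0}^{t}-\mathbf{w}^{*})+(\bar{\boldsymbol{\omega}}_{0}^{t}-\bar{\boldsymbol{\omega}}_{I}^{t-1})$, isolating an ``ideal descent'' piece and a ``communication bias'' piece. Young's inequality with parameter $\tau_{\varrho}$ then produces the prefactors $(1+\tau_{\varrho})$ and $(1+1/\tau_{\varrho})$. The ideal-descent squared norm reduces via the strong-convexity bound $2\langle\nabla F(\bar{\boldsymbol{\omega}}_{I}^{t-1}),\bar{\boldsymbol{\omega}}_{I}^{t-1}-\mathbf{w}^{*}\rangle\ge 2\mu\|\bar{\boldsymbol{\omega}}_{I}^{t-1}-\mathbf{w}^{*}\|^{2}$ and $L$-smoothness on $\|\bar{\mathbf{g}}_{0}^{t}\|^{2}$, yielding the contraction $(1-2\mu\eta+\eta^{2}L^{2})\Delta\boldsymbol{\omega}_{I}^{t-1}$. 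The bias squared norm is expanded once more by Young's inequality and then bounded via $\|\bar{\mathbf{g}}_{0}^{t}-\nabla F(\bar{\boldsymbol{\omega}}_{I}^{t-1})\|^{2}\le L^{2}\sum_{n}p_{n}\|\mathbf{w}_{n}^{t-1}-\bar{\boldsymbol{\omega}}_{I}^{t-1}\|^{2}$, producing the two summands $\|\sum_{n}p_{n}\mathbf{w}_{n}^{t-1}-\bar{\boldsymbol{\omega}}_{I}^{t-1}\|^{2}$ and $\eta L\sum_{n}p_{n}\|\mathbf{w}_{n}^{t-1}-\bar{\boldsymbol{\omega}}_{I}^{t-1}\|^{2}$ shown in the $i=1$ bound.

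For $i=2,\ldots,I$ I would expand $\Delta\boldsymbol{\omega}_{i}^{t}=\|\bar{\boldsymbol{\omega}}_{i-1}^{t}-\eta\bar{\mathbf{g}}_{i-1}^{t}-\mathbf{w}^{*}\|^{2}$ and handle the cross inner product by adding and subtracting $\nabla F(\bar{\boldsymbol{\omega}}_{i-1}^{t})$. Combining $\mu$-strong convexity on $F$, $L$-smoothness of each $F_{n}$ applied to $\nabla F_{n}(\boldsymbol{\omega}_{n,i-1}^{t})-\nabla F_{n}(\bar{\boldsymbol{\omega}}_{i-1}^{t})$, and Jensen's inequality on $\|\bar{\mathbf{g}}_{i-1}^{t}\|^{2}\le\sum_{n}p_{n}\|\mathbf{g}_{n,i-1}^{t}\|^{2}$ delivers the contraction $(1-\tfrac{3\mu\eta}{2}+2L\mu\eta^{2})\Delta\boldsymbol{\omega}_{i-1}^{t}$ together with an additive drift term $(2\eta^{2}L^{2}+(L+\mu)\eta)\sum_{n}p_{n}\|\boldsymbol{\omega}_{n,i-1}^{t}-\bar{\boldsymbol{\omega}}_{i-1}^{t}\|^{2}$. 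Invoking \textbf{Lemma~\ref{appedix_lemma3}} at index $i-1$ upgrades this drift term into the two remaining contributions of the $i\ge 2$ bound: the geometric factor $((1+\eta)(1+4L^{2}\eta))^{i}$ multiplying the cross-round bias pair, and the $\bar{\sigma}^{2}$ contribution with prefactor $\frac{((1+\eta)(1+4L^{2}\eta))^{i}-1}{4L^{2}\eta+4L^{2}+1}\cdot 2(1+\eta)$ inherited directly from Lemma~\ref{appedix_lemma3}.

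The main obstacle is careful bookkeeping of constants: after these expansions the bound contains roughly half a dozen cross-terms, and the learning-rate condition $0<\eta<1/(2L)$ must be used to absorb higher-order $O(\eta^{2})$ and $O(\eta^{3})$ contributions into the contraction factor $1-\tfrac{3\mu\eta}{2}+2L\mu\eta^{2}$ rather than allowing them to expand it. A secondary subtlety is that the Young parameter $\tau_{\varrho}$ in the $i=1$ case must remain free (it is tuned later in \textbf{Lemma~\ref{theo1}}), so it cannot be optimized locally. Finally, Lemma~\ref{appedix_lemma3} must be invoked at index $i-1$ rather than $i$, and it is the subsequent gradient step that promotes the exponent to $i$ in the final bound; getting this off-by-one correct is the most error-prone part of the accounting.
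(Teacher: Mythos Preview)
Your overall strategy—treat $i=1$ separately to isolate the cross-round aggregation bias, then run a one-step descent-plus-drift analysis for $i\ge 2$ and close with Lemma~\ref{appedix_lemma3}—matches the paper. But the $i=1$ decomposition you wrote is internally inconsistent, and the flaw is consequential.

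You split $\bar{\boldsymbol{\omega}}_{1}^{t}-\mathbf{w}^{*}=(\bar{\boldsymbol{\omega}}_{I}^{t-1}-\eta\bar{\mathbf{g}}_{0}^{t}-\mathbf{w}^{*})+(\bar{\boldsymbol{\omega}}_{0}^{t}-\bar{\boldsymbol{\omega}}_{I}^{t-1})$ and then claim the first piece collapses to $(1-2\mu\eta+\eta^{2}L^{2})\Delta\boldsymbol{\omega}_{I}^{t-1}$ via strong convexity at $\bar{\boldsymbol{\omega}}_{I}^{t-1}$. That step does not go through: $\bar{\mathbf{g}}_{0}^{t}=\sum_{n}p_{n}\nabla F_{n}(\mathbf{w}_{n}^{t-1})$ is the averaged gradient at the \emph{imperfectly aggregated} points, not $\nabla F(\bar{\boldsymbol{\omega}}_{I}^{t-1})$, so the cross term $-2\eta\langle\bar{\boldsymbol{\omega}}_{I}^{t-1}-\mathbf{w}^{*},\bar{\mathbf{g}}_{0}^{t}\rangle$ carries a residual $-2\eta\langle\bar{\boldsymbol{\omega}}_{I}^{t-1}-\mathbf{w}^{*},\bar{\mathbf{g}}_{0}^{t}-\nabla F(\bar{\boldsymbol{\omega}}_{I}^{t-1})\rangle$ that cannot be dropped; any Young/Cauchy treatment of it inserts an extra multiplicative factor in front of $\Delta\boldsymbol{\omega}_{I}^{t-1}$, spoiling the clean $(1-2\mu\eta+\eta^{2}L^{2})$. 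Correspondingly, your ``bias piece'' $\bar{\boldsymbol{\omega}}_{0}^{t}-\bar{\boldsymbol{\omega}}_{I}^{t-1}$ contains no gradient at all, so there is nothing there for the bound $\|\bar{\mathbf{g}}_{0}^{t}-\nabla F(\bar{\boldsymbol{\omega}}_{I}^{t-1})\|^{2}\le L^{2}\sum_{n}p_{n}\|\mathbf{w}_{n}^{t-1}-\bar{\boldsymbol{\omega}}_{I}^{t-1}\|^{2}$ to act on. The paper fixes this by introducing the auxiliary point $\bar{\boldsymbol{\varphi}}_{1}^{t}=\bar{\boldsymbol{\omega}}_{I}^{t-1}-\eta\nabla F(\bar{\boldsymbol{\omega}}_{I}^{t-1})$ and splitting as $(\bar{\boldsymbol{\varphi}}_{1}^{t}-\mathbf{w}^{*})+(\bar{\boldsymbol{\omega}}_{1}^{t}-\bar{\boldsymbol{\varphi}}_{1}^{t})$: the first term is now a genuine gradient step from $\bar{\boldsymbol{\omega}}_{I}^{t-1}$ and yields the contraction factor cleanly, while the second term equals $\sum_{n}p_{n}[(\mathbf{w}_{n}^{t-1}-\bar{\boldsymbol{\omega}}_{I}^{t-1})-\eta(\mathbf{g}_{n,0}^{t}-\nabla F_{n}(\bar{\boldsymbol{\omega}}_{I}^{t-1}))]$ and carries \emph{both} the position bias and the gradient mismatch, so that a Young split with parameter $\eta L$ produces exactly the two summands in \eqref{delta_{1}}. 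Your description reads as if you intended this decomposition but wrote $\bar{\mathbf{g}}_{0}^{t}$ where $\nabla F(\bar{\boldsymbol{\omega}}_{I}^{t-1})$ should be; if so the fix is one symbol, but as written the argument does not close.

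For $i\ge 2$ your plan is essentially the paper's, though the paper's bookkeeping differs in detail: it decomposes the cross term as $\langle\bar{\boldsymbol{\omega}}_{i-1}^{t}-\mathbf{w}^{*},\mathbf{g}_{n,i-1}^{t}\rangle=\langle\bar{\boldsymbol{\omega}}_{i-1}^{t}-\boldsymbol{\omega}_{n,i-1}^{t},\mathbf{g}_{n,i-1}^{t}\rangle+\langle\boldsymbol{\omega}_{n,i-1}^{t}-\mathbf{w}^{*},\mathbf{g}_{n,i-1}^{t}\rangle$ and controls each piece with the function-value forms of $L$-smoothness and $\mu$-convexity for $F_{n}$, producing an intermediate $(4L\eta^{2}-2\eta)(F(\bar{\boldsymbol{\omega}}_{i-1}^{t})-F^{*})$ term; the sign $4L\eta^{2}-2\eta<0$ (from $\eta<1/(2L)$) together with $F(\bar{\boldsymbol{\omega}}_{i-1}^{t})-F^{*}\ge(\mu/2)\Delta\boldsymbol{\omega}_{i-1}^{t}$ is what converts $(1-\mu\eta/2)$ into $(1-3\mu\eta/2+2L\mu\eta^{2})$. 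Finally, your remark that ``the subsequent gradient step promotes the exponent to $i$'' is not how the exponent arises: Lemma~\ref{appedix_lemma3} applied at index $i-1$ gives exponent $i-1$, and the statement simply uses the looser exponent $i$ since $(1+\eta)(1+4L^{2}\eta)>1$; there is no promotion mechanism.
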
%
 \begin{proof}%
See \textbf{Appendix \ref{proof:lemma6}}.
 \end{proof}	
By combining \eqref{delta_i} over $i=1,\cdots,I$, it follows that\begin{small}
	    \begin{align}
		\notag\!\Delta\boldsymbol{\omega}_{I}^{t}\!\leq&(2\eta^{2}L^{2}\!+\!(L\!+\!u)\eta)\left(\!(\!1\!+\!\eta)\!(\!1\!+\!4L^{2}\eta)\!\right)^{I}\\\notag&\times\Big(\!{\sum}_{ n\in\mathcal{V}}\!\!p_{n}\!\Vert\!\mathbf{w}_{n}^{t\!-\!1}\!\!-\!\boldsymbol{\bar{\omega}}_{I}^{t\!-\!1}\!\Vert^{2}\!-\!\big\Vert\!{\sum}_{ m\in\mathcal{V}}\!p_{m}\!\mathbf{w}_{n}^{t\!-\!1}\!\!-\!\boldsymbol{\bar{\omega}}_{I}^{t\!-\!1}\!\big\Vert^{2}\!\Big)\!\\\notag
&+(\!2\eta^{2}L^{2}\!\!+\!\!(L\!\!+\!\!u)\eta)\frac{\big((\!1\!+\!\eta)\!(\!1\!+\!4L^{2}\eta)\big)^{I\!}\!-\!1}{4L^{2}\eta+4L^{2}+1}2(\!1\!+\!\eta)\bar{\sigma}^{2}\\&+\big(\!1\!\!-\!\!\frac{3\mu\eta}{2}\!+2L\mu\eta^2\big)\Delta\boldsymbol{\omega}_{I-1}^{t}\!\notag
  \\\notag\leq&\big(\!1\!\!-\!\!\frac{3\mu\eta}{2}\!+2L\mu\eta^{2}\big)^{I-1}\Delta\boldsymbol{\omega}_{1}^{t}\!+\zeta_2\bar{\sigma}^2\\\notag
&+\!(\!2\eta^{2}L^{2}\!+\!(L\!+\!u)\eta)\left((\!1\!+\!\eta)\!(\!1\!+\!4L^{2}\eta)\right)^{2}\\\notag&\times\frac{\left((\!1\!+\!\eta)\!(\!1\!+\!4L^{2}\eta)\right)^{I-1}-\left(\!1\!\!-\!\!\frac{3\mu\eta}{2}\!+2L\mu\eta^{2}\right)^{I-1}}{(\!1\!+\!\eta)\!(\!1\!+\!4L^{2}\eta)-(1\!\!-\!\!\frac{3\mu\eta}{2}\!+2L\mu\eta^{2})}\\\notag&\times\Big(\!\sum_{ n\in\mathcal{V}}\!p_{n}\Vert\mathbf{w}_{n}^{t\!-\!1}\!\!-\!\boldsymbol{\bar{\omega}}_{I}^{t\!-\!1}\Vert^{2}\!\!-\!\big\Vert\!\!\sum_{ n\in\mathcal{V}}\!p_{n}\!\mathbf{w}_{n}^{t\!-\!1}\!\!-\!\boldsymbol{\bar{\omega}}_{I}^{t\!-\!1}\big\Vert^{2}\Big)
  \\\notag\leq&\big(\!1\!\!-\!\!\frac{3\mu\eta}{2}\!\!+\!\!2L\mu\eta^{2}\big)^{I\!-\!1}(1\!\!+\!\!\tau_{\varrho})\!\left(\!1\!\!-\!\!2\mu\eta\!\!+\!\!\eta^{2}L^{2}\right)\!\Delta\boldsymbol{\omega}_{I}^{t\!-\!1}\!\!+\!\!\zeta_2\bar{\sigma}^2\\
  &\notag\!\!\!\!\!\!+\big(\!1\!\!-\!\!\frac{3\mu\eta}{2}\!+2L\mu\eta^{2}\big)^{I-1}\!(1\!+\!\frac{1}{\tau_{\varrho}}\!)(1\!\!+\!\!\eta L)\!\\\notag&\!\!\!\!\!\!\times\!\Big[\!\big\Vert\!\!{\sum}_{n\in\mathcal{V}}\!p_{n}\!\mathbf{w}_{n}^{t\!-\!1}\!\!-\!\!\boldsymbol{\bar{\omega}}_{I}^{t\!-\!1}\!\big\Vert^{2}\!\!\!+\!\!\eta L\!\!{\sum}_{n\in\mathcal{V}}\!p_{n}\!\!\left\Vert \mathbf{w}_{n}^{t\!-\!1}\!\!\!-\!\!\boldsymbol{\bar{\omega}}_{I}^{t\!-\!1}\!\right\Vert^{2}\!\Big]\!\\
  &\!\!\!\!\!\!\!+\!\!\zeta_4\Big(\!{\sum}_{ n\in\mathcal{V}}\!p_{n}\Vert\mathbf{w}_{n}^{t\!-\!1}\!\!-\!\boldsymbol{\bar{\omega}}_{I}^{t\!-\!1}\Vert^{2}\!\!-\!\big\Vert\!\!{\sum}_{ n\in\mathcal{V}}\!p_{n}\!\mathbf{w}_{n}^{t\!-\!1}\!\!-\!\boldsymbol{\bar{\omega}}_{I}^{t\!-\!1}\big\Vert^{2}\Big)\notag
		\end{align}
	\end{small}
which can be further reorganized into
  {\small\begin{align}
		\notag&\!\!\left\Vert  \boldsymbol{\bar{\omega}}_I^{t}\!\!-\!\!\mathbf{w}^{*}\right\Vert ^{2}\!\!\leq \! \zeta_1\!\left\Vert  \boldsymbol{\bar{\omega}}_I^{t\!-\!1}\!-\!\mathbf{w}^{*}\right\Vert ^{2}\!\!+\!\zeta_2\bar{\sigma}^2\\\notag
  &+\!\zeta_3 \Big[\Big\Vert\!{\sum}_{n\in\mathcal{V}}\!p_{n}\!(\mathbf{w}_{n}^{t\!-\!1}\!\!-\!\!\boldsymbol{\bar{\omega}}_{I}^{t\!-\!1})\!\Big\Vert^{2}\!\!\!+\!\!\eta L\!\!{\sum}_{n\in\mathcal{V}}\!p_{n}\!\!\left\Vert \mathbf{w}_{n}^{t\!-\!1}\!\!\!-\!\!\boldsymbol{\bar{\omega}}_{I}^{t\!-\!1}\!\right\Vert^{2}\Big]\\
		&+\!\zeta_4\Big(\!{\sum}_{ n\in\mathcal{V}}\!p_{n}\Vert\mathbf{w}_{n}^{t\!-\!1}\!\!-\!\boldsymbol{\bar{\omega}}_{I}^{t\!-\!1}\Vert^{2}\!\!-\!\Big\Vert{\sum}_{ n\in\mathcal{V}}\!p_{n}\!\mathbf{w}_{n}^{t\!-\!1}\!\!-\!\boldsymbol{\bar{\omega}}_{I}^{t\!-\!1}\Big\Vert^{2}\Big),\label{theorem1}%
\end{align}%
}%
since $\Delta\boldsymbol{\omega}^{t}_{I}=\left\Vert  \boldsymbol{\bar{\omega}}_I^{t}\!\!-\!\!\mathbf{w}^{*}\right\Vert^2$ according to the definitions in  \eqref{accurate_model},  \eqref{imprecise x}, and \eqref{eq:xre}.
Since expectation is a linear operation,~\eqref{theorem_expectation} can be achieved by replacing both sides of \eqref{theorem1} with their expectations. Although \textbf{Lemma \ref{theo1}} is established based on full-batch gradient descent, it can be readily extended to mini-batch stochastic gradient descent, since the only difference would be the local gradient divergence between the full gradient and the stochastic gradient, which is typically upper bounded and can be characterized deterministically~\cite{8455532}.

\subsection{Proof of \textbf{Lemma \ref{lemma1}}}\label{proof_lemma1}
The expectation of the square of $\lambda_{m,n,l}^{t-1}$ in the bias matrix $\Lambda_l^{t-1}$ is given in \eqref{bar_p},
\begin{figure*}
\begin{subequations}\label{bar_p}\small
\begin{align}\mathbb{E}\left((\lambda_{m,n,l}^{t\!-\!1})^2\right)\!\!
&=\!\Pr(e_{m,n,l}^{t\!-\!1}\!=\!1)\!\!\!\!\underset{\forall\mathcal{V}'\subseteq\mathcal{V}_{m}}{\sum}\!\!\!\!\!\!\Pr(e_{k,n,l}^{t\!-\!1}\!=\!1,\forall k\!\in\!\mathcal{V}')\Pr(e_{k',n,l}^{t\!-\!1}\!\!=\!\!0,\forall k'\in\mathcal{V}_{m}\backslash\mathcal{V}')\Big(\frac{p_{m}}{\underset{ \forall k\in\mathcal{V}'}{\sum}p_{k}\!\!+\!\!p_m}\!\!-\!\!p_m\Big)^2\!\!+\!\!\Pr(e_{m,n,l}^{t\!-\!1}\!=\!0)p_m^2\label{bar_p_a}\\
\label{bar_p_d}&\!\!\!\!\!\!\!\!=\!\varrho_{m,n}^{t-1}\underset{\forall\mathcal{V}'\subseteq\mathcal{V}_{m}}{\sum}\!\!\!\Big(\underset{\forall k\in\mathcal{V}'}{\prod}\!\!\varrho_{k,n}^{t-1}\!\!\underset{ \forall k'\in\mathcal{V}_{m}\backslash\mathcal{V}'}{\prod}\!\!\!\left(1\!-\!\varrho_{k',n}^{t-1}\right)\!\!\Big)\Bigg(\frac{p_{m}\underset{ \forall k'\in\mathcal{V}_{m}\backslash\mathcal{V}'}{\sum}p_{k'}}{\underset{ \forall k\in\mathcal{V}'}{\sum}p_{k}+p_m}\Bigg)^2+(1-\varrho_{m,n}^{t-1})p_m^2\\
\label{bar_p_e}&\!\!\!\!\!\!\!\!\leq\!\varrho_{m,n}^{t-1}\underset{\forall\mathcal{V}'\subseteq\mathcal{V}_{m}}{\sum}\!\!\!\Big(\underset{\forall k\in\mathcal{V}'}{\prod}\!\!\varrho_{k,n}^{t-1}\!\!\underset{\forall k'\in\mathcal{V}_{m}\backslash\mathcal{V}'}{\prod}\!\!\!\left(1\!-\!\varrho_{k',n}^{t-1}\right)\!\!\Big)\frac{p_{m}\underset{\forall k'\in\mathcal{V}_{m}\backslash\mathcal{V}'}{\sum}p_{k'}}{\underset{\forall k\in\mathcal{V}'}{\sum}p_{k}+p_m}+(1-\varrho_{m,n}^{t-1})p_m^2\\
\label{bar_p_f}   &\!\!\!\!\!\!\!\!=\varrho_{m,n}^{t-1}p_{m}\underset{\forall j\in\mathcal{V}_{m}}{\sum}p_{j}(1-\varrho_{j,n}^{t-1})\underset{\forall\mathcal{V}'\subseteq\mathcal{V}_{m}\backslash\{j\}}{\sum}\!\!\!\Big(\underset{\forall k\in\mathcal{V}'}{\prod}\!\!\varrho_{k,n}^{t-1}\!\!\underset{\forall k'\in\mathcal{V}_{m}\backslash\{j,\mathcal{V}'\}}{\prod}\!\!\!\left(1\!-\!\varrho_{k',n}^{t-1}\right)\!\!\Big)\frac{1}{\underset{\forall k\in\mathcal{V}'}{\sum}p_{k}+p_m}+(1-\varrho_{m,n}^{t-1})p_m^2\\
\label{bar_p_g}&\!\!\!\!\!\!\!\!=\underset{\forall j\in\mathcal{V}_{m}}{\sum}p_{j}(1-\varrho_{j,n}^{t-1})\!\!\!\!\underset{\forall\mathcal{V}'\subseteq\mathcal{V}_{j};m\in\mathcal{V}'}{\sum}\bigg[\frac{p_{m}}{\underset{\forall k\in\mathcal{V}'}{\sum}p_{k}}\!\underset{\forall k\in\mathcal{V}'}{\prod}\!\varrho_{k,n}^{t-1}\underset{\forall k'\in\mathcal{V}_{j}\backslash\mathcal{V}'}{\prod}\!\!\!\left(1\!-\!\varrho_{k',n}^{t-1}\right)\!\bigg]+(1-\varrho_{m,n}^{t-1})p_m^2
\end{align}
\end{subequations}\hrulefill
\end{figure*}
where \eqref{bar_p_a} accounts for all possible scenarios under the condition that client $m$ transmits error-free or erroneous packets to client $n$; $\mathcal{V}_{m}\triangleq \mathcal{V}\backslash \{m\},\, \forall m \in \mathcal{V}$;
\eqref{bar_p_d} is obtained since
\begin{subequations}\small
    \begin{align}
       & \Pr(e_{m,n,l}^{t-1}=1)=\varrho_{m,n}^{t-1};\label{eq:bar_b_a}\\
       &\underset{ k\in\mathcal{V}}{\sum}p_{k}=\underset{ k\in\mathcal{V}'}{\sum}p_{k}+p_m+\underset{ k'\in\mathcal{V}_{m}\backslash\mathcal{V}'}{\sum}p_{k'}=1, \forall \mathcal{V}'\subseteq\mathcal{V}_m;\label{eq:bar_b_b}\\
   &\Pr(e_{k,n,l}^{t-1}\!\!=\!\!1,\!\forall k\!\in\!\!\mathcal{V}',\mathcal{V}'\! \subseteq\!\mathcal{V}_{m})\!=
   \underset{ k\in\mathcal{V}'}{\prod}\!\!\varrho_{k,n}^{t-1};\label{eq:bar_b_c1} \\
   &\Pr(\!e_{k'\!,n,l}^{t-1}\!\!=\!0,\!\forall k'\!\!\in\!\!\mathcal{V}_{m}\backslash\mathcal{V}')\!=
   \underset{ k'\in\mathcal{V}_{m}\backslash\mathcal{V}'}{\prod}\!\!\!\left(1\!-\!\varrho_{k',n}^{t-1}\right).\label{eq:bar_b_c2}
    \end{align}
\end{subequations}
Moreover, \eqref{bar_p_e} is because $0\leq\frac{p_{m}\underset{ k'\in\mathcal{V}_{m}\backslash\mathcal{V}'}{\sum}p_{k'}}{\underset{ k\in\mathcal{V}'}{\sum}p_{k}+p_m}\leq1-p_m$ and hence $\Bigg(\frac{p_{m}\underset{ k'\in\mathcal{V}_{m}\backslash\mathcal{V}'}{\sum}p_{k'}}{\underset{ k\in\mathcal{V}'}{\sum}p_{k}+p_m}\Bigg)^2\leq\frac{p_{m}\underset{ k'\in\mathcal{V}_{m}\backslash\mathcal{V}'}{\sum}p_{k'}}{\underset{ k\in\mathcal{V}'}{\sum}p_{k}+p_m}$;
and \eqref{bar_p_f} is obtained by substituting 
$\underset{\forall\mathcal{V}'\subseteq\mathcal{V}_{m}}{\sum}\underset{ k\in\mathcal{V}'}{\prod}\varrho_{k,n}^{t-1}\underset{ k'\in\mathcal{V}_{m}\backslash\mathcal{V}'}{\prod}(1\!-\!\varrho_{k',n}^{t-1})=1$ 
into the last term of \eqref{bar_p_e} and then reorganizing it. 
Finally,~\eqref{bar_p_g}, or in other words,~\eqref{eq:bound_of_lambda_nm}, is obtained.
The upper bound on $\mathbb{E}\{\left\Vert \Lambda_l^{t-1}\right\Vert ^{2}\}$ is given in \eqref{lamda_t},
 \begin{figure*}
     \begin{subequations}\label{lamda_t}\small
  \begin{align}
  \!\!\! \!\!\!\!\mathbb{E}\Big\{\left\Vert \Lambda_l^{t-1}\right\Vert ^{2}\Big\}
\leq &\underset{ n \in \mathcal{V}}{\sum}\Big(\underset{ m\in \mathcal{V}}{\sum}\mathbb{E}\big((\lambda_{m,n,l}^{t-1})^{2}\big)\Big)\big(\underset{ m\in \mathcal{V}}{\sum}(x_{m})^{2}\big)
        =\underset{ n \in \mathcal{V}}{\sum}\Big(\underset{ m\in \mathcal{V}}{\sum}\mathbb{E}(\lambda_{m,n,l}^{t-1})^{2}\Big)
       \label{lamda_t_a}\\ 
       \leq&\!\!\underset{ n \in \mathcal{V}}{\sum}\underset{ m\in \mathcal{V}}{\sum}\!\bigg((1-\varrho_{m,n}^{t-1})p_m^2+\!\!\!\!\underset{\forall j\neq m}{\sum}p_{j}(1\!-\!\varrho^{t-1}_{j,n})\!\!\!\!\!\!\!\!\underset{\forall\mathcal{V}'\subseteq\mathcal{V}_{j};m\in\mathcal{V}'}{\sum}\frac{p_{m}}{\underset{ k\in\mathcal{V}'}{\sum}p_{k}}\Big(\underset{ k\in\mathcal{V}'}{\prod}\varrho^{t-1}_{k,n}\underset{ k'\in\mathcal{V}_{j}\backslash\mathcal{V}'}{\prod}\!\!\!\!\!(1-\varrho^{t-1}_{k',n})\!\Big)\!\bigg)\label{lamda_t_b}\\
      =&\!\underset{ n \in \mathcal{V}}{\sum}\underset{ m\in \mathcal{V}}{\sum}\!(1-\varrho_{m,n}^{t-1})p_m^2\!\!+\!\!\underset{ n \in \mathcal{V}}{\sum}\bigg(\!\underset{ m\in \mathcal{V}}{\sum}\underset{\forall j\neq m}{\sum}\!p_{j}(1\!-\!\varrho^{t-1}_{j,n})\!\!\!\!\underset{\forall\mathcal{V}'\subseteq\mathcal{V}_{j};m\in\mathcal{V}'}{\sum}\!\frac{p_{m}}{\underset{ k\in\mathcal{V}'}{\sum}p_{k}}\!\underset{ k\in\mathcal{V}'}{\prod}\varrho^{t-1}_{k,n}\!\underset{ k'\in\mathcal{V}_{j}\backslash\mathcal{V}'}{\prod}\!\!\!\!(1\!-\!\varrho^{t-1}_{k',n})\!\bigg)\label{lamda_t_d}\\
    =&\!\underset{ n \in \mathcal{V}}{\sum}\underset{ m\in \mathcal{V}}{\sum}\!(1-\varrho_{m,n}^{t-1})p_m^2\!\!+\underset{ n \in \mathcal{V}}{\sum}\Big(\underset{ m \in \mathcal{V}}{\sum}p_{m}(1-\varrho^{t-1}_{m,n})\Big)=\!\underset{ n \in \mathcal{V}}{\sum}\underset{ m\in \mathcal{V}}{\sum}\!(1-\varrho_{m,n}^{t-1})\big(p_m^2+p_m\big)\label{lamda_t_g}
    \end{align}
\end{subequations}\hrulefill
 \end{figure*}
 where \eqref{lamda_t_a} is based on the Cauchy–Schwarz inequality and $\left\Vert \mathbf{x}\right\Vert =\big({\sum}_{ m\in \mathcal{V}}x_{m}^{2}\big)^{\frac{1}{2}}=1$; \eqref{lamda_t_b} is acquired by substituting \eqref{bar_p_g} into \eqref{lamda_t_a}; \eqref{lamda_t_d} expands \eqref{lamda_t_b};  and \eqref{lamda_t_g} is obtained
 by applying $\frac{{\sum}_{ m\in\mathcal{V}'}p_{m}}{{\sum}_{ k\in\mathcal{V}'}p_{k}}=1$ and $\underset{\forall\mathcal{V}'\subseteq\mathcal{V}_{j};n\in\mathcal{V}'}{\sum}\underset{ k\in\mathcal{V}'}{\prod}\varrho^{t-1}_{k,n}\underset{ k'\in\mathcal{V}_{j}\backslash\mathcal{V}'}{\prod}\left(1-\varrho^{t-1}_{k',n}\right)=\underset{ k\in\mathcal{V}_{j}}{\prod}\left(\varrho_{k,n}^{t-1}+1-\varrho_{k,n}^{t-1}\right)=1$ sequentially.
 As a result,~\eqref{lamda_t_g}, or in other words,~\eqref{eq:bound_of_lambda}, is obtained.

\subsection{Proof of \textbf{Lemma \ref{appedix_lemma3}}}
\label{proof:lemma5}
An upper bound on ${\sum}_{ n \in \mathcal{V}}p_{n}\!\left\Vert \bar{\boldsymbol{\omega}}_{i}^{t}\!-\!\boldsymbol{\omega}_{n,i}^{t}\!\right\Vert ^{2}$ is given by

\begin{subequations}\small\label{proof_lemma3_1}
     \begin{align}
&\sum_{ n \in \mathcal{V}}\!p_{n}\!\Vert \bar{\boldsymbol{\omega}}_{i}^{t}\!\!-\!\!\boldsymbol{\omega}_{n,i}^{t}\Vert ^{2}
\!\!=\!\!\sum_{ n \in \mathcal{V}}p_{n}\Vert \bar{\boldsymbol{\omega}}^{t}_{i\!-\!1}\!\!-\!\!\eta \bar{\mathbf{g}}^{t}_{i\!-\!1}\!\!-\!\!\boldsymbol{\omega}^{t}_{n,i\!-\!1}\!\!+\!\!\eta\mathbf{g}^{t}_{n,i\!-\!1}\Vert ^{2}\label{proof_lemma3_1:a}\\
&\!\!\!\!\leq\!\!(\!1\!\!+\!\!\eta)\!\!\sum_{ n \in \mathcal{V}}\!p_{n}\!\Vert \bar{\boldsymbol{\omega}}_{i\!-\!1}^{t}\!\!-\!\!\boldsymbol{\omega}_{n,i\!-\!1}^{t}\!\Vert ^{2}\!\!+\!\!(1\!\!+\!\!\frac{1}{\eta})\!\!\sum_{ n \in \mathcal{V}}\!p_{n}\!\Vert \eta\mathbf{g}_{n,i\!-\!1}^{t}\!\!-\!\!\eta \bar{\mathbf{g}}^{t}_{i\!-\!1}\Vert ^{2}\!\!,\label{proof_lemma3_1:b}
		\end{align}
 \end{subequations}
where \eqref{proof_lemma3_1:b} is based on $\Vert\mathbf{a}+\mathbf{b}\Vert^{2}\leq(1+\eta)\Vert\mathbf{a}\Vert^{2}+(1+\frac{1}{\eta})\Vert\mathbf{b}\Vert^{2},\forall \eta>0$. 

The second term of \eqref{proof_lemma3_1:b} can be rewritten as
\begin{subequations}\small\label{proof_lemma3_2}
    \begin{align}
   &\notag \!\!\!\sum_{n\in\mathcal{V}}\!\!p_{n}\!\Vert \mathbf{g}_{n,i\!-\!1}^{t}\!\!-\!\!\bar{\mathbf{g}}_{i\!-\!1}^{t}\Vert ^{2}\!\!=\!\!\!\sum_{n\in\mathcal{V}}\!\!p_{n}\!\Vert \mathbf{g}_{n,i\!-\!1}^{t}\!\!-\!\!\nabla\! F_{n}\!(\bar{\boldsymbol{\omega}}_{i\!-\!1}^{t})\!\!+\!\!\nabla \!F_{n}\!(\bar{\boldsymbol{\omega}}_{i\!-\!1}^{t})\!\!-\!\!\bar{\mathbf{g}}_{i\!-\!1}^{t}\Vert ^{2}\\ &
    \leq\!\!{\sum}_{n\in\mathcal{V}}\!p_{n}\!\left(2\Vert \mathbf{g}_{n,i\!-\!1}^{t}\!\!-\!\!\nabla\! F_{n}(\bar{\boldsymbol{\omega}}_{i\!-\!1}^{t})\Vert ^{2}\!\!+\!\!2\Vert \!\nabla \! F_{n}\!(\bar{\boldsymbol{\omega}}_{i\!-\!1}^{t})\!\!-\!\!\bar{\mathbf{g}}_{i\!-\!1}^{t}\!\Vert^{2}\!\right)\label{proof_lemma3_2:a}\\&
     \leq\!\!2L^{2}\!\!\sum_{n\in\mathcal{V}}\!p_{n}\!\Vert \!\bar{\boldsymbol{\omega}}_{i\!-\!1}^{t}\!\!-\!\!\boldsymbol{\omega}_{n,i\!-\!1}^{t}\!\Vert ^{2}\!\!+\!2\bar{\sigma}^2\!\!+\!\!2\!\!\sum_{m\in\mathcal{V}}\!\!p_{m}L^{2}\Vert \!\bar{\boldsymbol{\omega}}_{i\!-\!1}^{t}\!\!-\!\!\boldsymbol{\omega}_{m,i\!-\!1}^{t}\!\Vert ^{2}\!\!\label{proof_lemma3_2:b}\\
     &=4L^{2}\!{\sum}_{n\in\mathcal{V}}\!p_{n}\!\Vert \!\bar{\boldsymbol{\omega}}_{i\!-\!1}^{t}\!\!-\!\!\boldsymbol{\omega}_{n,i\!-\!1}^{t}\!\Vert ^{2}\!\!+\!\!2\bar{\sigma}^2,\label{proof_lemma3_2:c}
\end{align}
\end{subequations}
where \eqref{proof_lemma3_2:a} is due to  $\Vert\mathbf{a}+\mathbf{b}\Vert^{2}\leq2\Vert\mathbf{a}\Vert^{2}+2\Vert\mathbf{b}\Vert^{2}$.  
The first term of \eqref{proof_lemma3_2:b} is based on the $L$-smoothness of $F_n$ and subsequently $\left\Vert \nabla F_{n}(\bar{\boldsymbol{\omega}}_{i-1}^{t})-\!\mathbf{g}_{n,i\!-\!1}^{t}\!\right\Vert ^{2}\leq L^{2}\left\Vert \bar{\boldsymbol{\omega}}_{i\!-\!1}^{t}\!-\!\boldsymbol{\omega}_{n,i\!-\!1}^{t}\right\Vert ^{2} $, while the second term of \eqref{proof_lemma3_2:a} is upper bounded by
\begin{subequations}\small\label{proof_lemma3_3}
    \begin{align}
 \notag &\!\!\!\sum_{n\in\mathcal{V}}\!p_{n}\!\Vert \!\nabla \! F_{n}\!(\!\bar{\boldsymbol{\omega}}_{i\!-\!1}^{t}\!)\!\!-\!\!\bar{\mathbf{g}}_{i\!-\!1}^{t}\!\Vert^{2} \!\! =\!\!\!\sum_{n\in\mathcal{V}}\!\!p_{n}\!\Vert \!\nabla \!F_{n}\!(\!\bar{\boldsymbol{\omega}}_{i\!-\!1}^{t}\!)\!\!-\!\!\nabla\! F\!(\!\bar{\boldsymbol{\omega}}_{i\!-\!1}^{t}\!)\!\!+\!\!\nabla \!F(\!\bar{\boldsymbol{\omega}}_{i\!-\!1}^{t}\!)\!\!-\!\!\bar{\mathbf{g}}_{i\!-\!1}^{t}\!\Vert ^{2}\\
   & \notag= \!\!\!\sum_{n\in\mathcal{V}}\!p_{n}\!\left(\Vert \!\nabla \!F_{n}(\bar{\boldsymbol{\omega}}_{i\!-\!1}^{t})\!\!-\!\!\nabla \!F(\bar{\boldsymbol{\omega}}_{i\!-\!1}^{t})\Vert ^{2} \!\!+\!\!\Vert\! \nabla \!F(\bar{\boldsymbol{\omega}}_{i\!-\!1}^{t})\!\!-\!\!\bar{\mathbf{g}}_{i\!-\!1}^{t}\Vert ^{2}\right)\\
  &\quad\quad +\!\!2\sum_{n\in\mathcal{V}}p_{n}\left\langle\! \nabla \!F_{n}(\bar{\boldsymbol{\omega}}_{i\!-\!1}^{t})\!\!-\!\!\nabla\! F(\bar{\boldsymbol{\omega}}_{i\!-\!1}^{t}),\nabla\! F(\bar{\boldsymbol{\omega}}_{i\!-\!1}^{t})\!\!-\!\!\bar{\mathbf{g}}_{i\!-\!1}^{t}\right\rangle\label{proof_lemma3_3:a}\\
  &= \!\!\sum_{n\in\mathcal{V}}\!p_{n}\!\Big(\!\Vert \!\nabla \!F_{n}(\bar{\boldsymbol{\omega}}_{i\!-\!1}^{t})\!\!-\!\!\nabla \!F(\bar{\boldsymbol{\omega}}_{i\!-\!1}^{t})\Vert ^{2} \!\!+\!\!\big\Vert\!\!\!\sum_{m\in\mathcal{V}}\!p_{m}\!(\nabla\!F_{m}\!(\bar{\boldsymbol{\omega}}_{i\!-\!1}^{t})\!\!-\!\!\mathbf{g}_{m,i\!-\!1}^{t}\!)\big\Vert^{2}\!\Big)\label{proof_lemma3_3:b}\\
  &\leq\!\! {\sum}_{n\in\mathcal{V}}\!p_{n} \!\Big(\sigma_n^2+\sum_{m\in\mathcal{V}}p_{m}\left\Vert \nabla F_{m}(\bar{\boldsymbol{\omega}}_{i-1}^{t})-\!\mathbf{g}_{m,i\!-\!1}^{t}\!\right\Vert ^{2}\Big)\label{proof_lemma3_3:c}\\&
  \leq \bar{\sigma}^2+{\sum}_{n\in\mathcal{V}}\!p_{n} L^{2}\left\Vert \bar{\boldsymbol{\omega}}_{i\!-\!1}^{t}\!-\!\boldsymbol{\omega}_{n,i\!-\!1}^{t}\right\Vert ^{2},\label{proof_lemma3_3:d}
\end{align}
\end{subequations}
where \eqref{proof_lemma3_3:a} expands $\sum_{n\in\mathcal{V}}\!p_{n}\!\Vert \mathbf{g}_{n,i\!-\!1}^{t}\!\!-\!\!\bar{\mathbf{g}}_{i\!-\!1}^{t}\!\Vert ^{2}$; \eqref{proof_lemma3_3:b} is based on ${\sum}_{n\in\mathcal{V}}p_{n}\nabla F_{n}(\bar{\boldsymbol{\omega}}_{i-1}^{t})-\nabla F(\bar{\boldsymbol{\omega}}_{i-1}^{t})=0$ and the definitions of $\nabla F(\bar{\boldsymbol{\omega}}_{i-1}^{t})$ and $\bar{\mathbf{g}}_{i-1}^{t}$; \eqref{proof_lemma3_3:c} is obtained by applying \textbf{Assumption \ref{assumption}}-4) to the first term of \eqref{proof_lemma3_3:b} and applying Cauchy's inequality and ${\sum}_{n\in\mathcal{V}}p_{n}=1$ to the second term of \eqref{proof_lemma3_3:b};  and \eqref{proof_lemma3_3:d} exploits the $L$-smoothness of~$F_n$. 

Substituting \eqref{proof_lemma3_3:d} into \eqref{proof_lemma3_2:a} yields \eqref{proof_lemma3_2:b} and \eqref{proof_lemma3_2:c}.
By further substituting \eqref{proof_lemma3_2:c} into \eqref{proof_lemma3_1:b}, it readily follows that 
\begin{align}
\notag\!\!\!{\sum}_{ n \in \mathcal{V}}\!p_{n}\!&\Vert \bar{\boldsymbol{\omega}}_{i}^{t}\!-\!\boldsymbol{\omega}_{n,i}^{t}\Vert ^{2}\!\leq2(1\!+\!\eta)\eta\bar{\sigma}^{2}\\
&+(\!1\!+\!\eta)\!(\!1\!+\!4L^{2}\eta)\!\!{\sum}_{n\in\mathcal{V}}p_{n}\!\left\Vert \bar{\boldsymbol{\omega}}_{i\!-\!1}^{t}\!\!-\!\boldsymbol{\omega}_{n,i\!-\!1}^{t}\right\Vert ^{2}\!.\label{lemma3:e}\end{align}
By recursively updating \eqref{lemma3:e} for $i$ times, we obtain \begin{align}\notag
   \sum_{ n \in \mathcal{V}}\!p_{n}\!\Vert \bar{\boldsymbol{\omega}}_{i}^{t}-&\boldsymbol{\omega}_{n,i}^{t}\Vert ^{2} \leq \frac{\big((\!1\!+\!\eta)\!(\!1\!+\!4L^{2}\eta)\big)^{i\!}\!-\!1}{4L^{2}\eta+4L^{2}+1}2(\!1\!+\!\eta)\bar{\sigma}^{2}\\&+\left(\!(\!1\!+\!\eta)\!(\!1\!+\!4L^{2}\eta)\right)^{i}\sum_{n\in\mathcal{V}}\!p_{n}\!\left\Vert \bar{\boldsymbol{\omega}}_{0}^{t}\!-\!\boldsymbol{\omega}_{n,0}^{t}\!\right\Vert ^{2}\!\!\label{lemma3:f}.
\end{align}
By substituting $\boldsymbol{\omega}_{n,0}^{t}=\mathbf{w}_{n}^{t-1}$ into \eqref{lemma3:f}, it follows that
 \begin{align}
   \notag&
   \!\!\!\!\sum_{ n \in \mathcal{V}}\!p_{n}\!\Vert \bar{\boldsymbol{\omega}}_{i}^{t}-\boldsymbol{\omega}_{n,i}^{t}\Vert ^{2} \leq \frac{\big((\!1\!+\!\eta)\!(\!1\!+\!4L^{2}\eta)\big)^{i}\!-\!1}{4L^{2}\eta+4L^{2}+1}2(\!1\!+\!\eta)\bar{\sigma}^{2}\\&\!\!\!\!+\!\!\left(\!(\!1\!+\!\eta)\!(\!1\!\!+\!\!4L^{2}\eta)\!\right)^{i}\!{\sum}_{n\in\mathcal{V}}p_{n}\!
  \big\Vert{\sum}_{m\in\mathcal{V}}p_{m}\!\mathbf{w}_{m}^{t-1}\!\!-\!\mathbf{w}_{n}^{\!t\!-\!1\!}\big\Vert ^{2}.\label{lemma3:g}
 \end{align}
  Since $\mathbb{E}(\left\Vert X-\mathbb{E}[X]\right\Vert ^{2})=\mathbb{E}(\left\Vert X\right\Vert ^{2})-\left\Vert \mathbb{E}[X]\right\Vert ^{2}$, with $X=\mathbf{w}_{n}^{t-1}-\boldsymbol{\bar{\omega}}^{t-1}$ with probability $p_n$, we have
\begin{align}
 \!\!\notag &{\sum}_{ n\in\mathcal{V}}p_{n}\Big\Vert{\sum}_{ m\in\mathcal{V}}p_{m}\mathbf{w}_{m}^{t\!-\!1}\!\!-\!\!\mathbf{w}_{n}^{t\!-\!1\!}\Big\Vert^{2}\!\!
 \\&=\!\!\!{\sum}_{ n\in\mathcal{V}}\!p_{n}\!\Vert\mathbf{w}_{n}^{t\!-\!1}\!-\!\boldsymbol{\bar{\omega}}^{t\!-\!1}\Vert^{2}\!\!\!-\!\!\big\Vert{\sum}_{ n\in\mathcal{V}}\!p_{n}\!\mathbf{w}_{n}^{t\!-\!1}\!-\!\boldsymbol{\bar{\omega}}^{t\!-\!1}\big\Vert^{2}.\label{eq:70}
\end{align}
By substituting \eqref{eq:70} into \eqref{lemma3:g}, we finally obtain \eqref{delta_ib}.

\subsection{Proof of \textbf{Lemma \ref{appedix_lemma2}}}\label{proof:lemma6}
\subsubsection{When $i=1$} We expand $\Delta\boldsymbol{\omega}^{t}_{1}$, and then apply the inequality of arithmetic and geometric means, yielding
		\begin{align}\label{delta_r1}
\Delta\boldsymbol{\omega}^{t}_{1}
\leq&(1+\frac{1}{\tau_{\varrho}})\left\Vert \bar{\boldsymbol{\omega}}^{t}_{1}-\boldsymbol{\bar{\varphi}}^{t}_{1}\right\Vert ^{2}+
(1+\tau_{\varrho})\left\Vert \boldsymbol{\bar{\varphi}}^{t}_{1}-\mathbf{w}^{*}\right\Vert ^{2},
		\end{align}	
  where $\tau_{\varrho}$ indicates the noise level of communication errors. When the channel is error-free, $\tau_{\varrho}\rightarrow0$ under the assumption of error-free model delivery.
Define the globally aggregated model at the first epoch of the $t$-th training round as
\begin{small}
    \begin{align}\label{yr1}
\boldsymbol{\bar{\varphi}}^{t}_{1}={\sum}_{ n \in \mathcal{V}}p_n\left(\boldsymbol{\bar{\omega}}_{I}^{t-1}-\eta\nabla F_{n}(\boldsymbol{\bar{\omega}}_{I}^{t-1})\right).
\end{align}
\end{small}
By substituting \eqref{imprecise x} and \eqref{yr1}, ${\left\Vert \bar{\boldsymbol{\omega}}^{t}_{1}-\boldsymbol{\varphi}^{t}_{1}\right\Vert ^{2}}$ is upper bounded:%
 \begin{subequations}\label{a}%
  \small\begin{align}	  
&\!\!\big\Vert\bar{\boldsymbol{\omega}}^{t}_{1}\!-\!\boldsymbol{\bar{\varphi}}^{t}_{1}\!\big\Vert^{2}\!\!=\!\Big\Vert\!{\sum}_{ n \in \mathcal{V}}\!p_{n}\!\left(\mathbf{w}_{n}^{t\!-\!1\!}\!-\!\eta\mathbf{g}_{n,0\!}^{t}\!-\!\boldsymbol{\bar{\omega}}_{I}^{t\!-\!1}\!\!+\!\!\eta\!\nabla \!F_{n}\!\!\left(\boldsymbol{\bar{\omega}}_{I}^{t\!-\!1}\!\right)\!\right)\!\!\Big\Vert^{2}\!\!\!\label{a_bound:eqa}\\
&\!\!\!\leq\!\!(1\!\!+\!\!\eta L)\!\Big\Vert\!\!\sum_{ n \in \mathcal{V}}\!p_{n}\!(\mathbf{w}_{n}^{t\!-\!1}\!\!-\!\!\boldsymbol{\bar{\omega}}_{I}^{\!t\!-\!1\!})\Big\Vert^{2}\!\!\!+\!\!(1\!\!+\!\!\frac{1}{\eta L})\!\Big\Vert\!\!\sum_{ n\in\mathcal{V}}\!p_{n}\!\eta\!\left(\!\mathbf{g}_{n,0}^{t}\!\!-\!\!\nabla\! F_{n}\!(\!\boldsymbol{\bar{\omega}}_{I}^{t\!-\!1}\!)\!\right)\!\!\Big\Vert^{2}\label{a_bound:c}
\\ &\!\!\!\leq\!\!(1\!\!+\!\!\eta L)\!\Big\Vert\!\!\sum_{ n \in \mathcal{V}}\!p_{n}\!(\mathbf{w}_{n}^{t\!-\!1}\!\!-\!\!\boldsymbol{\bar{\omega}}_{I}^{t-1})\!\Big\Vert^{2}\!\!\!+\!\!(1\!\!+\!\!\eta L)\eta L\!\!\sum_{ n \in \mathcal{V}}\!p_{n}\!\Vert\mathbf{w}_{n}^{t\!-\!1}\!\!-\!\!\boldsymbol{\bar{\omega}}_{I}^{t\!-\!1}\Vert ^{2}\label{a_bound:eqb}\\
&\!\!\!=\!\!(1\!\!+\!\!\eta L)\!\Big[\big\Vert\!\!\sum_{n\in\mathcal{V}}\!p_{n}\!\mathbf{w}_{n}^{t\!-\!1}\!\!-\!\!\boldsymbol{\bar{\omega}}_{I}^{t-1}\!\big\Vert^{2}\!\!\!+\!\!\eta L\!\!\sum_{n\in\mathcal{V}}\!p_{n}\!\Vert\mathbf{w}_{n}^{t\!-\!1}\!\!-\!\!\boldsymbol{\bar{\omega}}_{I}^{t\!-\!1}\Vert\!^{2}\Big].
		\end{align}%
  \end{subequations}
Here, \eqref{a_bound:c} uses $\Vert\mathbf{a}+\mathbf{b}\Vert^{2}\leq(1+\eta L)\Vert\mathbf{a}\Vert^{2}+(1+\frac{1}{\eta L})\Vert\mathbf{b}\Vert^{2}$. \eqref{a_bound:eqb} is obtained first due to the $L$-smoothness of $F_n$ and then by applying Cauchy's inequality and ${\sum}_{ n\in\mathcal{V}}p_{n}=1$, and hence $\big\Vert\sum_{ n\in\mathcal{V}}p_{n}\eta\left(\mathbf{g}_{n,0}^{t}\!-\nabla F_{n}(\!\boldsymbol{\bar{\omega}}_{I}^{t\!-1})\!\right)\big\Vert^2\!\leq\sum_{ n\in\mathcal{V}}p_{n}\big\Vert\eta\left(\mathbf{g}_{n,0}^{t}-\nabla F_{n}(\boldsymbol{\bar{\omega}}_{I}^{t-1})\right)\big\Vert^2\leq\!\eta^2 L^2\!{\sum}_{ n\in\mathcal{V}}p_{n}\big\Vert\mathbf{w}_{n}^{t-1}-\boldsymbol{\bar{\omega}}_{I}^{t-1}\big\Vert^2$. 

By substituting \eqref{yr1} in, $\left\Vert \boldsymbol{\bar{\varphi}}^{t}_{1}-\mathbf{w}^{*}\right\Vert ^{2}$ is upper bounded by
  \begin{subequations}\label{b}\small
      \begin{align}
	\notag \left\Vert \boldsymbol{\bar{\varphi}}^{t}_{1}-\mathbf{w}^{*}\right\Vert ^{2}
 & =\left\Vert \boldsymbol{\bar{\omega}}_{I}^{t-1}-\mathbf{w}^{*}\right\Vert ^{2}+\left\Vert \eta\nabla F\left(\boldsymbol{\bar{\omega}}_{I}^{t-1}\right)\right\Vert ^{2}\notag\\
&\quad\quad\quad-2\left\langle \boldsymbol{\bar{\omega}}_{I}^{t-1}-\mathbf{w}^{*},\eta\nabla F\left(\boldsymbol{\bar{\omega}}_{I}^{t-1}\right)\right\rangle  \label{b_bound:eqa}\\
			&\leq\left(1-2\mu\eta+L^2\eta^2\right)\left\Vert \boldsymbol{\bar{\omega}}_{I}^{t-1}\!-\!\mathbf{w}^{*}\right\Vert ^{2},
		\end{align}%
  \end{subequations}
where \eqref{b_bound:eqa} is due to the $L$-smoothness of $F(\cdot)$ that $\left\Vert \nabla F\left(\boldsymbol{\bar{\omega}}_{I}^{t-1}\right)\right\Vert \leq L\left\Vert \boldsymbol{\bar{\omega}}_{I}^{t-1}\!-\!\mathbf{w}^{*}\right\Vert$ and the $\mu$-strong convexity, i.e., $\left\langle \boldsymbol{\bar{\omega}}_{I}^{t-1}-\mathbf{w}^{*},\nabla F\left(\boldsymbol{\bar{\omega}}_{I}^{t-1}\right)\right\rangle\geq \mu \left\Vert \boldsymbol{\bar{\omega}}_{I}^{t-1}\!-\!\mathbf{w}^{*}\right\Vert^2$.
By substituting \eqref{a} and \eqref{b} into \eqref{delta_r1}, we obtain~(\ref{delta_i}a).

\subsubsection{When $i=2,3,\cdots,I$} We first expand $\Delta\boldsymbol{\omega}^{t}_{i}$ by applying \eqref{imprecise x} and \eqref{eq:xre}, i.e.,
		\begin{align}
\notag\Delta&\boldsymbol{\omega}^{t}_{i}= \Big\Vert{\sum}_{ n \in \mathcal{V}}p_{n}\left(\boldsymbol{\omega}^{t}_{n,i-1}-\eta\mathbf{g}^{t}_{n,i-1}\right)-\mathbf{w}^{*}\Big\Vert^{2}\\	=&\!\left\Vert\bar{\boldsymbol{\omega}}^{t}_{i-1}\! -\! \mathbf{w}^{*}\right\Vert ^{2}\!+\!\left\Vert \eta\bar{\mathbf{g}}^{t}_{i-1}\right\Vert ^{2}\;-2\eta\left\langle \bar{\boldsymbol{\omega}}^{t}_{i-1}\!-\!\mathbf{w}^{*}\!,\bar{\mathbf{g}}^{t}_{i-1}\!\right\rangle .\label{triangle i}
		\end{align}

An upper bound of $\left\Vert \eta\bar{\mathbf{g}}^{t}_{i-1}\right\Vert ^{2}$ is obtained under the $L$-smoothness of $F_n$, as given by
  \begin{subequations}\small\label{g_bound}
	    \begin{align}
		&\!\!
  \left\Vert \! \eta\bar{\mathbf{g}}_{i\!-\!1}^{t}\right\Vert ^{2}\!\!=\!\!\eta^{2}\!\Big\Vert \!{\sum}_{n\in\mathcal{V}}\!p_{n}\!\left(\mathbf{g}_{n,i\!-\!1}^{t}\!\!-\!\!\nabla F_{n}(\bar{\boldsymbol{\omega}}_{i\!-\!1}^{t})\!\right)\!\!+\!\!\nabla F(\bar{\boldsymbol{\omega}}_{i\!-\!1}^{t})\Big\Vert ^{2}\notag\\
  &\leq\!2\eta^{2}\Big\Vert \!{\sum}_{n\in\mathcal{V}}p_{n}\!\left(\mathbf{g}_{n,i\!-\!1}^{t}\!\!-\!\!\nabla F_{n}(\bar{\boldsymbol{\omega}}_{i\!-\!1}^{t})\!\right)\!\!\Big\Vert ^{2}\!\!\!\!+\!\!2\eta^{2}\!\left\Vert \nabla F(\bar{\boldsymbol{\omega}}_{i\!-\!1}^{t})\right\Vert ^{2}\label{g_bound:b}\\
 & \leq2\eta^{2}L^{2}\!{\sum}_{n\in\mathcal{V}}p_{n}\!\left\Vert \boldsymbol{\omega}_{n,i\!-\!1}^{t}\!\!-\!\!\bar{\boldsymbol{\omega}}_{i\!-\!1}^{t}\right\Vert ^{2}\!\!\!+\!\!4L\eta^{2}\left(F(\bar{\boldsymbol{\omega}}_{i\!-\!1}^{t})\!\!-\!\!F^{*}\right)\label{g_bound:c}
  ,
		\end{align}
	\end{subequations}where  
 \eqref{g_bound:b} is based on $\Vert x_1+x_2\Vert^2\leq 2\Vert x_1\Vert^2+2\Vert x_2\Vert^2$; and \eqref{g_bound:c} is due to the $L$-smoothness of $F_n$.

		Next, we expand $-2\eta\left\langle \bar{\boldsymbol{\omega}}^{t}_{i-1}-\boldsymbol{\omega}^{*},\bar{\mathbf{g}}^{t}_{i-1}\right\rangle$ as
     \begin{small} \begin{align}
		 -\!2\eta\left\langle \bar{\boldsymbol{\omega}}_{i-\!1}^{t}\!\!-\!\boldsymbol{\omega}^{*}\!,\bar{\mathbf{g}}_{i-\!1}^{t}\right\rangle \!=&\!-\!2\eta{\sum}_{n\in\mathcal{V}}p_{n}\left\langle \bar{\boldsymbol{\omega}}_{i-\!1}^{t}\!-\!\boldsymbol{\omega}_{n,i-\!1}^{t},\mathbf{g}_{n,i-\!1}^{t}\right\rangle \!\!\notag\\&\!\!\!\!\!\!\!\!\!\!\!\!\!\!\!\!\!\!\!\!-\!2\eta{\sum}_{n\in\mathcal{V}}p_{n}\left\langle \boldsymbol{\omega}_{n,i-\!1}^{t}\!-\!\boldsymbol{\omega}^{*},\mathbf{g}_{n,i-\!1}^{t}\right\rangle .\label{A_2}
		\end{align}\end{small}%
Since $F_n$ is $L$-smooth, the first term on the RHS of \eqref{A_2} is upper bounded by%
\begin{small}\begin{align}%
		\!\!\!\!\!\!-\!\!\left\langle \!\bar{\boldsymbol{\omega}}_{i\!-\!1}^{t}\!\!-\!\!\boldsymbol{\omega}_{n,i\!-\!1}^{t},\mathbf{g}_{n,i\!-\!1}^{t}\!\right\rangle \!\!\leq\!\!F_{n}\!(\boldsymbol{\omega}_{n,i\!-\!1}^{t})\!\!-\!\!F_{n}\!(\bar{\boldsymbol{\omega}}_{i\!-\!1}^{t}\!)\!\!+\!\!\frac{L}{2}\!\Vert \bar{\boldsymbol{\omega}}_{i\!-\!1}^{t}\!\!-\!\!\boldsymbol{\omega}_{n,i\!-\!1}^{t}\!\Vert ^{2}\!\!.%
  \end{align}%
  \end{small}%
Since $F_n$ is $\mu$-strongly convex, the second term on the RHS of \eqref{A_2} is upper bounded by
       \begin{small}
            \begin{align}	\!\!\!\!\!-\! \!\left\langle\!\boldsymbol{\omega}^{t}_{n,i\!-\!1}\!\!-\!\!\boldsymbol{\omega}^{*}\!,\mathbf{g}^{t}_{n,i\!-\!1}\!\right\rangle \!\!\leq\!\!-\!F_{n}(\boldsymbol{\omega}^{t}_{n,i\!-\!1})\!\!+\!\!F_{n}(\boldsymbol{\omega}^{*})\!\!-\!\!\frac{\mu}{2}\!\Vert \boldsymbol{\omega}^{t}_{n,i\!-\!1}\!\!-\!\!\boldsymbol{\omega}^{*}\Vert ^{2}.\label{A_2-2}
		\end{align}
       \end{small}
  
  By substituting \eqref{g_bound} -- \eqref{A_2-2} into \eqref{triangle i}, it follows that%
  \begin{small}
      \begin{subequations}
    \begin{align}	\notag&\!\!\!\!\!\!\Delta\boldsymbol{\omega}^{t}_{i}
  \!\!  \leq\Delta\boldsymbol{\omega}^{t}_{i\!-\!1}\!\!+\!\!2\eta^{2}L^{2}\!{\sum}_{n\in\mathcal{V}}\!p_{n}\!\left\Vert\! \boldsymbol{\omega}_{n,i\!-\!1}^{t}\!\!-\!\!\bar{\boldsymbol{\omega}}_{i\!-\!1}^{t}\!\right\Vert ^{2}\!\!\!+\!\!4L\eta^{2}\!\!\big(F(\bar{\boldsymbol{\omega}}_{i\!-\!1}^{t})\!\!-\!\!F^{*}\big)\notag\\\notag&\!\!\!\!\!-\!2\eta{\sum}_{n\in\mathcal{V}}p_{n}\!\left(\!F_{n}\!\left(\boldsymbol{\omega}^{t}_{n,i\!-\!1}\right)\!\!-\!\!F_{n}\!(\boldsymbol{\omega}^{*})\!\!+\!\!\frac{\mu}{2}\!\left\Vert \boldsymbol{\omega}^{t}_{n,i\!-\!1}\!\!-\!\!\boldsymbol{\omega}^{*}\right\Vert ^{2}\!\right)\!\\\label{delta rea}&\!\!\!\!\!-\!\!2\!\eta{\sum}_{n\in\mathcal{V}}p_{n}\!\big(F_{n}(\bar{\boldsymbol{\omega}}_{i\!-\!1}^{t}\!)\!\!-\!\!F_{n}(\boldsymbol{\omega}_{n,i\!-\!1}^{t})\!\!-\!\!\frac{L}{2}\!\left\Vert \bar{\boldsymbol{\omega}}_{i\!-\!1}^{t}\!\!-\!\!\boldsymbol{\omega}_{n,i\!-\!1}^{t}\!\right\Vert ^{2}\!\big)
			\\
			&\!\!\!\!\!\!=2\eta^{2}L^{2}\!{\sum}_{n\in\mathcal{V}}p_{n}\!\Vert \bar{\boldsymbol{\omega}}_{i\!-\!1}^{t}\!\!-\!\!\boldsymbol{\omega}_{n,i\!-\!1}^{t}\Vert ^{2}\!\!\!+\!\!(4L\eta^{2}-2\eta)\left(F(\bar{\boldsymbol{\omega}}_{i\!-\!1}^{t})\!\!-\!\!F^{*}\right)\!\notag\\&\!\!\!+\!(L\!\!+\!\!u)\eta{\sum}_{n\in\mathcal{V}}\!p_{n}\!\!\left\Vert \bar{\boldsymbol{\omega}}_{i\!-\!1}^{t}\!\!-\!\!\boldsymbol{\omega}_{n,i\!-\!1}^{t}\right\Vert ^{2}\!\!+\!\left(\!1\!\!-\!\!\frac{\mu\eta}{2}\right)\!\Delta\boldsymbol{\omega}_{i\!-\!1}^{t}\label{delta reb}\\
 &\!\!\!\!\!\!\leq (\!2\eta^{2}L^{2}\!\!+\!\!(L\!\!+\!\!u)\eta)\!\sum_{n\in\mathcal{V}}\!p_{n}\!\!\left\Vert \bar{\boldsymbol{\omega}}_{i\!-\!1}^{t}\!\!-\!\!\boldsymbol{\omega}_{n,i\!-\!1}^{t}\right\Vert ^{2}\!\!\!+\!\!\big(\!1\!\!-\!\!\frac{3\mu\eta}{2}\!\!+\!\!2L\mu\eta^2\big)
  \Delta\boldsymbol{\omega}_{i\!-\!1}^{t},\label{delta rec}
\end{align}%
\end{subequations}%
  \end{small}%
where \eqref{delta rec} is due to the $\mu$-strong convexity of $F_n$, as well as $F(\bar{\boldsymbol{\omega}}_{i\!-\!1}^{t})\!\!-\!\!F^{*}\geq \frac{\mu}{2}\left\Vert \bar{\boldsymbol{\omega}}_{i\!-\!1}^{t}\!\!-\!\!\boldsymbol{\omega}^{*}\right\Vert ^{2}$ and $4L\eta^{2}\!-\!2\eta<0$ under the assumption that $0<\eta<\frac{1}{2L}$. 
Eventually, substituting \eqref{var}  into \eqref{delta rec} yields \eqref{delta_ib}. This proof is completed.


	\bibliographystyle{IEEEtran}
	\bibliography{ciations}

\end{document}